\newcommand*\samethanks[1][\value{footnote}]{\footnotemark[#1]}
\g@addto@macro\bfseries{\boldmath}
\g@addto@macro\mdseries{\unboldmath}
\g@addto@macro\normalfont{\unboldmath}
\g@addto@macro\rmfamily{\unboldmath}
\g@addto@macro\upshape{\unboldmath}
\g@addto@macro\bfseries{\boldmath}
\def\thmhead@plain#1#2#3{%
  \thmname{#1}\thmnumber{\@ifnotempty{#1}{ }\@upn{#2}}%
  \thmnote{ {\the\thm@notefont\unboldmath(#3)}}}
\let\thmhead\thmhead@plain
\renewcommand*{\multicitedelim}{\addcomma\space}
    \newlength{\temp@x}%
    \newlength{\temp@y}%
    \newlength{\temp@w}%
    \newlength{\temp@h}%
    \def\my@coords#1#2#3#4{%
      \setlength{\temp@x}{#1}%
      \setlength{\temp@y}{#2}%
      \setlength{\temp@w}{#3}%
      \setlength{\temp@h}{#4}%
      \adjustlengths{}%
      \my@pdfliteral{\strip@pt\temp@x\space\strip@pt\temp@y\space\strip@pt\temp@w\space\strip@pt\temp@h\space re}}%
      \def\my@pdfliteral#1{\pdfliteral page{#1}}
      \def\adjustlengths{}%
      \def\my@pdfliteral #1{}
      \def\adjustlengths{\setlength{\temp@h}{-\temp@h}\addtolength{\temp@y}{1in}\addtolength{\temp@x}{-1in}}%
    \def\Hy@colorlink#1{%
      \begingroup
        \ifHy@ocgcolorlinks
          \def\Hy@ocgcolor{#1}%
          \my@pdfliteral{q}%
          \my@pdfliteral{7 Tr}
        \else
          \HyColor@UseColor#1%
        \fi
    }%
    \def\Hy@endcolorlink{%
      \ifHy@ocgcolorlinks%
        \my@pdfliteral{/OC/OCPrint BDC}%
        \my@coords{0pt}{0pt}{\pdfpagewidth}{\pdfpageheight}%
        \my@pdfliteral{F}
        %
        \my@pdfliteral{EMC/OC/OCView BDC}%
        \begingroup%
          \expandafter\HyColor@UseColor\Hy@ocgcolor%
          \my@coords{0pt}{0pt}{\pdfpagewidth}{\pdfpageheight}%
          \my@pdfliteral{F}
        \endgroup%
        \my@pdfliteral{EMC}%
        \my@pdfliteral{0 Tr}
        \my@pdfliteral{Q}%
      \fi
      \endgroup
    }%
\newcommand{\antonis}[1]{\todo[linecolor=orange!50!black,backgroundcolor=orange!25,bordercolor=orange!50!black]{\scriptsize \textbf{AS:} #1}}
\colorlet{DarkRed}{red!50!black}
\colorlet{DarkGreen}{green!50!black}
\colorlet{DarkBlue}{blue!50!black}
\declaretheorem[numberwithin=section]{theorem}
\declaretheorem[numberlike=theorem]{lemma}
\declaretheorem[numberlike=theorem]{Definition}
\declaretheorem[numberlike=theorem]{observation}
\newcommand{\tbrs}{$(2, \beta)$-ruling set\xspace}
\newcommand{\abrs}{$(\alpha, \beta)$-ruling set\xspace}
\newcommand{\kbabrs}{$k$-bounded $(\alpha, \beta)$-ruling set\xspace}
\newcommand{\kbtbrs}{$k$-bounded $(2, \beta)$-ruling set\xspace}
\newcommand{\ttrs}{$(2, 2)$-ruling set\xspace}
\newcommand{\kbttrs}{$k$-bounded $(2, 2)$-ruling set\xspace}
\newcommand{\tors}{$(2, 1)$-ruling set\xspace}
\newcommand{\kbtors}{$k$-bounded $(2, 1)$-ruling set\xspace}
\DeclareMathOperator*{\argmax}{argmax}
\title{Dynamic algorithms for $k$-center on graphs\thanks{Supported by the Austrian Science Fund (FWF): P 32863-N. This project has received funding from the European Research Council (ERC) under the European Union's Horizon 2020 research and innovation programme (grant agreement No 947702).}}
\author{
  Emilio Cruciani\thanks{Department of Computer Science, University of Salzburg, Austria.}
  \and 
  Sebastian Forster\samethanks[1]
  \and
  Gramoz Goranci\thanks{Faculty of Computer Science, University of Vienna, Vienna, Austria.}
  \and
  Yasamin Nazari\thanks{Department of Computer Science, VU Amsterdam. This research was partially conducted when the author was a postdoc at University of Salzburg.}
  \and 
  Antonis Skarlatos\samethanks[1]
}
\date{}
\begin{document}
\maketitle
\begin{abstract}
In this paper we give the first efficient algorithms for the $k$-center problem on dynamic graphs undergoing edge updates. In this problem, the goal is to partition the input into $k$ sets by choosing $k$ centers such that the maximum distance from any data point to its closest center is minimized. It is known that it is NP-hard to get a better than $2$ approximation for this problem.

While in many applications the input may naturally be modeled as a graph, all prior works on $k$-center problem in dynamic settings are on point sets in arbitrary metric spaces. 
In this paper, we give a deterministic decremental $(2+\epsilon)$-approximation algorithm and a randomized incremental $(4+\epsilon)$-approximation
algorithm, both with amortized update time $kn^{o(1)}$ for weighted graphs.  
Moreover, we show a reduction that leads to a fully dynamic $(2+\epsilon)$-approximation algorithm
for the $k$-center problem, with worst-case update time that
is within a factor $k$ of the state-of-the-art fully dynamic $(1+\epsilon)$-approximation single-source shortest
paths algorithm in graphs.
Matching this bound is a natural goalpost because the approximate distances of each vertex to its center can be used to maintain a $(2+\epsilon)$-approximation of the graph diameter and the fastest known algorithms for such a diameter approximation also rely on maintaining approximate single-source distances.
\end{abstract}


\section{Introduction} \label{sec:intro}
Clustering is a key concept in data analysis that involves organizing `similar' data into groups. One of the most fundamental and well-studied objectives is the \emph{$k$-center objective}. Specifically, given a metric space with $n$ points and a positive integer $k \leq n$, the goal of the \emph{$k$-center problem} is to select $k$ points, referred to as \emph{centers}, such that the maximum distance of any point in the metric space to its closest center is minimized. 
It is known that $k$-center is NP-hard to approximate within a factor of $(2-\epsilon)$ for any $\epsilon > 0$~\cite{HsuN79}.
Due to its popularity, $k$-center has been considered under several algorithmic frameworks, including approximation algorithms~\cite{HsuN79,Gonzalez85,HochbaumS86,plesnik1980computational,FederG88}, parameterized complexity~\cite{Feldmann15,BandyapadhyayFM22}, massive parallel computation (MPC) model~\cite{CeccarelloPP19,BeraDGK22,BergBM23}, and beyond worst-case analysis~\cite{BalcanHW20}, among others. This problem also serves as a testbed for developing fundamental algorithmic definitions and paradigms, which are then often applied to solving other variants of clustering objectives.

Clustering in the \emph{dynamic setting} has received increasing attention in recent years. 
This line of work was initiated by Charikar et al.~\cite{DBLP:conf/stoc/CharikarCFM97} who considered the problem of minimizing cluster diameters under the insertions of new points in an underlying metric space.
Under both point insertions and deletions, the $k$-center problem was considered by Chan et al.~\cite{chan2018fully}, who achieved a $(2+\epsilon)$-approximation in $O(k^2 \epsilon^{-1} \log{\Delta})$ amortized update time, where $\Delta$ is the aspect ratio of the metric space. 
Later on, the amortized update time was improved to $O(k \epsilon^{-1} \textrm{poly} \log (n, \Delta))$ by Bateni et al.~\cite{BateniEFHJMW23}. 
This is almost optimal considering that even in the static setting, any algorithm for $k$-clustering problems (including $k$-center, $k$-median, and $k$-means) on point sets in arbitrary metric spaces that achieves any non-trivial approximation, must make at least $\Omega(nk)$ distance queries, and in turn must take $\Omega(nk)$ time~\cite{BateniEFHJMW23}.
These results spurred several follow-up works that studied other point sets-based clustering objectives such as $k$-means~\cite{Cohen-AddadHPSS19, HenzingerK20}, $k$-median~\cite{GuoKLX21}, facility location~\cite{GoranciHL18, GuoKLX20, GoranciHLSS21, BhattacharyaLP22}, and sum-of-radii~\cite{HenzingerLM20} in the dynamic setting. 

An important case of $k$-center clustering is when the input metric is induced by a graph $G$ on $n$ vertices and $m$ edges. Naturally, any $k$-center algorithm that works with points in arbitrary metric spaces, can be applied on top of the graphical metric obtained by computing all-pairs shortest paths in $G$. However, the latter leads to slow running times, especially since it could make a sparse graph $G$ very dense. In the static setting, Thorup~\cite{Thorup04} gave a faster algorithm for the $k$-center problem in the graph setting, achieving a $(2+\epsilon)$-approximation in $\tilde{O}(m\epsilon^{-1})$ time, where $\tilde{O}$ hides polylogarithmic factors in $n$ and in the maximum edge-weight of the graph. This result was recently revisited by the work of Eppstein et al.~\cite{eppstein2015approximate} and even more recently by Abboud et al.~\cite{AbboudCLM23} who gave a refined and simpler algorithm for $k$-center on graphs.

We note that graph clustering has also received attention in the machine learning community, albeit for the closely related objective of $k$-means~\cite{RattiganMJ07}. 
They observe the computational challenges involving graphs (see also~\cite[Section~2.3]{Aggarwal2010}) and specifically the output sensitivity due to distance changes caused by edge updates. 

Motivated by these developments, we study the fundamental problem of \emph{dynamic $k$-center on graphs}. In comparison to the model with dynamic point sets in arbitrary metric spaces, we remark that the model with dynamic graphs is more challenging since (i) there is no guarantee of having oracle access to all-pairs shortest paths distances, and (ii) a single edge update may have a \emph{global effect} on the underlying graph metric, forcing a large number of vertex pairs to change their shortest path distance. This is also why we cannot use other black-box approaches such as a distance oracle of the metric completion of the graph.

To that end, we ask the natural question of to what extent one can leverage the structure of graphical $k$-center in the context of obtaining faster algorithms for dynamic $k$-center on graphs:

\begin{center}
    \emph{Are there efficient algorithms for $k$-center on graphs undergoing edge updates?}
\end{center}

\subsection{Our Contribution} 

In this paper, we answer the question in the affirmative. 
Our first contribution is a fully dynamic $k$-center algorithm that follows from prior work using a surprisingly simple trick.
\begin{restatable}{theorem}{fullydynamickcenter}\label{thm:fully-dynamic approx}
Given a weighted undirected graph $G=(V,E,w)$ subject to edge updates, an integer parameter $k \geq 1$, and a positive constant parameter $\epsilon\le 1/2$, there are two fully dynamic algorithms for the $k$-center problem on graphs, that maintain a $(2+\epsilon)$-approximation with the following guarantees (based on the current value of the matrix multiplication exponent):
\begin{enumerate}[topsep=0pt,itemsep=-1ex,partopsep=1ex,parsep=1ex]
    \item Deterministic algorithm with $O(kn^{1.529}\epsilon^{-2})$ worst-case update time, if $G$ has uniform weights; 
    \item Randomized algorithm, against an adaptive adversary, with $O(kn^{1.823} \epsilon^{-2})$ worst-case update time, if $G$ has general weights.
\end{enumerate}
Both algorithms have preprocessing time $O(n^{2.373} \epsilon^{-2} \log \epsilon^{-1})$.
\end{restatable}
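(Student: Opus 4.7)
The plan is to reduce dynamic $k$-center on graphs to $k$ instances of fully dynamic $(1+\epsilon)$-approximate SSSP, following the standard reduction from the static side where Gonzalez's farthest-first greedy (or equivalently Thorup's graph $k$-center algorithm) produces a $(2+\epsilon)$-approximation using $k$ SSSP computations rooted at the selected centers.

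For preprocessing, I would compute a $(1+\epsilon)$-approximate APSP via a matrix-multiplication-based algorithm (e.g.\ Zwick's $(1+\epsilon)$-APSP) in $O(n^\omega \epsilon^{-2}\log\epsilon^{-1})$ time, and run Gonzalez's greedy on the resulting approximate metric to extract an initial $(2+\epsilon)$-approximate center set $c_1,\dots,c_k$. To support dynamic updates, I would build an augmented graph $G'$ that adds $k$ virtual source vertices $v_1,\dots,v_k$, each connected via a $0$-weight edge to the current $i$-th center $c_i$, and maintain on $G'$ a collection of $k$ fully dynamic $(1+\epsilon)$-approximate SSSP data structures, one rooted at each $v_i$. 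For the deterministic unweighted case I would plug in a state-of-the-art deterministic SSSP with $O(n^{1.529}\epsilon^{-2})$ worst-case update time, and for the weighted case a randomized adaptive-adversary SSSP with $O(n^{1.823}\epsilon^{-2})$ worst-case update time. Every edge update in $G$ is forwarded to all $k$ data structures, paying $O(k\cdot T_{\mathrm{SSSP}})$.

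The main obstacle is that under adversarial edge updates the optimal $k$-center shifts, so the $i$-th ``center'' at a given time may differ from what the $i$-th data structure is currently rooted at. The key trick is that a center change corresponds to a \emph{single-edge swap} on the virtual source in $G'$: after each update we re-invoke the greedy reconstruction using the maintained approximate distances read off the $k$ SSSP data structures, and whenever the $i$-th elected center flips from $c_i$ to some $c_i'$ we delete $(v_i,c_i)$ and insert $(v_i,c_i')$ in the $i$-th data structure only, charged to its $O(T_{\mathrm{SSSP}})$ worst-case budget; summed over the $k$ slots this adds another $O(k\cdot T_{\mathrm{SSSP}})$ per update. The approximation then follows from the textbook analysis of Gonzalez's greedy: with exact distances the ratio is $2$, and the $(1+\epsilon)$ multiplicative error in each SSSP compounds to a $(2+O(\epsilon))$ ratio, which we rescale to $(2+\epsilon)$ by dividing the input $\epsilon$ by a suitable constant.
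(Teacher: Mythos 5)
Your proposal takes essentially the same approach as the paper: maintain the $k$-center solution by simulating Gonzalez's farthest-first greedy on the evolving graph using fully dynamic $(1+\epsilon)$-approximate SSSP, and appeal to the fact that an $\alpha$-approximate Gonzalez greedy yields a $2\alpha$-approximation. The difference is purely in the implementation of the simulation. The paper maintains a \emph{single} SSSP data structure $\mathcal{D}$ rooted at one fixed super-source $s$ that is initially disconnected; after each graph update, it rebuilds the greedy from scratch by inserting the edges $(s,c_1),\dots,(s,c_k)$ one at a time (querying the approximately farthest vertex in between) and then deleting them all, giving exactly $2k+1$ modifications and $k$ scans of an $n$-length distance array per update. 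You instead keep $k$ separate SSSP structures, one per center slot $i$, rooted at a virtual source $v_i$ tied to the current center $c_i$, and perform an edge swap in structure $i$ whenever that slot's center changes during the re-run. Both reach the same $O(k\cdot T_{\mathrm{SSSP}})$ worst-case update time and appeal to the same approximation analysis.

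Your $k$-structure variant does, however, have a few rough edges that the single-super-source formulation sidesteps. First, initializing $k$ independent SSSP structures would naively cost $O(kn^{\omega}\epsilon^{-2}\log\epsilon^{-1})$ preprocessing time rather than the claimed $O(n^{\omega}\epsilon^{-2}\log\epsilon^{-1})$; you would need the $k$ structures to share their preprocessing, which your write-up does not address (the static $(1+\epsilon)$-APSP you compute separately is redundant once the dynamic structures are in place, since the paper's preprocessing just runs the same \textsc{SimulateGonzalez} routine once). Second, during the greedy reconstruction you need $\min_{j<i}\delta_j(v)$ for every $v$ at step $i$; read naively this is $O(k^2 n)$ per update, which exceeds $O(k\,T_{\mathrm{SSSP}})=O(kn^{1.529})$ once $k\gg n^{0.529}$, so you must maintain a running-min array (updated in $O(n)$ per step after each swap), an optimization the paper gets for free from the single source. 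Third, you attach $v_i$ to $c_i$ via a \emph{zero-weight} edge, which breaks the uniform-weights assumption required by the deterministic $O(n^{1.529})$ SSSP of Theorem~\ref{thm:fd_distances}; the paper uses ordinary unit-weight edges from $s$ and reads distances offset by one, keeping the augmented graph unweighted. None of these issues is fatal, but the single-super-source trick in the paper avoids all three.
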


Note that our update time bounds match up to an $ \tilde O (k) $ factor, those of the state-of-the-art fully-dynamic single-source distance approximation algorithms with multiplicative error $(1+\epsilon)$~\cite{BrandN19,BFN22}.
Matching this bound is a natural goalpost for dynamic $k$-center algorithms maintaining the $(1 + \epsilon)$-approximate distance of each vertex to its closest center, because such distance approximations are sufficient to return a $(2 + \epsilon)$-approximation for graph diameter when $ k = 1 $ and the fastest known approach for this is to use a dynamic single-source distance approximation algorithm.
Our algorithms---and to the best of our knowledge all related dynamic $k$-center algorithms on general metrics---do have this desirable property of maintaining the $(1 + \epsilon)$-approximate distance of each vertex to its closest center.
Moreover, the previous result is a reduction to the problem of maintaining $k$-source approximate shortest paths in a fully dynamic setting; hence any improvement on the shortest paths algorithms directly improves the running time of our algorithms as well.

The above suggests that in order to achieve faster running times, we need to consider \emph{partially dynamic algorithms} for the $k$-center problem on graphs, where edge updates are restricted to only edge insertions or edge deletions. In particular, the insertions-only algorithms (also known as the \emph{incremental setting}) in the context of clustering are particularly well-motivated from a practical viewpoint. For example, real-world graphs such as co-authorship networks are incremental since the fact that two scientists co-authoring a research paper (almost) never changes over time. Our main result regarding the incremental setting is the following. 

\begin{restatable}{theorem}{incrkcentfappr}\label{th:incr_kcent_4appr}
    Given a weighted undirected graph $G = (V, E, w)$ subject to edge insertions, an integer parameter $k \geq 1$, and a positive constant parameter $\epsilon<1$, there is
    a randomized incremental $(4 + \epsilon)$-approximation algorithm for the $k$-center problem on graphs, which w.h.p.~is correct and w.h.p.~has $kn^{o(1)}$ amortized update time.
\end{restatable}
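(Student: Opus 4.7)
The plan is to run $ L=\Theta(\epsilon^{-1}\log(nW)) $ parallel greedy procedures, one per geometric threshold $ T_\ell=(1+\epsilon)^\ell $, and at query time return the solution of the smallest level $ \ell^{*} $ that uses at most $ k $ centers. At level $ \ell $ I would maintain a set $ N_\ell $ of active centers together with two net-style invariants: every vertex lies within approximate distance at most $ 2T_\ell $ of some center of $ N_\ell $ (\emph{covering}), and any two centers of $ N_\ell $ are at distance more than $ T_\ell $ (\emph{separation}). From each active center $ c\in N_\ell $ I would maintain a $(1+\epsilon)$-approximate incremental single-source shortest paths data structure truncated at depth $ O(T_\ell) $, using the known incremental approximate SSSP primitives that support edge insertions in $ n^{o(1)} $ amortized time per source.

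The approximation ratio follows from a standard net argument. By the covering invariant, the solution returned at $ \ell^{*} $ has radius at most $ 2T_{\ell^{*}} $, up to a $(1+\epsilon)$ factor from the approximate distance queries. At level $ \ell^{*}-1 $ there are more than $ k $ centers pairwise at distance greater than $ T_{\ell^{*}-1} $, so pigeonholing them into the $ k $ optimal clusters forces two into the same cluster, yielding $ \mathrm{OPT}>T_{\ell^{*}-1}/2=T_{\ell^{*}}/(2(1+\epsilon)) $. Combining these bounds gives radius at most $ 4(1+\epsilon)^{2}\,\mathrm{OPT} $, which becomes $(4+\epsilon)$-approximate after rescaling $ \epsilon $.

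The hard part is maintaining the two invariants under edge insertions while keeping the amortized cost low. Edge insertions only decrease distances, so the covering invariant is never violated, but two centers $ c_1,c_2 $ at level $ \ell $ can drift to distance at most $ T_\ell $. When this happens I would remove the more recently added one and, walking through its truncated SSSP tree, re-insert as new centers any vertex that now lies at approximate distance more than $ 2T_\ell $ from every remaining center; a triangle-inequality calculation verifies that both invariants are restored after one such pass, possibly iterated a constant number of times. The delicate point is amortizing the cost of these merges and re-insertions so that the per-update cost stays $ k\cdot n^{o(1)} $ at each level, and hence $ kn^{o(1)} $ after summing the $ L=n^{o(1)} $ levels. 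The key structural observation is that the packing number of the current graph at scale $ T_\ell $ is a non-increasing function of time under edge insertions, which via a potential argument bounds the total number of center creations at each level across the entire update sequence; SSSP initialization and neighborhood-scan costs can then be charged against these creations. The high-probability correctness and the amortized update-time bound are inherited from the underlying randomized incremental SSSP primitive.
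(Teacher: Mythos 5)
Your high-level plan (geometric thresholds, cover-and-separation invariants, SSSP from centers) is natural, but it runs into exactly the obstacle that the paper flags and circumvents by a different mechanism, and the step you call the ``key structural observation'' does not actually give the bound you need.

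The gap is in the recourse argument. You claim that because the packing number at scale $T_\ell$ is non-increasing under edge insertions, a potential argument bounds the total number of center creations at level $\ell$ over the whole update sequence. Monotonicity of the packing number only bounds the \emph{instantaneous} number of centers; it does not bound the number of times vertices enter and leave the center set. A vertex $v$ that is removed because a new center $u$ came within distance $T_\ell$ of it can later be re-promoted after $u$ itself is removed, and this cycle can repeat: nothing in the potential ``packing number'' goes down when a center is created, and nothing prevents repeated creation and destruction of the same vertex while the packing number stays constant. Since in your scheme each newly created center must launch a fresh truncated SSSP instance (costing up to $m^{1+o(1)}$ total), an unbounded recourse directly blows the $k n^{o(1)}$ amortized budget. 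The paper is explicit that the amortized-recourse bounds known for consistent clustering on point sets do not transfer to the graph setting, and that one needs the much stronger guarantee of $\tilde O(k)$ total recourse; your argument does not supply it. There is a secondary issue: you propose to maintain $N_\ell$ and SSSP from each of its members even at levels where $|N_\ell| > k$, but at such levels $|N_\ell|$ can be $\Theta(n)$, which is already too many SSSP instances regardless of recourse.

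The paper's route is genuinely different and is designed to decouple SSSP sources from the volatile center set. It first builds, per threshold graph $G_r$, a fixed dominating set $S$ of size $\tilde O(k)$ via a $O(\log n)$-depth recursive hitting-set sampling (the recursion terminates because the surviving set must halve each round or else there is already an independent set of size $k+1$). Once $S$ is fixed, SSSP instances are maintained from the vertices of $S$ (and only $\tilde O(k)$ of them, never restarted in the second phase), while the actual centers are maintained as a $k$-bounded MIS inside $G_r[S]$ by a fully dynamic MIS routine on a graph with only $\tilde O(k^2)$ edges. The MIS may churn arbitrarily without triggering any SSSP restart, which is precisely what your coupling of SSSP to the center set cannot afford. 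Losing one extra factor of $2$ (MIS on the dominating set gives a $(2,2)$-ruling set on $G_r$) is what yields the $(4+\epsilon)$ ratio, matching the exponent bookkeeping in your approximation calculation. If you want to salvage your approach you would need either a proof that your merge rule has $\tilde O(k)$ \emph{total} recourse per level (which the paper suggests is false in general), or a decoupling device like the dominating set.
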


To complete the picture of partially dynamic algorithms, we also study the $k$-center problem on graphs undergoing edge deletions only, known as the \emph{decremental} setting. Here, we obtain an algorithm that achieves a tight $(2+\epsilon)$ approximation ratio. 

\begin{restatable}{theorem}{deckcenttappr} \label{th:dec_kcent_2appr}
    Given a weighted undirected graph $G = (V, E, w)$ subject to edge deletions, an integer parameter $k \geq 1$,
    and a positive constant parameter $\epsilon < 1$, there is a deterministic decremental 
    $(2 + \epsilon)$-approximation  algorithm for the $k$-center problem on graphs, with
    $kn^{o(1)}$ amortized update time over a sequence of $\Theta(m)$ updates.
\end{restatable}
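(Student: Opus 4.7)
The plan is to combine the standard threshold-based $(2+\epsilon)$-approximation for static $k$-center with a deterministic decremental $(1+\epsilon)$-approximate single-source shortest paths (SSSP) data structure, exploiting the crucial monotonicity: under edge deletions all distances---and hence the optimum radius $R^{*}(t)$---are non-decreasing in $t$.

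I will instantiate, in parallel, one ``guess'' per geometric value $\hat R_i := (1+\epsilon)^{i} w_{\min}$ for $i = 0, 1, \ldots, L$ with $L = O(\epsilon^{-1}\log(n W_{\max}/w_{\min})) = n^{o(1)}$ for polynomially bounded weights (general weights are reduced to this case by a standard rescaling). For each guess $i$ I maintain a set $C_i$ of centers. At preprocessing I run the Gonzalez-style threshold greedy on the initial graph: repeatedly pick any unassigned vertex, add it to $C_i$, and mark every vertex whose $(1+\epsilon)$-approximate distance to it is at most $(2+\epsilon)\hat R_i$ as assigned to it. During the deletion sequence, I run a deterministic decremental $(1+\epsilon)$-approximate SSSP rooted at each center in $C_i$. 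Whenever a vertex $v$'s minimum approximate distance to $C_i$ first exceeds $(2+\epsilon)\hat R_i$, I promote $v$ to a new center, spawn a fresh SSSP source at $v$, and reassign vertices as appropriate. The output is the clustering associated with the smallest $\hat R_i$ with $|C_i| \leq k$.

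Correctness rests on two observations. First, by monotonicity of distances $|C_i|$ is non-decreasing, and by construction every vertex is within approximate distance $(2+\epsilon)\hat R_i$ of some center of $C_i$ at all times; so whenever $|C_i|\leq k$ the current assignment is a valid $k$-clustering of radius at most $(2+\epsilon)\hat R_i$. Second, a Gonzalez-style pigeonhole argument shows that whenever $\hat R_i \geq R^{*}(t)$ one has $|C_i(t)| \leq k$: otherwise two centers $c_a, c_b \in C_i$ would belong to a common optimal cluster at time $t$, giving $d_t(c_a,c_b) \leq 2R^{*}(t) \leq 2\hat R_i$; but the later-added of the two was promoted only when its $(1+\epsilon)$-approximate distance to the earlier-added one exceeded $(2+\epsilon)\hat R_i$, and since distances only grow this inequality persists to time $t$---contradiction, once the multiplicative slack is chosen large enough to absorb the $(1+\epsilon)$ distortion. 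Consequently the chosen guess satisfies $\hat R_i \leq (1+\epsilon) R^{*}(t)$, yielding output radius at most $(1+\epsilon)(2+\epsilon) R^{*}(t) = (2+O(\epsilon)) R^{*}(t)$, which becomes $(2+\epsilon)$ after rescaling.

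For the running time, once $|C_i|$ exceeds $k$ I mark guess $i$ invalid and stop maintaining its data structures; this is safe because $|C_i|$ never decreases. Therefore each guess supports at most $k+1$ decremental SSSP instances, each costing $m\cdot n^{o(1)}$ total update time using a state-of-the-art deterministic decremental $(1+\epsilon)$-approximate SSSP. Summing over the $L = n^{o(1)}$ guesses and amortizing over the $\Theta(m)$ deletions gives $L \cdot k \cdot n^{o(1)} = k \cdot n^{o(1)}$ amortized update time; maintaining, for each vertex, its running minimum approximate distance to $C_i$ is driven by the distance-change events emitted by the SSSP structures and contributes only a lower-order overhead. The main obstacles I expect are (i) engineering the promotion mechanism so that the reassignment of vertices following a newly added center is subsumed in the per-event work of the SSSP structures, and (ii) relying on a \emph{deterministic} decremental $(1+\epsilon)$-approximate SSSP with the claimed total update time while ensuring the triangle-inequality argument is robust to the $(1+\epsilon)$ distortion; both are resolved by the careful choice of the gap between the admission threshold $(2+\epsilon)\hat R_i$ and the value $2R^{*}$ used in the pigeonhole step.
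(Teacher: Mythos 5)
Your approach is essentially the same as the paper's: geometric guesses for the optimal radius, a decremental $(1+\epsilon)$-approximate SSSP structure per guess, monotone growth of the center set under edge deletions, and a packing argument to bound the number of centers by $k$ for guesses at or above the optimum. The only implementation difference is that the paper maintains a single SSSP from a super-source attached to the current center set $M$ (restarting it at most $k$ times per guess) whereas you keep one SSSP per center, and the paper routes the correctness through the $k$-bounded $(2,1)$-ruling set / threshold-graph abstraction (Lemma~\ref{lem:2b_Gr_H}) rather than arguing directly; neither changes the asymptotics.

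One arithmetic point needs fixing. With assignment threshold $(2+\epsilon)\hat R_i$, promotion only certifies a true pairwise distance strictly greater than $(2+\epsilon)\hat R_i/(1+\epsilon)$, which is \emph{less} than $2\hat R_i$ for $\epsilon>0$, so the intermediate claim that $\hat R_i \geq R^{*}(t)$ implies $|C_i(t)| \leq k$ does not follow as written; for that exact implication the admission threshold would have to exceed $2(1+\epsilon)\hat R_i$. Re-tracing the constants with your $(2+\epsilon)\hat R_i$ threshold gives instead that the chosen guess satisfies $\hat R_i \leq \tfrac{2(1+\epsilon)^2}{2+\epsilon}R^{*}$, hence output radius at most $2(1+\epsilon)^2 R^{*}$, which is still $(2+O(\epsilon))R^{*}$ after rescaling $\epsilon$. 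So your concluding hedge about ``choosing the multiplicative slack large enough'' is exactly where the slack is needed, and the argument is correct once those constants are adjusted.
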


We note that the $n^{o(1)}$ factors in the running time are also due to using partially dynamic approximate single-source shortest paths (SSSP) algorithms, which is inherent in our bounds based on similar reasoning as in the fully dynamic setting.

\paragraph{Outline.}
In Section~\ref{sec:overview}, we give an overview of our algorithm and also discuss the main challenges we face in dynamic graphs, as opposed to point sets.
In Section~\ref{sec:reduction}, we review a well-known reduction that relates the $k$-center problem to finding a maximal independent set on a graph. This reduction is fundamental to our partially dynamic algorithms.
Section~\ref{sec:incremental} presents our primary technical contribution, showcasing the incremental algorithm of Theorem~\ref{th:incr_kcent_4appr}.
Section~\ref{sec:decremental} completes the partially dynamic picture by providing the decremental algorithm of Theorem~\ref{th:dec_kcent_2appr}.
In Section~\ref{sec:fullydynamic}, we explore the fully-dynamic setting, in which we use a different type of algorithm to prove Theorem~\ref{thm:fully-dynamic approx}. Unlike the reduction presented in Section~\ref{sec:reduction}, our approach here is based on the greedy algorithm of Gonzalez~\cite{Gonzalez85}.
In addition to the set of centers, we can also answer other natural queries, such as the corresponding center for each vertex. We briefly discuss this in Appendix~\ref{apx:queries}.

\section{Technical Overview}\label{sec:overview}

In this section, we give a high-level overview of our algorithms and discuss several technical challenges that we need to handle for dynamically maintaining a $k$-center solution on graphs rather than on point sets.

\paragraph{Reduction to $k$-Bounded Maximal Independent Set on Threshold Graphs.} We start by reviewing a known reduction from $2$-approximate $k$-center to $k$-bounded maximal independent set (MIS)~\cite{HochbaumS86}. This reduction is also the basis of some of the fully dynamic $k$-center algorithms on point sets in arbitrary metric spaces~\cite{BateniEFHJMW23, chan2018fully}. The idea for obtaining a $2$-approximation is to guess the optimal value $R^*$ of the $k$-center instance via a binary search, and return
any maximal distance-$2R^*$ independent set $M$. It can be shown that $M$ must be of size at most $k$. Recall that $M$ is a maximal subset of vertices such that no two vertices $u,v \in M$ are within distance $2R^*$ of each other. The vertices of $M$ then correspond to the centers of the $k$-center instance. 

For utilizing this idea in the dynamic setting, rather than guessing the value of $R^*$, we maintain the MIS on each \textit{$r$-threshold graph} $G_r$, for every \emph{distance range} $r \leq (1+\epsilon)^{i}$, where $i \in [0, \log_{1+\epsilon} (nW)]$ and $W$ is the maximum edge-weight of the graph. 
Here, by $r$-threshold graph we mean a graph such that there is an edge between two vertices if and only if they are within distance $r$.
This general framework has been used in the dynamic $k$-center algorithms on point sets \cite{BateniEFHJMW23, chan2018fully}. Actually, they  maintain a relaxation of the MIS,
called a \emph{$k$-bounded MIS}. The observation is that it is sufficient to either return an MIS of size at most $k$ on each $r$-threshold graph $G_r$, or to simply report that there is an independent set of size at least $k+1$ as a witness that the distance range $r$ is not the correct guess.

\paragraph{Technical Challenges in Graphs vs Point Sets.}
As discussed before there are several important technical differences between the graph and point sets setting. The first difference is that, unlike the point sets setting, in graphs we do not have direct access to distances. Thus, we also need to maintain the appropriate distances simultaneously while the dynamic MIS is modified on the $r$-threshold graphs.
For this, we would like to combine a dynamic MIS maintenance algorithm with partially dynamic (approximate) shortest paths algorithms. At a high-level, our goal is to maintain a $k$-bounded MIS $M$ dynamically on each $r$-threshold graph $G_r$ and at the same time maintain a dynamic (approximate) SSSP algorithm from a super-source which is connected to all the vertices in the dynamic set $M$. Hence the efficiency of the algorithm will depend on the number of times the set $M$ is modified over all the updates. Equivalently, the efficiency will depend on the number of times the dynamic SSSP algorithm is restarted.

In both point sets and graph setting, we need to bound the \textit{recourse}, where by recourse we mean the number of times a new center is introduced by the algorithm. However, we argue that a \textit{stronger recourse guarantee} is needed for graphs. Recall that another important difference between these two settings is that in the point sets setting adding or removing points has a more local impact, whereas in a graph an update may impact the distances between many vertices. 
In other words, in the graph setting an edge update may \textit{distort} the metric itself. This difference in graphs together with the fact that the efficiency will depend on the number of times the dynamic set $M$ is modified/dynamic SSSP is restarted, requires our algorithm to have an overall recourse guarantee, as opposed to an amortized one that suffices for the point sets setting. 
Thus an amortized recourse of $\tilde{O}(k)$ \emph{per update} is not enough, and we need the stronger guarantee that the recourse is $\tilde{O}(k)$ \emph{over all updates}. 
More concretely, this total bound on recourse will let us argue that in total we need to re-initialize a dynamic $(1+\epsilon)$-SSSP algorithm $\tilde{O}(k)$ times from each center, and an amortized guarantee would not be enough for getting our desired update bound. We will see that maintaining this stronger recourse guarantee is more challenging in the incremental setting than in the decremental setting. 

Note that these types of recourse guarantees have been studied in point sets from arbitrary metric spaces under the name \emph{consistent} clustering~\cite{LattanziV17,FichtenbergerLNS21,LackiHGRJ23}.
However, we would like to emphasize that the known sublinear bounds on the total recourse in the point sets setting do not carry over to the graph setting.

\paragraph{Decremental $k$-Center on Graphs.} 
We can obtain a decremental $(2+\epsilon)$-approximation algorithm for the $k$-center problem on graphs, by maintaining a decremental $(1+\epsilon)$-SSSP algorithm from a
super-source which is connected to all the centers in each of the $O(\log nW)$ $r$-threshold graphs. Bounding the recourse in the decremental setting is relatively straightforward based on the following observation. Whenever a new center forms a cluster due to a distance increase in a given $r$-threshold graph, it stays disjoint of other clusters throughout the algorithm and thus it stays a valid center. Furthermore, as soon as we get more than $k$ centers, we move to the next distance range and so, the recourse is upper bounded by $k$ on each $r$-threshold graph and by $O(k \log nW)$ overall. Hence the $(1+\epsilon)$-SSSP algorithm
is restarted at most $O(k \log nW)$ times in total. This combined with the time needed for maintaining partially dynamic $(1+\epsilon)$-approximate SSSP leads to our desired $kn^{o(1)}$ amortized update time.

\subsection{Incremental $k$-Center on Graphs}
\paragraph{Incremental Low Recourse Ruling Sets.} 
Bounding the recourse in the incremental setting is more challenging compared to the decremental setting, for the following reason. After an edge insertion in the input graph, a center $c_1$ of a cluster
may come within distance $r$ of an existing center $c_2$ of another cluster. 
In turn, this means that the two vertices $c_1$ and $c_2$ become neighbors in the $r$-threshold graph $G_r$.
We cannot simply merge the corresponding clusters in some way and still maintain a $2$-approximation, as some vertices in such a merged cluster would go beyond 
the desired distance range after each update. Hence we need a new technical idea to keep the recourse low. The idea is to 
maintain a small (i.e., of size $\tilde{O}(k)$) \textit{dominating set} $S$ on $G_r$ such that, at a high-level, maintaining a maximal independent set on $G_r[S]$ will give us an \textit{approximate} maximal independent set on $G_r$. More formally, by maintaining a $k$-bounded MIS $M$ on the dominating set $S$ in $G_r$, we can show that $M$ is also a $k$-bounded $(2,2)$-ruling set\footnote{While our $k$-center algorithms work for weighted graphs, the ruling set subroutines always perform on unweighted graphs regardless on the input to the $k$-center problem.} on $G_r$. That is, a subset $M$ of vertices of size at most $k$ such that: (i) the distance in $G_r$ between any pair of vertices in $M$ is at least $2$, and (ii) for each vertex in $V$ there exists a vertex in $M$ within distance $2$ in $G_r$.
Introducing this small dominating set allows us to maintain a dynamic $k$-bounded maximal independent set $M$ on the smaller subgraph $G_r[S]$ more efficiently, at the cost of losing a factor $2$ in the approximation due to the fact that $M$ is only a $(2,2)$-ruling set on $G_r$. 
For maintaining such a dominating set, we use a recursive algorithm that maintains a union of hitting sets on a sequence of sparsified subgraphs of $G_r$. The hitting sets are obtained by a standard sampling procedure on the subgraphs corresponding to each recursive call. Informally, the sampling rate of the hitting sets is tuned depending on the densities of these subgraphs and the recursion continues until the remaining set of low degree vertices is sufficiently small. 
Since we have an incremental graph, the set of \textit{low degree vertices defined based on a specific degree threshold} that are not covered by the hitting sets shrinks over time. This together with observations regarding the sampling, the degrees, and a property of the $k$-center problem allows us to bound the recursion depth by $O(\log n)$.

\paragraph{Challenges of Working with the $r$-Threshold Graphs.} 
The high-level idea described above will give us an algorithm for maintaining a $k$-bounded $(2,2)$-ruling set on an incremental graph in $\tilde{O}(k)$ amortized update time with an overall recourse of $\tilde{O}(k)$ \antonis{modify it}. Similar to the decremental algorithm, our goal is to maintain such a ruling set on all $r$-threshold graphs. This, combined with an incremental $(1+\epsilon)$-SSSP algorithm, will lead to an incremental $(4+\epsilon)$-approximation algorithm for the $k$-center problem on graphs. The main remaining challenge is that an edge insertion into the input graph $G$ could lead to many edge insertions in the $r$-threshold graph. In turn, the density of the $r$-threshold graphs could be $n^2 \geq m$. To overcome this challenge, we do not explicitly store the $r$-threshold graphs.
Instead, we utilize the construction of the dominating set on each $r$-threshold graph $G_r$ together with the fact
that the dominating set for each $G_r$ is of small size, to ensure that only \textit{relevant} edges
are processed. That is, edges that either participate in the construction of the dominating set or those that cause a conflict.
Overall, by bounding the number of candidate centers and taking advantage of the construction of the dominating sets, 
we ensure that $\tilde{O}(k)$ incremental $(1+\epsilon)$-SSSP algorithms are re-initialized in total, 
and the \textit{relevant} information is maintained. In turn, this leads to an amortized update time of $kn^{o(1)}$.

\section{Preliminaries}\label{sec:preliminaries}
    \paragraph{Graphs.}
    Consider a weighted undirected graph $G = (V, E, w)$. We denote by $n = |V|$ the number of vertices, by $m = |E|$ the number
    of edges, and by $W$ the maximum weight of an edge. 
    Without loss of generality (w.l.o.g.), we assume that the minimum edge weight is equal to 1.
    Moreover, we assume that $W$ is bounded by a polynomial in $n$ (i.e., $W=O(\text{poly}(n))$).
    
    For any two vertices $u, v \in V$, the \emph{distance $d_G(u, v)$ between $u$ and $v$} is the 
    length of a shortest path from $u$ to $v$ in $G$. For a fixed subset of vertices $S \subseteq V$ and a vertex $v \in V$, the \emph{distance $d_G(v, S)$ 
    between $v$ and $S$} is equal to $\min_{u \in S} d_G(v, u)$, namely the distance from $v$ to its closest vertex in $S$. 
    For a vertex $v \in V$, we denote by $N_G(v)$ the set of neighbors of $v$ in $G$, and by $N_G[v] := N_G(v) \cup \{v\}$ the \emph{closed neighborhood} of $v$ in $G$.
    A \emph{subgraph} $H$ of a graph $G$ is a graph whose vertex set and edge set are subsets
    of the vertex set and edge set of $G$ respectively. 
    An \textit{edge-subgraph} of $G$ is a graph whose vertex set is the same as the vertex set of $G$ and whose edge set is a subset of the edge set of $G$.
    For a subset of vertices $S \subseteq V$,
    the \emph{induced subgraph $G[S]$} is the graph with vertex set $S$, whose edge set consists
    of all edges in $E$ that have both endpoints in $S$. We also say that $G[S]$ is the
    subgraph induced by $S$. For a graph $H$, we denote by $V(H)$ the vertex set of $H$, and by $E(H)$ the edge set of $H$.

    Consider now an unweighted undirected graph $G = (V, E)$. 
    A \emph{distance-$\alpha$ independent set} $M$ is a subset of vertices such that the distance between any two vertices in $M$
    is strictly more than $\alpha$. An \emph{independent set (IS)} is a distance-$1$ independent set.
    An \emph{\abrs} is a subset of vertices $M \subseteq V$ such that the distance between any two vertices in $M$ is at least $\alpha$,
    and the distance between any vertex in $V$ and its closest vertex in $M$ is at most $\beta$.
    A \emph{maximal independent set (MIS)} is a $(2, 1)$-ruling set.

    \paragraph{Dynamic Setting.}
    In the dynamic setting, the input graph $G$ is subject to edge updates. Namely, edges can be inserted into $G$ (edge insertions) and/or edges can be removed from $G$ (edge deletions).
    A \emph{fully dynamic algorithm} is able to process both types of edge updates (i.e., edge insertions and edge deletions), while
    a \emph{partially dynamic algorithm} is able to process only one type of edge updates
    (i.e., either edge insertions or edge deletions). 
    In particular, an \emph{incremental algorithm} can process only edge insertions and a \emph{decremental algorithm} can process only edge deletions.

    In our incremental algorithms, we assume that the updates are performed by an \emph{oblivious adversary} who fixes the sequence of updates before the algorithm starts. Namely,
    the adversary cannot adapt the updates based on the choices of the algorithm during the execution.
    This is as opposed to an \textit{adaptive adversary}, that instead we consider in the decremental and fully-dynamic settings. A dynamic algorithm has \emph{amortized update time} $u(n, m)$ if its total time spent for processing any sequence of $\ell$ updates is bounded by $\ell \cdot u(n, m)$.
    
    In the incremental setting, let $M$ be an independent set in $G$. 
    Then for an edge insertion $(u, v)$ in $G$, we say that the edge $(u, v)$ causes a \emph{conflict} in $G$ when both of its endpoints $u$ and $v$ belong to $M$ before the update.
     

\subsection{$k$-Center on Graphs}  
The $k$-center problem on graphs is defined formally as follows.
\begin{Definition}[$k$-center on graph]
    Given a weighted undirected graph $G = (V, E, w)$ and an integer $k \geq 1$, the goal is to output a subset of vertices $S \subseteq V$ of size at most $k$,
    such that the value $\max_{v \in V} d_G(v, S)$ is minimized.
\end{Definition}

Consider a \emph{$k$-center instance $\left(G = (V, E, w), k\right)$}, which is the pair of the given input graph $G$ and the integer $k$.
For each choice of $S \subseteq V$, we define the radius $r := \max_{v \in V} d_G(v, S)$. 
The vertices of $S$ are also called \emph{centers}.
For a fixed $S$ with radius $r$, we define a \emph{cluster} for every $c \in S$ containing all vertices within distance $r$ from the center $c$.
We denote by $R^* := \min_{|S| \leq k} \max_{v \in V} d_G(v, S)$ the optimal radius of the given instance,
and by $S^*$ any subset with radius $R^*$ (i.e., $R^* = \max_{v \in V} d_G(v, S^*)$). For completeness we also discuss how we may be interested in answering other type of queries in Appendix \ref{apx:queries}. In the dynamic setting, the input graph of the $k$-center instance is subject to edge updates.

\subsection{Partially Dynamic Shortest Paths Algorithms}
Through the paper, we heavily make use of the existing partially dynamic $(1+\epsilon)$-approximate single-source shortest paths (SSSP) algorithms. In the decremental setting, we can use a deterministic algorithm.
\begin{theorem}[Decremental $(1+\epsilon)$-SSSP, \cite{bernstein2021deterministic}]\label{th:decr_appr_sssp}
    Given a weighted undirected graph $G = (V, E, w)$ subject to edge deletions, a source $s \in V$, and a constant $\epsilon\in(0,1)$, there is a deterministic algorithm that maintains $(1 + \epsilon)$-approximate shortest paths from $s$
    in total update time $m^{1+o(1)}$.
\end{theorem}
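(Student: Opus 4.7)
The plan is to combine a standard distance-scale reduction with a deterministic expander-based hop-reducing data structure. First I would split the relevant distance range $[1, nW]$ into $O(\log(nW)/\epsilon) = n^{o(1)}$ geometric scales, and for the scale covering distances in $[(1+\epsilon)^i, (1+\epsilon)^{i+1}]$ snap the edge weights up to multiples of $\epsilon (1+\epsilon)^i / \log n$. This loses only a $(1+\epsilon)$ factor in the final distances but caps the number of edges on any relevant shortest path by a hop bound $h = n^{o(1)}$. The whole problem therefore reduces, scale by scale, to maintaining $(1+\epsilon)$-approximate $h$-hop SSSP from $s$ on an integer-weighted graph undergoing edge deletions; running the resulting data structures in parallel and returning, for each vertex, the smallest scale at which a finite estimate appears gives the final SSSP tree.

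For decremental $h$-hop SSSP I would build a deterministic expander hierarchy. Compute a partition $V = V_1 \sqcup \dots \sqcup V_q$ such that each $G[V_i]$ is a $\phi$-expander with $\phi = 1/n^{o(1)}$ and the total number of inter-cluster edges is at most $\phi m \cdot n^{o(1)}$. Inside a $\phi$-expander any two vertices are at hop-distance $O(\log n / \phi) = n^{o(1)}$, so a small internal ``core''---for example a constant number of low-diameter spanning trees rooted at carefully chosen hubs---suffices to deliver $(1+\epsilon)$-approximate intra-cluster short-hop distances. Stitching these cores together through the inter-cluster edges yields a hop-reducing emulator $H$ with $n \cdot n^{o(1)}$ edges and hop-diameter $n^{o(1)}$ whose distances $(1+\epsilon)$-approximate those of $G$, and a single Dijkstra execution on $H$ then produces the desired approximate SSSP tree.

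The central technical step is the decremental maintenance of this hierarchy. Whenever deletions inside some $G[V_i]$ threaten its expansion, I would invoke deterministic expander pruning to peel off a small ``bad'' set $P_i \subseteq V_i$ whose removal restores $\phi$-expansion on $V_i \setminus P_i$, then rebuild the internal core there and feed $P_i$ together with its boundary edges into a recursive instance on a strictly smaller subgraph. Because the total volume ever pruned from a cluster can be charged against the edge weight deleted from that cluster, a standard potential argument caps the total pruning and rebuilding work across $n^{o(1)}$ levels of recursion by $m^{1+o(1)}$; together with the $n^{o(1)}$ scales this gives the claimed overall bound, while the composition of the per-scale $(1+\epsilon)$ distortion and the $(1+\epsilon)$ distortion coming from the core structure yields, after adjusting $\epsilon$, a final $(1+\epsilon)$ approximation.

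The main obstacle is the deterministic expander-decomposition and expander-pruning primitive. Earlier near-linear algorithms for these tasks rely on randomized cut-matching games and random walks, so the technical centerpiece of the proof is a deterministic $n^{o(1)}$-approximate balanced sparse-cut subroutine (in the style of Chuzhoy, Gao, Li, Nanongkai, Peng, and Saranurak) that is fast enough to be re-invoked inside the pruning procedure without blowing up the update time, and for which the pruned vertex set is guaranteed to have volume proportional to the deleted edge weight. Once that primitive is in hand, the remaining pieces---scale reduction, emulator stitching across clusters, Dijkstra on $H$, and the potential-function analysis---are comparatively mechanical and deliver the stated deterministic $m^{1+o(1)}$ total update time for $(1+\epsilon)$-approximate decremental SSSP.
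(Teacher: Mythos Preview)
The paper does not prove this statement at all: Theorem~\ref{th:decr_appr_sssp} is quoted verbatim from~\cite{bernstein2021deterministic} and used as a black box throughout (see Section~\ref{sec:preliminaries} and its use in Section~\ref{sec:decremental}). There is therefore no ``paper's own proof'' to compare your proposal against.

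That said, your sketch is a reasonable high-level caricature of the deterministic expander-hierarchy approach underlying the cited result: scale reduction to bounded-hop SSSP, a deterministic expander decomposition with pruning under deletions, an emulator/hopset built from intra-cluster cores stitched across inter-cluster edges, and a potential argument bounding total rebuild work by $m^{1+o(1)}$. If your intent was to explain where the cited theorem comes from, this is in the right spirit, though several steps you describe as ``mechanical'' (in particular the interaction between pruning, the recursive hierarchy, and maintaining the emulator's hop bound across levels) are where the actual technical work of~\cite{bernstein2021deterministic} lies, and your outline would not by itself constitute a proof. For the purposes of the present paper, however, no proof is expected here; you should simply cite the result as the authors do.
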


\noindent
In the incremental setting, we can use the following randomized partially dynamic algorithm.
\begin{theorem}[Incremental $(1+\epsilon)$-SSSP, \cite{HKN2014hopsets, chechik2018, LackiN22}] \label{th:incr_appr_sssp}
    Given a weighted undirected graph $G = (V, E, w)$ subject to edge insertions, a source $s \in V$, and a constant $\epsilon\in(0,1)$, there is a randomized algorithm (against an oblivious adversary) that maintains $(1 + \epsilon)$-approximate shortest paths from $s$ in total update time $m^{1+o(1)}$.
\end{theorem}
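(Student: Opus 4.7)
The plan is to combine a dynamically maintained $(1+\epsilon)$-approximate hopset of small hopbound with a hop-limited shortest-path computation from $s$. Recall that a $(\beta,\epsilon)$-hopset of $G$ is a set $H$ of weighted edges such that, in $G \cup H$, every pair of vertices admits a $(1+\epsilon)$-approximate shortest path using at most $\beta$ edges. The Henzinger--Krinninger--Nanongkai construction yields a randomized $(n^{o(1)}, \epsilon)$-hopset of size $n^{1+o(1)}$ via multi-scale hitting sets obtained by independent vertex sampling.

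First, I would incrementally maintain the hopset $H$ itself. At each of $O(\log n)$ scales $i$, we sample a set $A_i \subseteq V$ whose density decreases geometrically in $i$, and we define hopset edges from approximate distances within a bounded radius around each sample. Under edge insertions, these bounded-radius SSSP instances can be maintained efficiently because distances only decrease; an amortized argument, charging work to the monotone distance changes, bounds the total hopset-maintenance cost by $m^{1+o(1)}$.

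Second, given $G \cup H$ with effective diameter $\beta = n^{o(1)}$, I would maintain $(1+\epsilon)$-approximate distances from $s$ using a hop-limited Bellman-Ford inside an Even--Shiloach-style level structure on $G \cup H$. Distance estimates are rounded down in powers of $(1+\epsilon)$, so each vertex undergoes at most $\tilde{O}(1/\epsilon)$ estimate decreases over the entire update sequence. Each such decrease triggers localized relaxations over at most $\beta$ hops on a small frontier, and summing over all vertices and all insertions yields total time $m^{1+o(1)}$.

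The main obstacle is composing the two pieces coherently: inserting an edge in $G$ can cascade into new hopset edges, which in turn trigger further distance improvements from $s$, and the amortization must absorb all of this cascading work simultaneously. Randomness in the sampled sets $A_i$ is essential for the hopbound guarantee, which is precisely why the theorem is stated against an oblivious adversary: an adaptive adversary could steer insertions towards vertices outside the sampled sets and destroy the hopset's guarantees. The subsequent refinements of Chechik and of Lacki--Nazari sharpen the sampling scheme and its dynamic maintenance to realize the stated $m^{1+o(1)}$ total update time.
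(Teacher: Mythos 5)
Your high-level plan is the same as the paper's sketch in Appendix~B: maintain an approximate hopset (or low-hop emulator) of hopbound $n^{o(1)}$ and run a hop-limited Even--Shiloach (ES) tree from the source over $G$ augmented with the hopset edges. Your remarks that distance estimates only decrease under insertions and that each vertex sees $\tilde{O}(1/\epsilon)$ estimate changes are the right intuitions for why the ES-tree piece should work.

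Where your write-up falls short of a proof is exactly the paragraph you label ``the main obstacle.'' You correctly identify that maintaining the hopset coherently with the ES tree is the crux, but then you leave it unresolved (``the amortization must absorb all of this cascading work simultaneously'' and a pointer to later papers are not arguments). The paper's sketch resolves this concretely, and the resolution is worth internalizing because it is the reason the incremental direction is actually \emph{simpler} than the decremental one, not a harder variant of it. First, the monotone ES tree — the machinery that absorbs most of the technical effort in the decremental papers you cite — is needed there precisely because hopset edges get \emph{inserted} into a data structure that otherwise only sees deletions; in the incremental setting insertions are the native operation, so a plain $h$-hop-limited ES tree suffices and the whole monotone-ES-tree layer can be dropped. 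Second, the only remaining mismatch is that incremental hopset maintenance occasionally wants to \emph{decrease} the weight of an auxiliary edge or \emph{delete} one. The paper's fix is to never modify or remove anything: whenever a hopset edge's weight should drop, simply insert a new parallel auxiliary edge with the smaller weight and keep the old one. Since an auxiliary edge's weight tracks a path length in $G$ that only shrinks, and it is replaced only when it has shrunk by a $(1-\epsilon')$ factor, each logical hopset edge spawns only $O(\log(nW))$ parallel copies over the whole sequence. This keeps both the hopset and the ES tree purely incremental and bounds the total size blowup by a polylogarithmic factor, which is what delivers the claimed $m^{1+o(1)}$ total update time. Without this (or an equivalent) argument, your proposal does not actually close the gap it identifies.
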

The incremental $(1+\epsilon)$-SSSP algorithm is not explicitly stated but follows from similar algorithms as the decremental setting such as \cite{HKN2014hopsets, chechik2018, LackiN22}. In Appendix~\ref{app:incremental}, we give a brief sketch of how one can adapt these results to the incremental setting, but the details of this algorithm are beyond the scope of this paper.

\section{Reduction from $2$-Approximate $k$-Center to $k$-Bounded Ruling Set} \label{sec:reduction}
It is well-know that the $k$-center problem can be reduced to the problem of finding an MIS on a graph. This reduction was first given by Hochbaum and Schmoys~\cite{HochbaumS86}, 
in order to get a $2$-approximation algorithm, and also it has been used by \cite{chan2018fully, BateniEFHJMW23} 
for the fully dynamic $k$-center problem on point sets in arbitrary metric spaces.
In particular, it is sufficient to solve a weaker version of the MIS problem,
where we only need to return an MIS of size at most $k$, or report that there is an independent set of size at least $k + 1$.
Formally, we define this problem based on an $(\alpha, \beta)$-ruling-set, as follows.
A similar definition was also given in \cite{BateniEFHJMW23} for the MIS. Recall that an MIS is a $(2, 1)$-ruling set. 

\begin{Definition}[$k$-bounded $(\alpha, \beta)$-ruling set problem]\label{def:kbabrs}
    Given an unweighted undirected graph $G = (V, E)$, an integer $k \geq 1$, and parameters $\alpha, \beta$ such that $\beta \geq \alpha - 1 \geq 0$, the \kbabrs problem asks to either return an \abrs of size at most $k$, or to report that there is a distance-$(\alpha-1)$ independent set of size at least $k + 1$.
\end{Definition}

\noindent
The reduction solves the \kbabrs problem on the following type of graphs.

\begin{Definition}[$r$-threshold graph]\label{def:threshold_graph}
Given a weighted graph $G = (V, E, w)$ and a parameter $r > 0$, 
the \emph{$r$-threshold graph} $G_r = (V, E_r)$ is defined as the graph with vertex set $V$ and edge set $E_r = \{(u,v) \in V \times V : d_G(u,v) \leq r \}$.  
\end{Definition}
\noindent
In other words, the $r$-threshold graph $G_r$ connects all pairs of vertices that are within distance $r$ in $G$. Observe that the $r$-threshold graph $G_r$ is unweighted.

The next lemma is an adjustment of the reduction of Hochbaum and Schmoys~\cite{HochbaumS86} to Definition~\ref{def:kbabrs}.
The proof is deferred to Appendix~\ref{apx:reduction proofs}.

\begin{restatable}{lemma}{tbapproxGr}\label{lem:2b-approx_Gr}
    Consider a $k$-center instance $\left(G = (V, E, w), k\right)$, and a positive constant parameter $\epsilon$. 
    Then by running a \kbtbrs algorithm on the $r$-threshold graph $G_r$, for each $r \in \{(1 + \epsilon)^i \mid (1 + \epsilon)^i \leq nW, i \in \mathbb{N}\}$,
    we can find a $2\beta(1 + \epsilon)$-approximate solution for the $k$-center instance.
\end{restatable}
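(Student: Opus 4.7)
The plan is to adapt the classical Hochbaum--Shmoys reduction to the relaxed ruling-set notion and the geometric discretization of radii. Specifically, I would run the \kbtbrs subroutine on $G_r$ for every $r$ in the prescribed geometric sequence, and extract the smallest $r$, call it $\hat r$, on which the subroutine returns an actual $(2,\beta)$-ruling set $M$ (rather than a witness of a distance-$1$ independent set of size $k+1$). This $M$ will be the proposed set of centers; by Definition~\ref{def:kbabrs}, $|M| \le k$, so the output has the correct cardinality.

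The crux is to bound $\hat r$ in terms of the optimum radius $R^\ast$. Let $r^\ast$ be the smallest power of $(1+\epsilon)$ with $r^\ast \ge 2R^\ast$, so that $r^\ast \le 2(1+\epsilon) R^\ast$. The key observation is that if $u, v$ lie in the same optimal cluster with center $c$, then by the triangle inequality $d_G(u,v) \le d_G(u,c) + d_G(c,v) \le 2R^\ast \le r^\ast$, so $(u,v) \in E_{r^\ast}$. Hence any independent set in $G_{r^\ast}$ meets each of the at most $k$ optimal clusters at most once and has size at most $k$. Consequently the subroutine cannot certify a distance-$1$ independent set of size $k+1$ on $G_{r^\ast}$ and must return a valid ruling set there, yielding $\hat r \le r^\ast \le 2(1+\epsilon) R^\ast$.

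It remains to translate the covering property of $M$ back into $G$-distances: since $M$ is a $(2,\beta)$-ruling set in $G_{\hat r}$, every $v \in V$ admits a $G_{\hat r}$-path of length at most $\beta$ to some $m \in M$, and each hop in $G_{\hat r}$ corresponds to a pair of vertices at $G$-distance at most $\hat r$; concatenating along the path gives $d_G(v,M) \le \beta \hat r \le 2\beta(1+\epsilon) R^\ast$, which delivers the claimed $2\beta(1+\epsilon)$-approximation.

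The main obstacle I anticipate is purely boundary handling. First, one has to check that the range $(1+\epsilon)^i \le nW$ is wide enough to actually contain $r^\ast$, which follows from $R^\ast$ being bounded by the diameter, at worst costing one harmless extra factor of $(1+\epsilon)$ in the sequence. Second, the degenerate case $R^\ast = 0$ forces $|V| \le k$, in which case returning $V$ itself trivially achieves radius $0$. All remaining steps are routine uses of the threshold-graph and ruling-set definitions.
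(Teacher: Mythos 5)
Your proposal is correct and follows the same high-level structure as the paper's proof: take $\hat r$ (the paper's $r'$) to be the smallest discretized radius at which the subroutine returns a ruling set rather than a witness, bound $\hat r \le 2(1+\epsilon)R^\ast$ by arguing the subroutine cannot fail at any $r \ge 2R^\ast$, and then translate the $\beta$-domination in $G_{\hat r}$ back to $G$-distances via the triangle inequality. The one place you diverge is in the supporting argument: you justify that $G_{r^\ast}$ has no independent set of size $k+1$ by a direct pigeonhole argument (two vertices in the same optimal cluster are within $G$-distance $2R^\ast$ and hence adjacent in $G_{r^\ast}$, so any IS meets each optimal cluster at most once), whereas the paper routes this through Observation~\ref{obs:noIStRGr}, which in turn is deduced from Lemma~\ref{lem:k-MIS_r>=2R^*} --- a somewhat heavier induction that orders the ruling set and assigns its $i$-th member to the $i$-th optimal cluster. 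Your pigeonhole version is the cleaner and more elementary route to the same conclusion, and it suffices because what Lemma~\ref{lem:2b-approx_Gr} needs is only the statement about independent sets, not the stronger statement about arbitrary $(2,\beta)$-ruling sets. The additional remarks about the range $(1+\epsilon)^i \le nW$ containing $r^\ast$ and the degenerate case $R^\ast = 0$ are sound boundary checks that the paper leaves implicit.
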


For the sake of efficiency, in the dynamic setting we do not handle $r$-threshold graphs,
but rather an approximation of them. For this reason, we generalize the previous lemma as follows.
\begin{lemma}\label{lem:2b_Gr_H}
    Consider a $k$-center instance $\left(G = (V, E, w), k\right)$, constant positive parameters $\epsilon,\epsilon'$, $r \geq 0$, and $\beta\ge 1$. 
    Let $r' := (1+\epsilon') r$ and consider the threshold graphs $G_r$ and $G_{r'}$.
    Assume that there is an algorithm $\mathcal{A}$ such that, given $G, r, \epsilon',\beta$, 
    \begin{itemize}[topsep=0pt,itemsep=-1ex,partopsep=1ex,parsep=1ex]
        \item either reports that there is an independent set in $G_r$ of size at least $k + 1$, 
        \item or runs a \kbtbrs algorithm $\mathcal{B}$ on an edge-subgraph $H$ of $G_{r'}$ with the following condition: whenever $\mathcal{B}$ reports that there is an independent set 
        in $H$ of size at least $k + 1$, then there is an independent set in $G_r$ of size at 
        least $k + 1$.
    \end{itemize}
    Then, by running $\mathcal{A}$ with input $G, r, \epsilon',\beta$, for each $r \in \{(1 + \epsilon)^i \mid (1 + \epsilon)^i \leq nW, i \in \mathbb{N}\}$, we can find a $2\beta(1 + \epsilon)(1 + \epsilon')$-approximate solution for the $k$-center instance.
\end{lemma}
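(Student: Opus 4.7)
The plan is to mimic the geometric-search argument of Lemma~\ref{lem:2b-approx_Gr}, now routed through the two-level indirection between $\mathcal{A}$ (on $G_r$) and $\mathcal{B}$ (on the edge-subgraph $H$ of $G_{r'}$). I would run $\mathcal{A}$ for every $r\in\{(1+\epsilon)^i:(1+\epsilon)^i\le nW,\,i\in\mathbb{N}\}$ and let $r^*$ be the smallest element of that sequence for which $\mathcal{A}$ successfully returns, via $\mathcal{B}$, a $(2,\beta)$-ruling set $M$ of size at most $k$ on the corresponding $H$ (as opposed to any ``IS of size at least $k+1$'' verdict produced either by $\mathcal{A}$ directly or by $\mathcal{B}$ on $H$).

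The first step is to bound $r^*$ from above. The key fact is that whenever $r\ge 2R^*$, the graph $G_r$ cannot contain an independent set of size $k+1$: in any such set, two vertices must by pigeonhole share an optimal center, forcing their distance in $G$ to be at most $2R^*\le r$ and hence making them adjacent in $G_r$. By hypothesis, a verdict ``IS of size $\ge k+1$'' from $\mathcal{A}$ or from $\mathcal{B}$ always implies an actual IS of that size in $G_r$; therefore no such verdict can arise once $r\ge 2R^*$. This forces $r^*$ to be at most the first element of the geometric sequence exceeding $2R^*$, i.e., $r^*\le 2(1+\epsilon)R^*$.

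The second step is to turn the ruling-set guarantee on $H$ into a radius bound in $G$. Since $M$ is a $(2,\beta)$-ruling set on $H$, every $v\in V$ has some $u\in M$ with $d_H(v,u)\le\beta$. Each edge of $H$ lies in $G_{r'}$ with $r'=(1+\epsilon')r^*$, so corresponds to a pair of vertices at distance at most $r'$ in $G$. Chaining the at most $\beta$ edges along a shortest $v$-to-$u$ path in $H$ yields $d_G(v,u)\le\beta r'=\beta(1+\epsilon')r^*$. Combining the two bounds, the radius of $M$ in $G$ is at most $\beta(1+\epsilon')\cdot 2(1+\epsilon)R^*=2\beta(1+\epsilon)(1+\epsilon')R^*$, which is the target approximation factor.

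The only conceptual obstacle, compared to Lemma~\ref{lem:2b-approx_Gr}, is threading the ``large IS'' verdict through the two layers ($\mathcal{A}$ on $G_r$ and $\mathcal{B}$ on $H$); this is precisely what the hypothesis on $\mathcal{A}$ and $\mathcal{B}$ guarantees, so the argument reduces to the one-level case once the conversion is in place. Standard edge cases (the smallest value of the sequence already succeeding, or $R^*$ smaller than $1$) are handled in the usual way using the normalization that the minimum edge weight is $1$.
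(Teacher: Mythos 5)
Your proof is correct and follows essentially the same approach as the paper's: take the smallest $r$ in the geometric grid for which $\mathcal{A}$ succeeds, bound it by $2(1+\epsilon)R^*$ via the no-large-IS-once-$r\ge 2R^*$ fact, and convert the $(2,\beta)$-ruling-set guarantee on $H\subseteq G_{r'}$ into a radius bound of $\beta r'$ in $G$. The only minor deviation is that you prove the ``no independent set of size $k+1$ when $r\ge 2R^*$'' claim directly by pigeonhole, whereas the paper invokes its Observation~\ref{obs:noIStRGr} (itself a consequence of Lemma~\ref{lem:k-MIS_r>=2R^*}); your inline argument is a clean shortcut that yields exactly the same bound.
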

\noindent
As already stated, the previous lemma is a generalization of Lemma~\ref{lem:2b-approx_Gr}. 
In fact, observe that in the definition of a \kbtbrs problem, we are allowed to report that there is an independent set of size at least $k + 1$. 
Thus by setting $H = G_r$ and $\epsilon' = 0$ in Lemma~\ref{lem:2b_Gr_H}, we get Lemma~\ref{lem:2b-approx_Gr} as a corollary.

Before proving Lemma~\ref{lem:2b_Gr_H}, we state two auxiliary results that will be useful. Their proofs are deferred to Appendix~\ref{apx:reduction proofs}.

\begin{restatable}{lemma}{kmisrtR}\label{lem:k-MIS_r>=2R^*}
    Consider a $k$-center instance $\left(G = (V, E, w), k\right)$ with optimal radius $R^*$. 
    Then for each $r \geq 2R^*$ and for every $\beta \geq 1$, it holds that every \tbrs in the $r$-threshold graph $G_r$ is of size at most $k$.
\end{restatable}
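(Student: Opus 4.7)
The plan is a short pigeonhole argument; the parameter $\beta$ does not play a role in the bound, only the spacing parameter $\alpha = 2$ does. Fix an optimal solution $S^* \subseteq V$ with $|S^*| \leq k$ and $\max_{v \in V} d_G(v, S^*) = R^*$, and let $M$ be any $(2,\beta)$-ruling set in $G_r$ for some $r \geq 2R^*$.

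First, I would suppose for contradiction that $|M| \geq k+1$. Since each vertex of $M$ is within distance $R^*$ of some center in $S^*$, and $|S^*| \leq k < |M|$, by pigeonhole there exist two distinct vertices $u, v \in M$ and a common center $c \in S^*$ with $d_G(u, c) \leq R^*$ and $d_G(v, c) \leq R^*$.

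Next, I would apply the triangle inequality in $G$ to conclude that
\[
d_G(u,v) \;\leq\; d_G(u,c) + d_G(c,v) \;\leq\; 2R^* \;\leq\; r.
\]
By Definition~\ref{def:threshold_graph}, this means $(u,v) \in E_r$, hence $d_{G_r}(u,v) = 1$. But $M$ is a $(2,\beta)$-ruling set of $G_r$, so any two distinct vertices of $M$ have $G_r$-distance at least $2$, contradicting $d_{G_r}(u,v) = 1$. Therefore $|M| \leq k$.

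There is no real obstacle: the lemma is essentially the Hochbaum--Shmoys pigeonhole combined with a single use of the triangle inequality, and the choice $r \geq 2R^*$ is precisely what makes the triangle inequality collapse into an edge of $G_r$. The statement holds uniformly in $\beta \geq 1$ because the bound $|M| \leq k$ depends only on the spacing condition $d_{G_r}(u,v) \geq 2$, not on the covering radius $\beta$.
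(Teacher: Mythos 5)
Your proof is correct, and it relies on the same core facts as the paper's: the triangle inequality shows that any two vertices lying in a common optimal cluster are within $2R^* \le r$ of each other, hence adjacent in $G_r$, contradicting the spacing requirement of a $(2,\beta)$-ruling set. You phrase this as a direct pigeonhole over the at most $k$ optimal clusters, whereas the paper unfolds the same idea via an induction assigning each vertex of $M$ to a distinct cluster; the content is essentially identical and your formulation is, if anything, the more streamlined one.
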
 

\begin{restatable}{observation}{noIStRGr}\label{obs:noIStRGr}
Consider a $k$-center instance $\left(G = (V, E, w), k\right)$ with optimal radius $R^*$, and let
$G_r$ be the $r$-threshold graph where $r=2R^*$.
Then, there is no independent set in $G_r$ of size at least $k + 1$.
\end{restatable}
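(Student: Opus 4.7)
The plan is to prove the contrapositive via a direct pigeonhole argument using the optimal center set. Let $S^*$ be an optimal $k$-center solution with $|S^*|\le k$ and radius $R^*$, so by definition every $v \in V$ has some center $c \in S^*$ with $d_G(v,c) \le R^*$. I will fix an arbitrary assignment $\phi : V \to S^*$ sending each vertex $v$ to one such closest center $\phi(v)$ satisfying $d_G(v,\phi(v)) \le R^*$.

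Now suppose, for contradiction, that $I \subseteq V$ is an independent set in $G_{2R^*}$ with $|I| \ge k+1$. Since $|S^*| \le k < |I|$, by pigeonhole there exist two distinct vertices $u, v \in I$ with $\phi(u) = \phi(v) = c$. By the triangle inequality in $G$,
\[
d_G(u,v) \;\le\; d_G(u,c) + d_G(c,v) \;\le\; R^* + R^* \;=\; 2R^*.
\]
By the definition of the $r$-threshold graph (Definition~\ref{def:threshold_graph}) with $r = 2R^*$, this implies $(u,v) \in E_r$, so $u$ and $v$ are adjacent in $G_{2R^*}$. This contradicts the assumption that $I$ is an independent set.

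There is essentially no obstacle here; the statement is an immediate consequence of the triangle inequality together with the fact that $|S^*|\le k$, and the whole argument is a one-line pigeonhole once the assignment $\phi$ is set up. I do not expect any subtlety beyond making sure the definition of $G_r$ uses $d_G(u,v)\le r$ (non-strict), which is exactly how Definition~\ref{def:threshold_graph} is stated, so the bound $d_G(u,v) \le 2R^*$ suffices to produce the contradictory edge.
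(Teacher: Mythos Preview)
Your proof is correct. It differs from the paper's in that the paper argues indirectly: it extends the hypothetical independent set $M$ to a maximal independent set $M'$, observes that $M'$ is a \tbrs with $\beta\ge 1$, and then invokes Lemma~\ref{lem:k-MIS_r>=2R^*} to conclude $|M'|\le k$, contradicting $|M'|\ge |M|\ge k+1$. Your pigeonhole-plus-triangle-inequality argument bypasses both the extension to an MIS and the appeal to Lemma~\ref{lem:k-MIS_r>=2R^*}; in effect you are reproving the relevant special case of that lemma inline. The advantage of your route is that it is self-contained and slightly shorter; the advantage of the paper's route is that it reuses a lemma already established for the main reduction, so no argument is duplicated.
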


\noindent
We proceed now with the proof of Lemma~\ref{lem:2b_Gr_H}.

\begin{proof}[Proof of Lemma~\ref{lem:2b_Gr_H}]
    Let $\hat{r}$ be the smallest $r \in \{(1 + \epsilon)^i \mid (1 + \epsilon)^i \leq nW, i \in \mathbb{N}\}$ such that algorithm $\mathcal{A}$ returns a \tbrs $M_r$ of size at most $k$ in an edge-subgraph $H$ of $G_{r'}$, where $r' = (1+\epsilon')r$. 
    Also, let $\hat{r}' = (1+\epsilon')\hat{r}$ and let $S := M_{\hat{r}}$ be the solution we return for the $k$-center instance.

    Since $H$ is a subgraph of $G_{\hat{r}'}$, 
    then for every edge $(u, v) \in E(H)$, the distance between $u$ and $v$ in $G$ is at most $\hat{r}'$. Hence, as $M_{\hat{r}}$ is a \tbrs in $H$, then every vertex is within distance 
    $\beta \hat{r}'$ from its closest center in $G$. Thus, the returned solution $S$ has radius
    at most $\beta  \hat{r}'$. 
    
    We show now that $\hat{r}$ 
    is at most $2(1+\epsilon)$ times larger than $R^*$.
    Based on Observation~\ref{obs:noIStRGr}, for the fixed choice of $r = 2R^*$,
    algorithm $\mathcal{A}$ always returns a \tbrs $M_r$ in $H$ of size at most $k$.
    By definition of $\hat{r}$, and since the possible values of $r$ are powers of $(1 + \epsilon)$,
    we have that $\hat{r} \leq 2(1 + \epsilon)R^*$.
    Therefore, the radius of the returned solution $S$ is at most $2\beta(1+\epsilon)
    (1+\epsilon')R^*$.
\end{proof}

\section{Incremental $k$-Center on Graphs}\label{sec:incremental}
In the incremental setting, the input graph of the $k$-center instance is subject to edge insertions.
We start by recalling the concept of \textit{dominating set}, which we will exploit throughout this section.

\begin{Definition}[Dominating set]\label{def:kern_set}
    Given an unweighted undirected graph $G = (V, E)$, a dominating set $S \subseteq V$ in $G$ is a subset of vertices such that each vertex of $G$ is either in $S$ or has a neighbor in $S$.
\end{Definition}

\begin{observation} \label{obs:rs_beta_beta1}
    Consider an unweighted undirected graph $G=(V,E)$. Let $S$ be a dominating set in $G$, and $M$ be an $(\alpha, \beta)$-ruling set in $G[S]$. Then $M$ is an $(\alpha, \beta+1)$-ruling set in $G$.
\end{observation}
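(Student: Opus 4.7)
The plan is to verify, given the hypothesis, the two defining properties of an $(\alpha, \beta+1)$-ruling set of $G$ for the set $M$: (i) the separation condition that pairwise distances among vertices of $M$ are at least $\alpha$ in $G$, and (ii) the covering condition that every vertex of $V$ is within distance $\beta + 1$ of $M$ in $G$. I would handle the two conditions independently.

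For the covering condition, I would fix an arbitrary $v \in V$ and split on whether $v \in S$. If $v \in S$, then a shortest $v$-to-$M$ path in $G[S]$ is also a path in $G$, since every edge of $G[S]$ is an edge of $G$; hence $d_G(v, M) \leq d_{G[S]}(v, M) \leq \beta$. If $v \notin S$, the dominating-set assumption supplies a neighbor $u \in S$ of $v$ in $G$, and concatenating the edge $(v, u)$ with a shortest $u$-to-$M$ path inside $G[S]$ yields a walk in $G$ of length at most $1 + d_{G[S]}(u, M) \leq 1 + \beta$, so $d_G(v, M) \leq \beta + 1$.

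For the separation condition I would exploit that $G[S]$ is the subgraph of $G$ induced by $S$, so any edge of $G$ with both endpoints in $S$ already belongs to $G[S]$. Since $M \subseteq S$, two distinct $u, v \in M$ with $d_{G[S]}(u, v) \geq 2$ are non-adjacent in $G[S]$ and therefore non-adjacent in $G$, yielding $d_G(u, v) \geq 2$; in the regime $\alpha = 2$ used in all subsequent applications of this observation in the paper, this is precisely $d_G(u, v) \geq \alpha$. There is no real obstacle here, as the proof is essentially an unpacking of definitions; the only care needed is in the separation direction, where the induced-subgraph structure is what lets non-adjacency in $G[S]$ transfer to non-adjacency in $G$, while the $+1$ slack in the covering radius is precisely what is needed to account for the single edge that bridges a vertex outside $S$ to its dominator.
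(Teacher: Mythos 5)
Your proof is correct in the regime the paper actually uses, and you have put your finger on a genuine subtlety. The paper gives no argument at all for this observation, so there is nothing official to compare against. Your covering step is exactly what is needed: any $v$-to-$M$ path inside $G[S]$ is a path in $G$, and for $v \notin S$ the dominating-set hypothesis supplies the single bridging edge that accounts for the $+1$ in the radius. Your separation step correctly restricts to $\alpha = 2$, and that restriction is in fact necessary, not just convenient: the implication ``$d_{G[S]}(u,v) \ge \alpha$ forces $d_G(u,v) \ge \alpha$'' fails for $\alpha \ge 3$, because $G$ may contain shortcuts through $V \setminus S$ that $G[S]$ omits (the induced subgraph is an edge-subgraph on $S$, so distances in $G$ can only be smaller).

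Concretely, let $G$ be the path $u$--$w$--$v$, take $S = \{u,v\}$ (a dominating set, since $w$ is adjacent to $u$), and let $M = \{u,v\}$. Then $G[S]$ has no edges, so $d_{G[S]}(u,v) = \infty$ and $M$ is a $(3,\beta)$-ruling set in $G[S]$ for any $\beta \ge 0$; yet $d_G(u,v) = 2 < 3$, so $M$ is not a $(3,\beta+1)$-ruling set in $G$. The observation as literally stated is therefore false for $\alpha \ge 3$. Every invocation in the paper has $\alpha = 2$, where non-adjacency in the induced subgraph transfers directly to non-adjacency in the host graph, so nothing downstream is affected --- but you are right that this is not a generic fact about $\alpha$, and the cleanest fix would be to state the observation for $\alpha = 2$ only, which is the form it is used in.
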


In this section, we first develop an incremental algorithm for the \kbttrs problem by finding a small dominating set $S$ and maintaining a \kbtors in $G[S]$, as Observation~\ref{obs:rs_beta_beta1} suggests. The idea is to use the reduction of Lemma~\ref{lem:2b-approx_Gr} with this algorithm, in order to solve the incremental $k$-center problem. 
In the reduction though, notice that we need to maintain an incremental \kbttrs algorithm on $r$-threshold graphs, which is more challenging. 
To that end, in Section~\ref{sec:incr_kbtt_Gr} we develop an efficient incremental \kbttrs algorithm that works on approximate versions of $r$-threshold graphs.
Finally, we apply Lemma~\ref{lem:2b_Gr_H} instead of Lemma~\ref{lem:2b-approx_Gr}, to obtain the incremental $k$-center algorithm.

\subsection{Incremental $k$-Bounded $(2, 2)$-Ruling Set Algorithm} \label{sec:2_k_MIS}
We begin by describing how to detect a small dominating set $S$ on an incremental graph
$G$, and maintain a $k$-bounded $(2,1)$-ruling set on the subgraph induced by $S$. In particular, we prove the following theorem.

\begin{restatable}{theorem}{reskbmis}\label{th:res_kbmis}
    Given an unweighted undirected graph $G = (V, E)$ subject to edge insertions, and an integer $k \geq 1$, there is a randomized incremental algorithm which:
    \begin{itemize}[topsep=0pt,itemsep=-1ex,partopsep=1ex,parsep=1ex]
        \item either reports that there is an independent set in $G$ of size at least $k + 1$, and this is correct w.h.p.,
        \item or finds a \emph{dominating set} $S$ of size $\tilde{O}(k)$ in $G$ and maintains a \kbtors in $G[S]$.
    \end{itemize}
\end{restatable}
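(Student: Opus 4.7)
The plan is to construct the dominating set $S$ via a multi-level random hitting-set hierarchy, and then to maintain a $k$-bounded $(2,1)$-ruling set on $G[S]$ directly, since $|S|$ will be kept to $\tilde O(k)$. I would fix a geometrically decreasing sequence of degree thresholds $\tau_0 > \tau_1 > \cdots > \tau_L$ with $\tau_i = n/2^i$ and $L = O(\log n)$. During initialization, I independently mark each vertex into a sample set $R_i$ with probability $p_i = \Theta((\log n)/\tau_i)$, for each $i$. I then inductively define the uncovered sets: $U_0 := V$, and $U_{i+1}$ is the set of vertices in $U_i$ that are neither in $R_i$ nor currently adjacent in $G$ to any vertex of $R_i \cap U_i$. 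The dominating set I output is $S := \bigcup_{i \le L} (R_i \cap U_i) \cup U_L$, which dominates $V$ in $G$ by construction.

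For the size bound, a standard Chernoff plus union-bound argument shows that with high probability, at every time during the update sequence and every level $i$, any vertex in $U_i$ with degree at least $\tau_i$ in $G[U_i]$ has a neighbor in $R_i$ (and hence in $R_i \cap U_i$). Consequently $G[U_{i+1}]$ has maximum degree less than $\tau_i$. A greedy packing on $G[U_i]$ therefore yields an independent set of size at least $|U_i|/(\tau_{i-1}+1)$; whenever this quantity exceeds $k$ at any level, the algorithm reports these vertices as a witness independent set of size at least $k+1$ in $G$ and halts. Otherwise $|U_i| \le k(\tau_{i-1}+1)$, so $\mathbb{E}[|R_i \cap U_i|] \le p_i |U_i| = \tilde O(k)$, and in particular $|U_L| = \tilde O(k)$ because $\tau_{L-1} = O(1)$. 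Summing over the $O(\log n)$ levels gives $|S| = \tilde O(k)$ w.h.p. For the $(2,1)$-ruling set on $G[S]$, since $|S| = \tilde O(k)$ I can afford a direct incremental MIS: whenever an inserted edge induces a conflict between two current MIS vertices in $S$, drop one of them permanently from the MIS; if the MIS size would ever exceed $k$, I report that MIS as an independent set in $G$ and halt.

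The critical use of incrementality is that all samples $R_i$ are drawn once at initialization and never resampled; because edges are only inserted, degrees grow monotonically and the uncovered sets $U_i$ shrink monotonically, so each vertex can leave a given level at most once, and each transition can be detected by scanning only the newly inserted edges incident to that vertex. High-probability correctness against the oblivious adversary then follows from a single union bound over the whole update sequence, since the samples are fixed a priori. The main technical obstacle will be making the size bound $|U_i| \le k(\tau_{i-1}+1)$ robust to updates: after insertions, degrees within $G[U_i]$ can rise, so I must argue that any vertex whose degree in $G[U_i]$ crosses $\tau_i$ at \emph{any} moment is in fact dominated at that moment by $R_i \cap U_i$, not merely at initialization. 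This forces the union bound to range over all (vertex, time) pairs across the update sequence, and it is precisely this global union bound, combined with the monotone shrinkage of the $U_i$, that yields both the w.h.p.\ correctness and the efficient maintenance promised by the theorem.
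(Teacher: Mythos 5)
Your construction has a genuine gap: the monotonicity you rely on fails at levels $i \ge 2$. You define $U_{i+1}$ as the vertices of $U_i$ with no \emph{current} neighbor in $R_i \cap U_i$, but that intersection shrinks whenever $U_i$ does, so $U_{i+1}$ can \emph{grow} after an insertion. Concretely, take $V = \{a,b,c\}$, $R_0 = \{a\}$, $R_1 = \{b\}$; inserting $(b,c)$ makes $U_2 = \emptyset$ (now $c$ is dominated by $b \in R_1 \cap U_1$), but a later insertion $(a,b)$ kicks $b$ out of $U_1$, hence out of $R_1 \cap U_1$, and $c$ re-enters $U_2$, while $S$ changes from $\{a,b\}$ to $\{a,c\}$. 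This invalidates the claims that "each vertex can leave a given level at most once" and that transitions can be detected by scanning new incident edges, and, crucially for the statement itself, it means you do not \emph{find} a stable dominating set $S$ on which to maintain the $k$-bounded $(2,1)$-ruling set: your $S$ fluctuates, so a drop-on-conflict MIS that never reconsiders dropped vertices cannot be run on $G[S]$.

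The paper sidesteps this precisely by \emph{not} pre-sampling from $V$ with fixed thresholds $\tau_i = n/2^i$: it samples the $i$-th hitting set $S_i$ lazily from the concrete, frozen set $L_{i-1}$ at the moment that recursive call begins, with the \emph{adaptive} threshold $\gamma_i = |L_{i-1}|/(2k) - 1$ (this, not a concentration argument about a changing intersection, is what keeps each $|S_i| = \tilde{O}(k)$). A vertex is then removed from $L_i$ as soon as it becomes adjacent to any of the fixed sets $\bigcup_{j\le i} S_j$, a genuinely monotone rule under insertions. The recursion pauses and reports a witness whenever $|L_i| > |L_{i-1}|/2$, resuming once further insertions shrink $L_i$, so the total depth is $O(\log n)$, and $S = \bigcup_{j=1}^d S_j \cup L_d$ is frozen the moment $|L_d| \le 4k$; only then does the dynamic MIS start on the unchanging $G[S]$. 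Your Chernoff hitting-set bound, greedy-packing witness, and the estimate $\mathbb{E}[|R_i \cap U_i|] = \tilde{O}(k)$ given $|U_i| \le k(\tau_{i-1}+1)$ are all sound observations, but they do not repair the monotonicity, which is where the incremental structure of the construction actually lives.
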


Notice that based on Observation~\ref{obs:rs_beta_beta1} and the definitions of dominating set and \kbabrs problem (i.e., Definition~\ref{def:kern_set} and
Definition~\ref{def:kbabrs}), the algorithm of Theorem~\ref{th:res_kbmis} solves
the incremental \kbttrs problem in $G$.
 Before describing the algorithm we review two existing algorithms tools.
First tool is the following folklore hitting set claim (e.g., see \cite{AingworthCIM99}, also widely used in decremental settings against an oblivious adversary).
\begin{lemma} \label{lem:sampling}
    Given a graph $G = (V, E)$ and a threshold $\gamma \ge 1$, let $S$ be the set obtained by sampling each vertex independently with probability $\min(c\ln(n) / \gamma, 1)$, for a constant $c>1$. Then, with probability at least $1 - n^{-(c-1)}$, every vertex of degree more than $\gamma$ has at least one neighbor in $S$.
\end{lemma}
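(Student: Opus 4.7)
The plan is to use a standard independent sampling argument together with the inequality $1 - x \le e^{-x}$ and a union bound over vertices. First I would handle the edge case separately: if $c\ln(n)/\gamma \ge 1$, then the sampling probability clamps to $1$, so $S = V$ and every vertex of degree at least $1$ (in particular any vertex of degree larger than $\gamma$) trivially has all of its neighbors in $S$. So assume from now on that $p := c\ln(n)/\gamma < 1$.

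Next I would fix an arbitrary vertex $v \in V$ with $\deg_G(v) > \gamma$ and bound the probability that none of its neighbors is sampled. Because the vertices are sampled independently with probability $p$, and $|N_G(v)| > \gamma$, we have
\[
\Pr\!\left[\,N_G(v) \cap S = \emptyset\,\right] \;=\; (1-p)^{|N_G(v)|} \;<\; (1-p)^{\gamma} \;\le\; e^{-p\gamma} \;=\; e^{-c\ln n} \;=\; n^{-c},
\]
where the penultimate step uses $1 - x \le e^{-x}$ for $x \in [0,1]$.

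Finally I would take a union bound over all vertices of degree more than $\gamma$ (of which there are at most $n$), which gives that the probability that some such vertex is not dominated by $S$ is at most $n \cdot n^{-c} = n^{-(c-1)}$. Complementing yields the stated $1 - n^{-(c-1)}$ success probability.

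I do not expect any real obstacle here; the statement is essentially the classical hitting-set lemma, and the only subtlety is being careful about the $\min$ in the definition of the sampling probability (handled by the edge case above) and using the strict inequality $\deg_G(v) > \gamma$ so that $(1-p)^{\deg_G(v)}$ is actually upper bounded by $(1-p)^\gamma$ even when $\deg_G(v)$ is only slightly larger than $\gamma$ (which is fine since we only need a bound of $e^{-p\gamma}$).
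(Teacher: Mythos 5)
Your proof is correct and is the standard argument for this folklore hitting-set lemma. The paper itself does not prove Lemma~\ref{lem:sampling}; it cites it as a well-known fact (referencing Aingworth et al.), so there is no ``paper proof'' to compare against. Your write-up handles the $\min(\cdot,1)$ clamp cleanly, uses $1-x \le e^{-x}$ correctly, and the union bound gives exactly the stated $1-n^{-(c-1)}$ bound. The only cosmetic remark is that $1-x\le e^{-x}$ holds for all real $x$, so the restriction to $x\in[0,1]$ is unnecessary though harmless; and the strict inequality $(1-p)^{\deg(v)} < (1-p)^\gamma$ is not needed since the weak inequality already suffices for the final bound.
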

\noindent
As noted, e.g., in~\cite{rodittyzwick2012dynamic}, even though Lemma~\ref{lem:sampling} refers to a static graph, it is easy to see that it holds for partially dynamic graphs. Since we are assuming an oblivious adversary, the choice of the random set $S$ is independent of the graph. This and the fact that we have at most $ O(n^2)$ versions of the graph in the incremental setting, let us bound the overall probability via a straightforward union bound, and the failure probability is at most $n^{-(c-3)}$.

Second tool, is a fully dynamic \kbtors algorithm
with the following guarantees. This
algorithm is a trivial extension of any fully dynamic MIS algorithm that returns explicitly
the MIS. For that reason, we can either use the MIS algorithm of Behnezhad et al.~\cite{behnezhad2019fully},
or the algorithm of Chechik and Zhang \cite{chechik2019fully}.
 
\begin{theorem} \label{th:fully_dyn_mis}
    Given a graph $G = (V, E)$ subject to edge updates, there is a fully dynamic 
    \kbtors algorithm with $\tilde{O}(1)$ amortized update time.
\end{theorem}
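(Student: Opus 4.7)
}
The plan is to use the fully dynamic MIS algorithms of Behnezhad et al.~\cite{behnezhad2019fully} or of Chechik and Zhang~\cite{chechik2019fully} as a black box and wrap them with an $O(1)$-overhead bookkeeping layer that turns them into a \kbtors algorithm. Recall the definition: on each update we must either output an MIS of size at most $k$, or report that $G$ contains an independent set of size at least $k+1$. Since an MIS is always an independent set and in particular a $(2,1)$-ruling set, this bounded version is strictly easier than the usual MIS problem as soon as the size test can be performed cheaply.

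First, I would run the chosen fully dynamic MIS algorithm $\mathcal{M}$ on $G$ exactly as specified in~\cite{behnezhad2019fully,chechik2019fully}. Both of these algorithms maintain the MIS $I \subseteq V$ \emph{explicitly}, meaning that on every update they output the list of vertices that enter $I$ and the list of vertices that leave $I$; the total size of these lists, summed over any sequence of $t$ updates, is bounded by $t \cdot \tilde{O}(1)$, which is exactly what gives the $\tilde{O}(1)$ amortized update time. Alongside $\mathcal{M}$, I maintain a single integer counter $c = |I|$; for each vertex reported as entering $I$ we increment $c$, and for each vertex reported as leaving we decrement it. This adds only $O(1)$ work per MIS change, so the amortized update time of the wrapped algorithm remains $\tilde{O}(1)$.

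After processing an update, the output of the \kbtors algorithm is produced by a single test on $c$. If $c \le k$, I return $I$ itself: it is a \tors in $G$ by correctness of $\mathcal{M}$, and it has size at most $k$, so it is a valid \kbtors in the sense of Definition~\ref{def:kbabrs}. If instead $c \ge k+1$, then $I$ is an independent set in $G$ of size at least $k+1$, which is exactly the witness the second branch of Definition~\ref{def:kbabrs} asks for; in this case I report ``there is an independent set of size at least $k+1$'' without even returning a ruling set. Both branches are correct with the same guarantees (deterministic or randomized, against the same adversary model) as the underlying algorithm $\mathcal{M}$.

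There is essentially no technical obstacle to overcome: the whole argument hinges on the fact that the black-box MIS algorithms maintain $I$ explicitly with $\tilde{O}(1)$ amortized change, so that $|I|$ can be tracked in $O(1)$ per change and a simple threshold test on $|I|$ decides which of the two branches of the \kbtors specification to invoke. The only point that deserves a brief sanity check is that returning $I$ outright when $|I| \le k$ is allowed; this is immediate because $I$ is an MIS, hence a $(2,1)$-ruling set, and no size lower bound is required by Definition~\ref{def:kbabrs}. Combining these observations yields a fully dynamic \kbtors algorithm with $\tilde{O}(1)$ amortized update time, as claimed.
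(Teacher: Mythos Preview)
Your proposal is correct and follows essentially the same approach as the paper: run the fully dynamic MIS algorithm of~\cite{behnezhad2019fully} (or~\cite{chechik2019fully}) as a black box, and at each step either return the maintained MIS $M$ if $|M|\le k$, or report an independent set of size at least $k+1$ otherwise. The paper's proof is just a terser statement of exactly this reduction; your additional remarks about tracking $|M|$ via a counter are fine but not needed for the argument.
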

\begin{proof}
    The algorithm of Behnezhad et al.~\cite{behnezhad2019fully} maintains
    an MIS $M$ under edge updates, in $\tilde{O}(1)$ amortized update time. Recall that an MIS is a \tors.
    Thus at any moment, if the size of $M$ is at least $k + 1$, we report
    that there is an independent set in $G$ of size at least $k + 1$, otherwise we return the set $M$.
\end{proof}

\subsubsection{Overview of the Algorithm}
A pseudocode of the algorithm of Theorem \ref{th:res_kbmis} is provided in Algorithm~\ref{alg:2_k_MIS}.
The algorithm consists of two phases. 
Roughly speaking, the first phase either detects a dominating set $S$ or reports that there is an independent set in $G$ of size at least $k + 1$. 
The second phase starts when such a dominating set $S$ is detected and is only responsible for maintaining an incremental \kbtors in $G[S]$.

\begin{algorithm}[ht!]
\DontPrintSemicolon
\caption{\textsc{\kbttrs}}
\label{alg:2_k_MIS}

\setcounter{AlgoLine}{0}
\SetAlgoLined

\SetKwProg{Fn}{Procedure}{:}{\KwRet}
\SetKwFunction{FSample}{SampleSL}
\SetKwFunction{FKBRS}{$k$-Bounded-Ruling-Set}
\SetKwFunction{FUpdate}{Insert}

\tcp{In the preprocessing $i = 0$, $L_0 = V$, and $\FKBRS{}$ is called with no edge (note that Line~16 where the edge is actually used cannot be reached during the preprocessing)}
\tcp{The index $i$ and the sets $L_i,S_i$ for every $i$ are global}

\vspace{1em}

\Fn{\FKBRS{$u,v$}} {
    \If(\tcp*[h]{first phase}){$|L_i| > 4k$} {
        \If(\tcp*[h]{recursive sampling}){$i= 0$ or $|L_i|\le \frac{|L_{i-1}|}{2}$}{
            $i \gets i+1$
        
            $\gamma_i \gets \frac{|L_{i-1}|}{2k} - 1$
        
            $S_i \gets$ sample vertices of $L_{i-1}$ independently with prob.~$\min(10\ln(n) / \gamma_i, 1)$
            \tcp*{Lemma~\ref{lem:sampling}}
            
            $L_i \gets \{x \in L_{i-1} : N_{G[L_{i-1}]}(x) \cap S_i = \emptyset\}$
    
            \FKBRS{u,v}
        }
        \Else{
            \textbf{report} there is an independent set in $G$ of size at least $k+1$
            \tcp*{Lemma~\ref{lem:kvert_nomis}}
        }
    }
    \Else(\tcp*[h]{second phase}) {
        
        \If(\tcp*[h]{$\mathcal{B}$ is dynamic \kbtors algorithm (Theorem~\ref{th:fully_dyn_mis})}){$\mathcal{B}$ is not initialized}{
            $d \gets i$
            
            $S \gets \bigcup_{j=1}^d S_j \cup L_d$
            \tcp*{$S$ is a dominating set (Lemma~\ref{lem:S_kern_G})}

            $\mathcal{B}.initialize(G[S])$
        }
        \ElseIf{$u\in S$ and $v \in S$}{
            $\mathcal{B}.update(G[S], u,v)$
        }
    }
}

\vspace{1em}

\Fn{\FUpdate{$u, v$}} {
    $G \gets (V,\, E \cup \{u,v\})$
    
    \If {$u \in L_i$ and $\exists j \leq i$ such that $v \in S_j\,$ (resp., $v \in L_i$ and $\exists j \leq i$ such that $u \in S_j$)} {
        $L_i \gets L_i \setminus \{u\}\,$ (resp., $L_i \gets L_i \setminus \{v\}$)
    }
    \FKBRS{$u,v$}
}
\end{algorithm}

\smallskip
In the first phase, the algorithm iteratively adds vertices to the dominating set by recursively sampling a sequence of hitting sets. 
In each recursive call $i \geq 1$, we set a threshold $\gamma_i = \tfrac{|L_{i-1}|}{2k}-1$ and construct two sets $S_i$ and $L_i$.
The set $S_i$ is obtained by sampling each vertex of $L_{i-1}$ independently with probability $\min(c\ln(n) / \gamma_i, 1)$, for a sufficiently large constant $c$.
Roughly speaking the set $S_i$ is the hitting set of the vertices with degree more than $\gamma_i$ in $G[L_{i-1}]$.
Moreover, the set $S_i$ is w.h.p.\ small in size due to the sampling procedure.
The set $L_i$ is constructed as the subset of vertices of $L_{i-1}$ that do not belong to $S_i$ and do not have a neighbor in $S_i$. 
Given the property of the hitting set $S_i$, the set $L_i$ contains w.h.p.\ only vertices with degree at most $\gamma_i$ in $G[L_{i-1}]$.
The recursion starts with $L_0 = V$ and it ends when $|L_i| \le 4k$. 

In the $i_{\text{th}}$ recursive call, if the size of $L_i$ is at most $|L_{i-1}|/2$ then a new recursive call begins. This implies that the depth of the recursion over all updates 
is bounded by $O(\log n)$.
On the other hand, if the size of the set $L_i$ is greater than $|L_{i-1}|/2$, the recursion pauses and the algorithm reports that there is an independent set in $G$ of size at least $k+1$. 
In this case $i$ may not be the final recursive call of the algorithm, because on future updates the algorithm can possibly continue the recursion.

Whenever an edge $(u,v)$ is inserted to $G$ during the first phase, we update the set $L_i$ by removing from it one of the endpoints if the other one is contained in $S_i$. Observe that edge insertions will eventually shrink the size of $L_i$, forcing the recursion 
to continue.

\smallskip
The second phase begins when the size of $L_i$ is at most $4k$, and at this moment the recursion ends.
We denote by $d$ the index of the last recursive call in the first phase, and let $S := \bigcup_{j=1}^{d} S_j \cup L_d$ be the union of the hitting sets of all recursive calls and of the set $L_d$. Notice that the set $S$ can be constructed explicitly during the first phase of the algorithm.
Also, in the updates following the second phase we never re-enter the first phase, and thus the set $S$ is not modified anymore.
We show in the analysis, that even though the set $S$ is random, it is always a dominating set in $G$.

At the beginning of the second phase, the dynamic \kbtors algorithm $\mathcal{B}$ of Theorem~\ref{th:fully_dyn_mis} is initialized on $G[S]$.
Whenever an edge $(u, v)$ is inserted to $G$ during
the second phase, the algorithm simply forwards the update to $\mathcal{B}$ if $u,v \in S$, and does nothing otherwise.




\newcommand{\new}[1]{\textcolor{blue}{#1}} 

\subsubsection{Analysis of the Algorithm}
The analysis consists of three claims.
First, we prove that whenever the algorithm reports that there is an independent set in $G$ of size at 
least $k + 1$, this is correct with high probability (w.h.p.).
Second, we show that there are $O(\log n)$ recursive calls and that w.h.p. the size of $S$ is $\tilde O(k)$.
Third, we prove that the set $S$ detected by the algorithm is indeed a dominating set in $G$.

\begin{lemma} \label{lem:kvert_nomis}
At any stage of the algorithm with $i \geq 1$, if $|L_i| > \frac{|L_{i-1}|}{2}$, then w.h.p.~there is an independent set in $G$ of size at least $k + 1$.
\end{lemma}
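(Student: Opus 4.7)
The plan is to use the sampling property of $S_i$ to enforce a max-degree bound on $G[L_i]$, and then invoke the standard greedy lower bound on the size of a maximum independent set.

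First I would argue the degree bound. Applying Lemma~\ref{lem:sampling} to the graph $G[L_{i-1}]$ with threshold $\gamma_i = \tfrac{|L_{i-1}|}{2k}-1$ and sampling probability $\min(10\ln(n)/\gamma_i, 1)$ used to construct $S_i$, with high probability every vertex of $G[L_{i-1}]$ whose degree exceeds $\gamma_i$ has at least one neighbor in $S_i$. By the definition $L_i = \{x \in L_{i-1} : N_{G[L_{i-1}]}(x) \cap S_i = \emptyset\}$, no vertex in $L_i$ has a neighbor in $S_i$, so w.h.p.\ every vertex of $L_i$ has degree at most $\gamma_i$ in $G[L_{i-1}]$, and hence at most $\gamma_i$ in the induced subgraph $G[L_i]$.

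Next I would apply the folklore greedy bound: any graph on $N$ vertices with maximum degree $\Delta$ contains an independent set of size at least $N/(\Delta+1)$, obtained by repeatedly adding an arbitrary vertex and removing it together with its neighbors. Applied to $G[L_i]$ with $\Delta \le \gamma_i$, this yields an independent set $I$ in $G[L_i]$ of size at least $|L_i|/(\gamma_i+1)$. Plugging in the hypothesis $|L_i| > |L_{i-1}|/2$ and the identity $\gamma_i + 1 = |L_{i-1}|/(2k)$ gives
\[
|I| \;\ge\; \frac{|L_i|}{\gamma_i+1} \;>\; \frac{|L_{i-1}|/2}{|L_{i-1}|/(2k)} \;=\; k,
\]
so $|I| \ge k+1$. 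Since $L_i \subseteq V$, $I$ is also an independent set in $G$, proving the claim.

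The only subtle point is handling the randomness correctly. Lemma~\ref{lem:sampling} applies to the sampling of $S_i$ from the current version of $G[L_{i-1}]$ with failure probability at most $n^{-9}$; under the oblivious adversary assumption, a union bound over the $O(\log n)$ recursive calls and the at most $O(n^2)$ incremental versions of the graph (as already noted after Lemma~\ref{lem:sampling}) keeps the total failure probability polynomially small. I do not expect any other real obstacle: the argument is essentially a two-line consequence of the hitting-set property combined with the greedy IS bound.
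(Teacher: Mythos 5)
Your proof is correct and follows essentially the same approach as the paper: hitting-set property of $S_i$ forces a degree bound of $\gamma_i$ on $G[L_i]$, and then a counting argument shows an independent set of size at least $k+1$ must exist. The only cosmetic difference is that the paper argues contrapositively (no set $T$ of size $k$ can be a maximal independent set, since $T$ and its neighborhood cover at most $k(\gamma_i+1) \le |L_{i-1}|/2 < |L_i|$ vertices), whereas you invoke the standard greedy lower bound $|L_i|/(\gamma_i+1) > k$ directly --- both are the same counting argument in different packaging, and your version is arguably cleaner.
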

\begin{proof}
The threshold $\gamma_i$ is set to $\frac{|L_{i-1}|}{2k} - 1$,
and $S_i$ is obtained by sampling each vertex of $L_{i-1}$ independently with probability 
$\min(c\ln(n) / \gamma_i, 1)$, for a sufficiently large constant $c$.
Then by Lemma~\ref{lem:sampling}, it holds that w.h.p.~every vertex 
in $L_{i-1}$ of degree more than $\gamma_i$ in the induced subgraph $G[L_{i-1}]$ has a neighbor in $S_i$. 
Hence, w.h.p.~every vertex in $L_i$ is of degree at most $\gamma_i$ in $G[L_{i-1}]$.
As $G[L_i]$ is a subgraph of $G[L_{i-1}]$, w.h.p.~every vertex of $G[L_i]$ is of degree at most $\gamma_i$ in $G[L_i]$ as well.

Since w.h.p.~the maximum degree in $G[L_i]$ is bounded by $\gamma_i$, for any $T \subseteq L_i$ such that
$|T|= k$ (note that $|L_i|>4k$) \antonis{why is that necessary? Also maybe we should discard the case where $i = d$.} \antonis{add also ``ceil'' because of floating with division?}, it holds that w.h.p.~the number of vertices which are either in $T$ or have a neighbor in $T$ is
at most $k (\gamma_i+1) \le \frac{|L_{i-1}|}{2}$. 
By assumption we have that $|L_i| > \frac{|L_{i-1}|}{2}$,
and so $T$ cannot be a maximal independent set.
So it holds that w.h.p.~there is an independent set in $G[L_i]$ of size
at least $k + 1$. In turn, as $G[L_i]$ is an induced subgraph
of $G$, it holds that w.h.p.~there is an independent set in $G$ of size
at least $k + 1$ as well.
\end{proof}

\begin{lemma}\label{lem:dep_S_size}
    Over the sequence of updates, there are $d=O(\log n)$ recursive calls. Moreover, the size of $S$ is w.h.p.~$O(k \log^2 n)$.
\end{lemma}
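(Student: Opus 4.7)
The plan is to bound the recursion depth $d$ and the size of $S$ separately. For the depth, I introduce the quantity $a_i := |L_i|$ measured at the moment $L_i$ was first created (call this time $t_i$), with $a_0 = |V| = n$ and $a_1 \le n$ as base cases. The key observation is that to create $L_i$ for $i \ge 2$, the algorithm must have just verified the advancement condition $|L_{i-1}^{(t_i)}| \le |L_{i-2}^{(t_i)}|/2$, and since each set $L_j$ only shrinks over time (vertices are removed from $L_i$ on line~18 but never added), $|L_{i-2}^{(t_i)}| \le a_{i-2}$, while by construction $L_i \subseteq L_{i-1}^{(t_i)}$. Combining these facts yields the deterministic recurrence $a_i \le a_{i-2}/2$, which unrolls to $a_i \le n/2^{\lfloor i/2 \rfloor}$. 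A new level $d+1$ can only ever be created if $|L_d^{(t)}| > 4k$ at some time $t$; but $|L_d^{(t)}| \le a_d$ always, so once $a_d \le 4k$ the recursion can never grow further. This gives the deterministic bound $d \le 2 \log_2(n/(4k)) + O(1) = O(\log n)$.

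For the size bound, I decompose $|S| \le \sum_{j=1}^{d} |S_j| + |L_d|$, where $|L_d| \le 4k$ because the algorithm entered the second phase. To bound each $|S_j|$, I condition on the entire history up to time $t_j$, which deterministically fixes $L_{j-1}^{(t_j)}$; conditionally, $S_j$ is then an independent Bernoulli sample of that fixed ground set with probability $p_j = \min(10\ln(n)/\gamma_j,\, 1)$ where $\gamma_j = |L_{j-1}^{(t_j)}|/(2k) - 1$. Since we are in the first phase we have $|L_{j-1}^{(t_j)}| > 4k$, so $\gamma_j \ge |L_{j-1}^{(t_j)}|/(4k)$ and $\mathbb{E}[|S_j|] \le p_j\, |L_{j-1}^{(t_j)}| = O(k \log n)$. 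The boundary case $p_j = 1$ is handled directly: then $\gamma_j \le 10 \ln n$, forcing $|L_{j-1}^{(t_j)}| = O(k \log n)$ and hence $|S_j| \le |L_{j-1}^{(t_j)}| = O(k \log n)$. A standard Chernoff bound then gives $|S_j| = O(k \log n)$ w.h.p., and a union bound over the $d = O(\log n)$ levels yields $|S| = O(k \log^2 n)$ w.h.p.

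The main obstacle will be making the probabilistic argument rigorous in the dynamic setting, since the times $t_j$ and the sets $L_{j-1}^{(t_j)}$ are themselves random objects that depend on both the adversary's insertion sequence and on the earlier random samples $S_1, \ldots, S_{j-1}$. I will resolve this by applying each Chernoff bound conditionally on the entire history up to $t_j$, after which $L_{j-1}^{(t_j)}$ is a fixed ground set and $S_j$ is a bona fide Bernoulli sample of it. The oblivious-adversary assumption ensures the update sequence is independent of the sampling coins, so the conditional high-probability bounds at each level compose into an unconditional high-probability statement through the union bound over the $O(\log n)$ levels.
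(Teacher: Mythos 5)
Your proposal is correct and follows the same strategy as the paper's own proof: the advancement condition yields a geometric decay in the set sizes that bounds the recursion depth by $O(\log n)$, and a per-level expected sample size of $O(k\log n)$ combined with a Chernoff bound and a union bound over the $O(\log n)$ levels gives $|S| = O(k\log^2 n)$ w.h.p. You are in fact somewhat more careful than the paper on two subtleties: you track the sizes $a_i$ at creation times and derive the recurrence $a_i \le a_{i-2}/2$ rather than asserting a uniform halving directly, and you apply Chernoff conditionally on the history at each level rather than treating $|S|$ as a single sum of independent Poisson trials (which it is not, since the ground set $L_{j-1}^{(t_j)}$ from which $S_j$ is drawn is itself a random object depending on the earlier samples) --- so your write-up closes a gap that the paper's proof glosses over.
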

\begin{proof}
    Regarding the first claim, at every recursive call $i \ge 1$, it holds that $|L_i| \le \frac{|L_{i-1}|}{2}$. Initially we have that $|L_0|=n$, and so, the depth of the recursion is $d=O(\log n)$.
    
    Regarding the second claim, at each recursive call $i \geq 1$, we sample each vertex of $L_{i-1}$ independently with probability $\min(c\ln(n) / \gamma_i, 1)$, for a sufficiently large constant $c$. 
    Recall that $\gamma_i = \frac{|L_{i-1}|}{2k} - 1$ and note that the sampling takes place only if $|L_{i-1}| > 4k$.
    Then
    \[
        \mathbb{E}[|S_i|] 
        \le |L_{i-1}| \cdot \frac{c \ln(n)}{\gamma_i}
        = |L_{i-1}| \cdot \frac{c \ln(n)}{\frac{|L_{i-1}|}{2k}-1}
        = |L_{i-1}| \cdot \frac{2k \cdot c\ln(n)}{|L_{i-1}| - 2k}
        = \frac{2k \cdot c\ln(n)}{1 - 2k/|L_{i-1}|}
        < 4k \cdot c \ln(n). 
    \]
    Moreover note that $|L_d|\le 4k$.
    Therefore, by linearity of expectation it holds that $\mathbb{E}[|S|] = |L_d| + \sum_{i=1}^d \mathbb{E}[|S_i|] = O(k \log^2 n)$.
    Finally, since $|S|$ is a sum of independent Poisson trials, a standard application of a Chernoff's bound implies that $|S|=O(k \log^2 n)$ with high probability.
\end{proof}
 
\begin{lemma}\label{lem:S_kern_G}
    The set $S$ is a dominating set in $G$.
\end{lemma}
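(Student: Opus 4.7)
The plan is to show that every vertex $v \in V$ is either in $S$ or has a neighbor in $S$ in the current graph $G$. The idea is to track the chain $L_0 \supseteq L_1 \supseteq \cdots \supseteq L_d$ over time and exhibit, for every $v$, an explicit witness in $S$ that dominates it.

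First I would introduce, for each $i \in \{0, 1, \ldots, d\}$, the notation $\tilde{L}_i$ for the snapshot of $L_i$ at the moment the algorithm transitions from recursion level $i$ to level $i+1$, with $\tilde{L}_0 = V$ and $\tilde{L}_d$ equal to $L_d$ at the time the second phase starts. Two structural facts are worth recording. First, $L_{i+1}$ is born as $\{x \in \tilde{L}_i : N_{G[\tilde{L}_i]}(x) \cap S_{i+1} = \emptyset\}$ and then only shrinks during level $i+1$, so $\tilde{L}_{i+1} \subseteq \tilde{L}_i$. Second, once the algorithm leaves level $i$ the snapshot $\tilde{L}_i$ is frozen, because the \textsc{Insert} procedure only ever modifies the set $L_{i'}$ corresponding to the current recursion level $i'$.

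Fix now a vertex $v \in V$ and let $i^*$ be the largest index with $v \in \tilde{L}_{i^*}$; such an index exists because $v \in \tilde{L}_0 = V$. If $i^* = d$, then $v \in L_d \subseteq S$ and we are done. Otherwise $v \in \tilde{L}_{i^*}$ but $v \notin \tilde{L}_{i^*+1}$, and I would split on how $v$ could have exited. In case (a), $v$ was not in $L_{i^*+1}$ when it was first defined; by the defining condition, $v$ had at that moment a neighbor $w \in S_{i^*+1}$ inside $G[\tilde{L}_{i^*}]$. In case (b), $v$ belonged to $L_{i^*+1}$ initially but was later removed during the updates of level $i^*+1$; inspecting \textsc{Insert}, this removal happens only when an edge $(v, w)$ is inserted with $w \in S_j$ for some $j \leq i^*+1$. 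In either case $w \in S_1 \cup \cdots \cup S_d \subseteq S$, the edge $(v, w)$ exists in $G$ at that moment, and since $G$ is incremental it still exists now, so $v$ is dominated by $S$.

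The main thing to be careful about is the bookkeeping around \textsc{Insert}: pinning down which $L_i$ it modifies at each step and verifying that the two cases above exhaust the ways a vertex of $\tilde{L}_{i^*}$ can fail to lie in $\tilde{L}_{i^*+1}$. Once this is done, the rest is a direct case analysis. I would also remark that the conclusion persists throughout the second phase: the set $S$ is fixed after the initialization of $\mathcal{B}$, and further edge insertions only enlarge $G$, so every vertex of $V$ remains dominated by $S$ in all future states of $G$.
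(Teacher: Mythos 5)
Your proof is correct and takes the same approach as the paper: locate the first level at which $v$ exits the nested chain $L_0 \supseteq L_1 \supseteq \cdots \supseteq L_d$ and conclude from the exit that $v$ has a neighbor in some $S_j \subseteq S$. You are in fact a bit more careful than the paper's one-liner ``by definition of $L_i$, vertex $v$ must have a neighbor in $S_i$'': your case (b) correctly records that a removal triggered by \textsc{Insert} only guarantees a neighbor in $S_j$ for some $j \le i$ rather than specifically in $S_i$, and your snapshot notation $\tilde{L}_i$ makes explicit that the frozen levels are no longer modified once the recursion advances — neither of which changes the conclusion that $v$ is dominated by $S$, but both of which make the argument cleaner.
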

\begin{proof}
    For a fixed vertex $v \in V \setminus S$, let $i$ be the minimum index such that $v \notin L_i$. 
    Note that such an index exists since $v \in V=L_0$ and so $i\ge 1$.
    If $v \notin L_i$, then by definition of $L_i$, vertex $v$ must have a neighbor in $S_i$.
    Therefore, every vertex $v \in V \setminus S$ has a neighbor in $S$ and the claim follows.
\end{proof}

Finally, Theorem~\ref{th:res_kbmis} which we restate for convenience, follows by the combination of Lemma~\ref{lem:kvert_nomis}, Lemma~\ref{lem:dep_S_size} and Lemma~\ref{lem:S_kern_G}.

\reskbmis*

Let us explain now why this algorithm actually solves w.h.p.~the incremental \kbttrs problem in $G$ as we argued before (see also Observation~\ref{obs:rs_beta_beta1}).
Recall that an independent set is a distance-$1$ independent set, and by the definition of the \kbttrs problem (see Definition~\ref{def:kbabrs}), we are allowed to report that
there is a distance-$1$ independent set of size at least $k + 1$. Hence, whenever this algorithm performs the operation stated in the first bullet of Theorem~\ref{th:res_kbmis}, the claim follows.

Regarding the operation stated in the second bullet of Theorem~\ref{th:res_kbmis}, let $\mathcal{B}$ be a \kbtors algorithm running on $G[S]$. 
If $\mathcal{B}$ reports that there is a distance-$1$ independent set in $G[S]$ of size at least $k + 1$,
then $\mathcal{B}$ correctly reports that there is a 
distance-$1$ independent set in $G$ of size at least $k + 1$ as well. This is because $G[S]$ is an induced subgraph of $G$,
and so, any distance-$1$ independent set in $G[S]$ is also a distance-$1$ independent set in $G$.

Otherwise, $\mathcal{B}$ returns a \tors $M$ of size at most $k$. 
Then for any vertex $v \in S$, we have that $v$ is of distance
at most $1$ from its closest vertex in $M$. Therefore, since every vertex $v \in V \setminus S$
has at least one neighbor in $S$ by Lemma~\ref{lem:S_kern_G}, we have that every vertex of $G$ is of distance at most $2$ from its closest vertex in $M$. Thus, the set $M$ is a \ttrs in $G$ of size at most $k$, and so the claim follows.

\subsection{Incremental $k$-Bounded $(2, 2)$-Ruling Set on $G_r$} \label{sec:incr_kbtt_Gr}
Our goal here is to extend Theorem~\ref{th:res_kbmis} to $r$-threshold graphs 
so that we can apply Lemma~\ref{lem:2b-approx_Gr} and maintain an incremental $k$-center solution. 
At a high level, our intention is to simulate the two phases of Algorithm~\ref{alg:2_k_MIS} on an $r$-threshold graph $G_r$.
Recall by Definition~\ref{def:threshold_graph} that for any pair of vertices $u,v \in V \times V$, there is an edge in $G_r$
if and only if the distance between $u$ and $v$ in $G$ is at most $r$.
The main challenges in the incremental setting are the following ones.
\begin{itemize}[topsep=0pt,itemsep=-1ex,partopsep=1ex,parsep=1ex]
    \item We cannot afford to explicitly maintain all the edges of $G_r$ in the incremental setting, because it is very expensive to run an incremental all-pairs shortest paths algorithm on $G$.
    
    \item A single edge insertion in the original graph $G$ could introduce multiple edge insertions in the $r$-threshold graph $G_r$.
\end{itemize}
Note that Algorithm~\ref{alg:2_k_MIS} does not need access to all edges of $G_r$ in order to process $G_r$.
Thus, our aim is to describe how to maintain all the necessary information 
that Algorithm~\ref{alg:2_k_MIS} needs, so as to run with implicit input the $r$-threshold graph $G_r$.

To extract the relevant information for the $r$-threshold graph $G_r$, we make use of the
incremental $(1+\epsilon)$-SSSP algorithm of Theorem \ref{th:incr_appr_sssp}. 
We note that using partially dynamic \emph{exact} SSSP algorithms for this step would be too slow for our purposes, as even in unweighted 
graphs we would require $\Omega(mr)$ time and $r$ could be very large (i.e., as big as $n$).
Consequently, rather than explicitly maintaining $G_r$, we maintain an edge-subgraph $H$ of the 
$r'$-threshold graph $G_{r'}$, with $r':=(1+\epsilon)r$.
However, whenever the algorithm reports that there is an independent set in $H$ of size at least $k+1$,
we guarantee that this is also true for the $r$-threshold graph $G_r$.

We exploit the fact that Algorithm~\ref{alg:2_k_MIS} guarantees that the size of the dominating set is small (see Theorem~\ref{th:res_kbmis}). Hence, since during the second phase only the edges in the subgraph induced by the dominating set are needed, we argue 
based on Lemma~\ref{lem:dep_S_size} that during the whole second phase of the algorithm
we maintain $\tilde{O}(k)$ incremental $(1+\epsilon)$-approximate SSSP instances. Furthermore again by
Lemma~\ref{lem:dep_S_size}, we argue that during the whole first phase of the algorithm,
we maintain $\tilde{O}(1)$ incremental $(1+\epsilon)$-approximate SSSP instances. 
As a result, in total we maintain only $\tilde{O}(k)$ incremental $(1+\epsilon)$-approximate SSSP instances over the course of the algorithm,
and this is the main ingredient for the efficiency of the algorithm.
In particular, we prove the following theorem.

\begin{restatable}{theorem}{reskbmisGr}\label{thm:res_kbmisGr}
    Consider a weighted undirected graph $G = (V, E, w)$ subject to edge insertions, an integer $k \geq 1$, a positive parameter $r$ and a positive constant $\epsilon<1$. 
    Let $r' := (1+\epsilon)r$ and consider the threshold graphs $G_r$ and $G_{r'}$. There is a randomized algorithm which:
    \begin{itemize}[topsep=0pt,itemsep=-1ex,partopsep=1ex,parsep=1ex]
       \item either reports that there is an independent set in $G_r$ of size at least $k + 1$, and this is correct w.h.p.,
       \item or finds a dominating set $S \subseteq V$ of size $\tilde{O}(k)$ in an edge-subgraph $H$ of $G_{r'}$ and runs an incremental \kbtors algorithm
       $\mathcal{B}$ on $H[S]$ with the following condition: whenever $\mathcal{B}$ reports that there is an independent
       set in $H$ of size at least $k + 1$, then there is an independent set in $G_r$ of size at least $k + 1$.
    \end{itemize} 
    The total update time of the algorithm is w.h.p.\ $km^{1+o(1)}$.
\end{restatable}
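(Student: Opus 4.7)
The plan is to simulate the two phases of Algorithm~\ref{alg:2_k_MIS} on the implicit threshold graph $G_r$, replacing direct adjacency checks by incremental $(1+\epsilon)$-approximate SSSP queries of Theorem~\ref{th:incr_appr_sssp}. From every vertex $u$ that the simulated algorithm ever samples, I would launch an independent instance of incremental $(1+\epsilon)$-SSSP rooted at $u$; its estimates $\tilde d_G(u,v)$ satisfy $d_G(u,v)\le \tilde d_G(u,v)\le (1+\epsilon)d_G(u,v)$, so the predicate $\tilde d_G(u,v)\le r'$ holds for every $G_r$-edge incident to $u$ (since $d\le r$ yields $\tilde d\le r'$) and implies true distance at most $r'$. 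Thus the ``edges visible'' to the algorithm form a graph $\hat G$ sandwiched as $G_r\subseteq\hat G\subseteq G_{r'}$.

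In the first phase I would sample $S_i\subseteq L_{i-1}$ with exactly the probability used in Algorithm~\ref{alg:2_k_MIS} and set $L_i=\{v\in L_{i-1}:\forall u\in S_i,\ \tilde d_G(u,v)>r'\}$; each edge insertion to $G$ is forwarded to the SSSPs rooted at $S_i$, and a vertex $v$ is removed from $L_i$ as soon as some estimate $\tilde d_G(u,v)$ with $u\in S_i$ drops to at most $r'$. Because $v\in L_i$ has no $\hat G$-neighbor in $S_i$ and $G_r\subseteq\hat G$, vertex $v$ has in particular no $G_r$-neighbor in $S_i$. Since the sampling of $S_i$ is independent both of the adversarial update sequence (which alone determines $G_r$) and of the other SSSPs' randomness, conditioning on $L_{i-1}$ and applying Lemma~\ref{lem:sampling} to $G_r[L_{i-1}]$ gives, w.h.p., that every $v\in L_i$ has $G_r[L_{i-1}]$-degree at most $\gamma_i$; the counting argument of Lemma~\ref{lem:kvert_nomis} then carries over verbatim, and whenever $|L_i|>|L_{i-1}|/2$ we can correctly report a $G_r$-independent set of size at least $k+1$.

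In the second phase I would set $S=\bigcup_{j=1}^{d}S_j\cup L_d$ and make sure an incremental $(1+\epsilon)$-SSSP is running from every vertex of $S$, initializing the at most $4k$ missing instances for vertices of $L_d$. I then define $H$ to be the edge-subgraph of $G_{r'}$ on $V$ whose edges are precisely the pairs $(u,v)$ with $\{u,v\}\cap S\neq\emptyset$ and $\tilde d_G(u,v)\le r'$, and run the dynamic \kbtors algorithm $\mathcal{B}$ of Theorem~\ref{th:fully_dyn_mis} on $H[S]$, feeding it an edge insertion $(u,v)$ whenever $u,v\in S$ and the SSSP estimate between them first crosses $r'$. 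The containment $H\subseteq G_{r'}$ is immediate, and $S$ is a dominating set of $H$ because every $v\in V\setminus S$ was removed from some $L_i$ due to a vertex $u\in S_i\subseteq S$ with $\tilde d_G(u,v)\le r'$, giving $(u,v)\in E(H)$. Moreover, for every pair $u,v\in S$ with $d_G(u,v)\le r$ the bound $\tilde d_G(u,v)\le r'$ forces $(u,v)\in E(H[S])$, so $E(G_r)\cap(S\times S)\subseteq E(H[S])$, and every independent set in $H[S]$ reported by $\mathcal{B}$ is also an independent set in $G_r$.

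For the running time, Lemma~\ref{lem:dep_S_size} gives $d=O(\log n)$ and $|S|=\sum_{j=1}^d|S_j|+|L_d|=\tilde O(k)$ w.h.p., so we launch only $\tilde O(k)$ incremental $(1+\epsilon)$-SSSP instances in total, each with total update time $m^{1+o(1)}$ by Theorem~\ref{th:incr_appr_sssp}, yielding the claimed $km^{1+o(1)}$ bound; the dynamic subroutine $\mathcal{B}$ on $H[S]$ contributes only $\tilde O(k^2)$ additional work since $|E(H[S])|\le |S|^2=\tilde O(k^2)$ and each of its edges is inserted once. The main obstacle I anticipate is implementing the threshold-crossing mechanism cleanly: we must detect the moment an estimate $\tilde d_G(u,v)$ drops to at most $r'$ and route that event simultaneously to the $L_i$-maintenance of the first phase and (in the second phase) to $\mathcal{B}$, while charging the cost against the $m^{1+o(1)}$ amortized guarantee of each SSSP instance rather than against individual threshold crossings.
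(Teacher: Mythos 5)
Your proposal is correct and follows essentially the same approach as the paper's: simulate Algorithm~\ref{alg:2_k_MIS} on the implicit threshold graph via incremental $(1+\epsilon)$-SSSP estimates, report via the hitting-set/degree argument in phase one, and run $\mathcal{B}$ on $H[S]$ in phase two, with correctness reduced to Lemmas~\ref{lem:kvert_nomis_Gr}, \ref{lem:dep_S_size_Gr}, and the fact that $S$ dominates $H\subseteq G_{r'}$ while $E(G_r)\cap(S\times S)\subseteq E(H[S])$. The only organizational deviation is that in phase one you launch a separate SSSP from each sampled vertex rather than maintaining a single super-source SSSP with source $S^{(i)}$ restarted per recursive call; this still fits the $km^{1+o(1)}$ budget since $\sum_j|S_j|=\tilde O(k)$ by Lemma~\ref{lem:dep_S_size_Gr}, though the paper's super-source variant charges phase one only $m^{1+o(1)}$.
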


Based on Observation~\ref{obs:rs_beta_beta1} and by the definitions of dominating set and \kbabrs problem (i.e., Definition~\ref{def:kern_set} and Definition~\ref{def:kbabrs}),
the next corollary immediately follows.

\begin{restatable}{corollary}{correskbmisGr}\label{cor:res_kbmisGr}
Consider the setting of Theorem~\ref{thm:res_kbmisGr}. There is a randomized algorithm which:
\begin{itemize}[topsep=0pt,itemsep=-1ex,partopsep=1ex,parsep=1ex]
   \item either reports that there is an independent set in $G_r$ of size at least $k + 1$, and this is correct w.h.p.,
   \item or runs an incremental \kbttrs algorithm $\mathcal{B}$ on an edge-subgraph $H$ of $G_{r'}$ with the following condition:
    whenever $\mathcal{B}$ reports that there is an independent
    set in $H$ of size at least $k + 1$, then there is an independent set in $G_r$ of size at least $k + 1$.
\end{itemize}
The total update time of the algorithm is w.h.p.\ $km^{1+o(1)}$.
\end{restatable}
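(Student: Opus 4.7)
The plan is to simulate Algorithm~\ref{alg:2_k_MIS} on an implicit graph $H$ satisfying $G_r \subseteq H \subseteq G_{r'}$, where $H$ is never materialized but rather probed through a small number of incremental $(1+\epsilon)$-SSSP instances from Theorem~\ref{th:incr_appr_sssp}. The sandwich $G_r \subseteq H \subseteq G_{r'}$ is what buys correctness: if the simulation ever reports (via Lemma~\ref{lem:kvert_nomis}) an independent set of size $k+1$ in $H$, the same set is independent in $G_r$ because $G_r \subseteq H$; and if the simulation reaches Phase~2, then by Lemma~\ref{lem:S_kern_G} the set $S$ is a dominating set in $H$, and the $k$-bounded ruling set subroutine $\mathcal{B}$ running on $H[S]$ inherits the required ``IS of size $k+1$ in $H$ $\Rightarrow$ IS of size $k+1$ in $G_r$'' guarantee. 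The sandwich itself comes from the $(1+\epsilon)$-approximation property: whenever a maintained SSSP reports $\tilde d_G(u,v) \le r'$ we have $d_G(u,v) \le r'$, so the edge belongs to $G_{r'}$; and whenever $d_G(u,v) \le r$, any SSSP instance that covers the pair $(u,v)$ must output $\tilde d_G(u,v) \le (1+\epsilon)r = r'$, so no true edge of $G_r$ can escape detection.

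For Phase~1, at each recursive call $i$ I would add a virtual super-source $s_i$ joined to every $y \in S_i$ by a zero-weight edge and spawn one instance of the incremental $(1+\epsilon)$-SSSP of Theorem~\ref{th:incr_appr_sssp} from $s_i$. We maintain $L_i = \{x \in L_{i-1} : \tilde d_G(s_i, x) > r'\}$, and whenever the SSSP reports that $\tilde d_G(s_i, x)$ has dropped to at most $r'$ for some $x \in L_i$, we move $x$ out of $L_i$ --- this is exactly the event simulated by the update handler of Algorithm~\ref{alg:2_k_MIS}. By Lemma~\ref{lem:dep_S_size} the recursion depth is $d = O(\log n)$, so Phase~1 spawns only $O(\log n)$ SSSP instances in total.

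For Phase~2, once $|L_d| \le 4k$ we freeze $S := \bigcup_{j} S_j \cup L_d$, which by Lemma~\ref{lem:dep_S_size} has size $\tilde O(k)$. I would then initialize one incremental $(1+\epsilon)$-SSSP from every vertex $v \in S$; the instance from $v$ tells us precisely when $\tilde d_G(v, u) \le r'$ for each $u \in S$, which is the moment a new edge $(u,v)$ joins $H[S]$. Each such edge insertion is forwarded to the fully dynamic \kbtors algorithm $\mathcal{B}$ of Theorem~\ref{th:fully_dyn_mis}, and we return $\mathcal{B}$'s verdict. Since $H[S]$ sustains at most $|S|^2 = \tilde O(k^2)$ edge insertions over the full sequence, the amortized $\tilde O(1)$ cost of $\mathcal{B}$ contributes only $\tilde O(k^2)$, absorbed by the SSSP cost.

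Summing up, the algorithm runs $O(\log n) + \tilde O(k) = \tilde O(k)$ incremental $(1+\epsilon)$-SSSP instances over its lifetime, each with total update time $m^{1+o(1)}$, for an aggregate $km^{1+o(1)}$. I expect the main technical subtlety to be ensuring that the approximate distances we observe are monotonically non-increasing --- so that $H$ is truly incremental and neither the $L_i$ bookkeeping nor $\mathcal{B}$ ever has to undo a decision --- and that every threshold-crossing event $\tilde d_G(\cdot,\cdot) \le r'$ is delivered by the SSSP in an event-driven fashion whose total work is charged to the SSSP's $m^{1+o(1)}$ budget rather than paid for separately by us. The monotonicity issue can be resolved by post-processing each SSSP estimate with a running minimum, which preserves the $(1+\epsilon)$-approximation because true distances in an incremental $G$ can only decrease.
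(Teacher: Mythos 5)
Your proposal is correct and follows essentially the same route as the paper: implicit simulation of Algorithm~\ref{alg:2_k_MIS} on an approximate threshold graph, one incremental SSSP instance per recursion level in Phase~1, one per $v\in S$ in Phase~2, feeding the detected $H[S]$-edges to $\mathcal{B}$, with the $O(\log n) + \tilde O(k)$ SSSP-instance count driving the $km^{1+o(1)}$ bound. Two points are worth flagging. First, a benign deviation: you attach the Phase-1 super-source $s_i$ only to the \emph{current} hitting set $S_i$, whereas the paper's Algorithm~\ref{alg:incr_kbttrs_Gr} uses the cumulative $S^{(i)}=\bigcup_{j\le i} S_j$. Both variants satisfy the degree bound needed for Lemma~\ref{lem:kvert_nomis_Gr} (the sampling lemma is applied to $S_i$ alone there), and your single-level choice in fact makes the nesting $L_0\supseteq L_1\supseteq\cdots$ hold more literally, so the ``minimum index'' argument of Lemma~\ref{lem:S_kern_G} goes through unchanged; the paper's cumulative super-source is equally fine but relies on the cleaner characterization $L_d=\{x : \delta_{S^{(d)}}(x)>r'\}$ to get domination. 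Second, an imprecision: the stated sandwich $G_r\subseteq H\subseteq G_{r'}$ is not literally true for the $H$ the algorithm constructs, since $H$ contains only \emph{probed} edges (those incident to hitting-set vertices and pairs inside $S\times S$), so $G_r\not\subseteq H$ in general. What you actually use, and what the paper proves directly, is the restricted statement: for any pair $(u,v)$ with both endpoints in $S$, $d_G(u,v)\le r$ forces $\tilde d(u,v)\le r'$ for the SSSP from $u$, hence $(u,v)\in E(H[S])$; this gives $G_r[S]\subseteq H[S]$, which is exactly what's needed for $\mathcal{B}$'s verdict, and for Phase~1 Lemma~\ref{lem:kvert_nomis_Gr} argues about $G_r$ directly (not through $H$). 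Your underlying probing reasoning is sound; only the global ``$G_r\subseteq H$'' framing overstates it.
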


\subsubsection{Overview of the Algorithm}
In the following, we describe the algorithm of Theorem~\ref{thm:res_kbmisGr}
which is an adaptation of Algorithm~\ref{alg:2_k_MIS} on approximate $r$-threshold graphs. Specifically, we adapt Algorithm~\ref{alg:2_k_MIS} to process an approximate
$r$-threshold graph $G_r$ implicitly. A pseudocode of the algorithm is provided in Algorithm~\ref{alg:incr_kbttrs_Gr}.

\begin{algorithm}[ht!]
\DontPrintSemicolon
\caption{\textsc{\kbttrs on $G_r$}}
\label{alg:incr_kbttrs_Gr}
\SetKwProg{Fn}{Procedure}{:}{\KwRet}
\SetKwFunction{FKBRS}{$k$-Bounded-Ruling-Set}
\SetKwFunction{FUpdate}{Insert}

\setcounter{AlgoLine}{0}

\tcp{In the preprocessing $i = 0$, $L_0 = V$, and $\FKBRS{}$ is called with no edge}

$r' \gets (1+\epsilon)r$

$H \gets (V, E_H)$ where $E_H \gets \emptyset$

\Fn{\FKBRS{$u,v$}}{
    \If(\tcp*[h]{first phase}){$|L_i| < 4k$}{
        \If{$i=0$ or $|L_i|\le \frac{|L_{i-1}|}{2}$}{
            $i \gets i+1$

            $\gamma_i \gets \frac{|L_{i-1}|}{2k} - 1$

            $S_i \gets$ sample vertices of $L_{i-1}$ independently with prob. $\min(10\ln(n) / \gamma_i, 1)$

            $S^{(i)} \gets \bigcup_{j=1}^{i} S_j$

            $\mathcal{A}_{S^{(i)}}.initialize(G, S^{(i)})$
            \tcp*{$\mathcal{A}_{S^{(i)}}$ is incremental approx. SSSP algorithm}
            \tcp*{$\mathcal{A}_{S^{(i)}}$ provides approx. distance $\delta_{S^{(i)}}(\cdot)$}

            $L_i \gets \{x \in V : \delta_{S^{(i)}}(x) > r'\}$

            $\FKBRS{u,v}$
        }
        \Else{
            \textbf{report} there exists an IS in $G_r$ of size at least $k+1$
        }
    }
    \Else(\tcp*[h]{second phase}){
        \If(\tcp*[h]{$\mathcal{B}$ is dynamic \kbtors algorithm (Theorem~\ref{th:fully_dyn_mis})}){$\mathcal{B}$ not initialized}{
            $d \gets i$
            
            $S \gets \bigcup_{j=1}^d S_j \cup L_d$
    
            \For{$s \in S$}{
                $\mathcal{A}_s.initialize(G,s)$
                \tcp*{$\mathcal{A}_s$ is incremental approx. SSSP algorithm}
                \tcp*{$\mathcal{A}_s$ provides approx. distance $\delta_s(\cdot)$}
            }

            $E_S \gets \{(u, v) \in S \times S : \delta_u(v) \le r'\}$ \tcp*{Edges of $H[S]$}
            
            $H \gets (V, E_H \cup E_S)$
            
            $\mathcal{B}.initialize(H[S])$
        }
        \Else{
            \For{$s \in S$}{
                $\mathcal{A}_s.insert(u,v)$
            }
            
            \While {$\exists\, a, b \in S$ such that $(a, b) \notin E_{S}$ and $\delta_a(b) \leq r'$} {
                $E_S \gets E_S \cup \{(a, b)\}$ \tcp*{$H[S] := (S, E_S)$}
                
                $\mathcal{B}.insert(H[S], a, b)$
            }
        }
    }
}

\vspace{1em}

\Fn{\FUpdate{$u, v$}}{
    $G \gets (V,\, E \cup \{u,v\})$
    
    $\mathcal{A}_{S^{(i)}}.insert(u,v)$
    
    \While {$\exists\, x \in L_i$ such that $\delta_{S^{(i)}}(x) \leq r'$} {
        $L_i \gets L_i \setminus \{x\}$
    }

    $\FKBRS{u,v}$
}
\end{algorithm}
\antonis{why $L_d$ is missing in second phase in the construction of $S$?}
\antonis{Update $H$ and $E_S$ in the pseudocode?}
\antonis{Replaced $\exists$ in line 31 of Insert()}
For a fixed value of $r$ and $\epsilon$, let $r' := (1+\epsilon)r$. Also let $H$ be an initially empty graph with vertex set $V$. Consider a recursive call $i \geq 1$ of Algorithm~\ref{alg:2_k_MIS} during the first phase. 
The sampling step for obtaining the set $S_i$ does not need access to the edges of the input graph,
but only to the vertices of the input graph. Thus, each hitting set $S_i$ can be explicitly constructed. In turn, 
the union of the sampled sets $S^{(i)} = S_1 \cup \cdots \cup S_i$ is explicitly constructed as well.

The next step of Algorithm~\ref{alg:2_k_MIS} is to compute the size of $L_i$, 
and decide how to proceed with the recursion depending on the sizes of $L_{i-1}$ and $L_i$.
A simulation of Algorithm~\ref{alg:2_k_MIS} on $G_r$ would construct the set $L_i$ as the set
of vertices which are of distance more than $r$ in $G$ from their closest vertex in $S_i$. 
Nevertheless, as we use an approximate SSSP algorithm, we construct the set $L_i$ in a slightly
different way as follows.
In the beginning of the recursive call, we set $S = S^{(i)}$ and $L_i = L_{i-1} \setminus S_i$.
At this point, we maintain the incremental $(1+\epsilon)$-SSSP algorithm of Theorem~\ref{th:incr_appr_sssp}
with super-source $S$ on $G$, providing distance estimates $\delta_S(\cdot)$.\footnote{Namely, we introduce a fake root $x$ and 
add an edge $(x, v)$ of zero weight, for every $v \in S$. Then, we run the approximate SSSP algorithm
with source $x$ on $G$.} 
Whenever the distance estimate $\delta_S(v)$ of a vertex $v \in V$ becomes smaller than $(1+\epsilon)r$,
we remove $v$ from $L_i$ and add the edge $(a, v)$ to $H$, where $a$ is the corresponding vertex of $S$ for the distance estimate $\delta_S(v)$.
Therefore, we continue with the recursion as in Algorithm~\ref{alg:2_k_MIS}
by constructing the set $L_i$ in this way, and in turn computing its size. Moreover, since at every new recursive call the set $S^{(i)}$ is modified (i.e., we may sample more vertices), at every new recursive call we set $S = S^{(i)}$
and we restart the incremental 
$(1+\epsilon)$-SSSP algorithm with super-source $S$ on $G$.

\smallskip
As in Algorithm~\ref{alg:2_k_MIS}, the second phase begins when the size of $L_i$ is at most $4k$, and at this moment the recursion ends.
We denote by $d$ the index of the last recursive call in the first phase,
and $S$ is updated to $S := S^{(d)} \cup L_d$.
During the second phase, Algorithm~\ref{alg:2_k_MIS} has to maintain an incremental \kbtors 
algorithm on $G_r[S]$. Instead, we maintain an incremental \kbtors algorithm on a subgraph $H[S]$ of $G_{r'}[S]$. The subgraph $H[S]$ is maintained explicitly, as follows.
Let $E_S$ be the initially empty set consisting of the edges in $H[S]$.
For each vertex $v \in S$, we maintain the incremental $(1+\epsilon)$-SSSP algorithm of 
Theorem~\ref{th:incr_appr_sssp} with source $v$ on $G$,
providing distance estimates $\delta_v(\cdot)$. Then for any two vertices $u, v \in S$, 
whenever we have that $\delta_u(v) \leq (1+\epsilon)r$ or $\delta_v(u) \leq (1+\epsilon)r$,
the edge $(u, v)$ is added to $E_S$. \antonis{In pseudocode we only check $\delta_u(v)$, but here also $\delta_v(u)$. Are the two descriptions equivalent?}

Thus during the second phase, we maintain the incremental \kbtors algorithm $\mathcal{B}$ of Theorem~\ref{th:fully_dyn_mis} on
$(S, E_S)$. Whenever $\mathcal{B}$ reports that there is an independent set in $(S, E_S)$ of size at least $k + 1$,
we report that there is an independent set in $G_r$ of size at least $k + 1$.
In the analysis we argue that $S$ is a dominating set in $H$. Hence, this implies
that $\mathcal{B}$ solves the incremental \kbttrs problem in $H$.

\paragraph{Edge insertions.}
Consider an edge insertion to $G$, and let $i$ be the current recursive call of the algorithm before 
the update arrives. While the algorithm is in the first phase, the update is passed
to the incremental $(1+\epsilon)$-SSSP algorithm with super-source $S^{(i)}$, and the corresponding
set $L_i$ is updated accordingly. If the algorithm is in (or enters) the second phase after an edge insertion, 
then for every vertex $v \in S$, the inserted edge is passed as an update to the incremental 
$(1+\epsilon)$-SSSP algorithm with source $v$.
Notice that an edge insertion in the original graph $G$ could introduce multiple updates to $E_S$. Thus whenever an edge $(a, b)$ is added to $E_S$, the edge $(a, b)$ is passed as an update to the incremental \kbtors algorithm $\mathcal{B}$ running on $(S, E_S)$.

\subsection{Analysis of the Algorithm}
Our goal here is to prove Theorem~\ref{thm:res_kbmisGr}. 
Note that if we had access to exact distances and we removed a vertex $v$ from $L_i$ whenever its distance estimate is at most $r$, then the correctness would follow from the arguments of the previous section. 
However, since for efficiency purposes we are utilizing approximate distances, the analysis has to be adapted.
In our case we remove a vertex $v$ from $L_i$ whenever its approximate distance estimate is at most $r'=(1+\epsilon)r$.
The next lemma is similar to Lemma~\ref{lem:kvert_nomis} but now applied to the $r$-threshold graph $G_r$.

\begin{lemma} \label{lem:kvert_nomis_Gr}
    At any stage of the algorithm with $i \geq 1$, if $|L_i| > \frac{|L_{i-1}|}{2}$, then w.h.p.\ there is an independent in $G_r$ of size at least $k + 1$.
\end{lemma}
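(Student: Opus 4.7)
The plan is to adapt the argument of Lemma~\ref{lem:kvert_nomis}, with the key adjustment that $L_i$ is now defined via approximate distances rather than by direct membership/neighbor checks in an explicit graph. The first step is a ``distance-slack'' observation: if $x \in L_i$, then $\delta_{S^{(i)}}(x) > r' = (1+\epsilon) r$, and since the incremental $(1+\epsilon)$-approximate SSSP algorithm of Theorem~\ref{th:incr_appr_sssp} satisfies $d_G(x, S^{(i)}) \le \delta_{S^{(i)}}(x) \le (1+\epsilon)\, d_G(x, S^{(i)})$, we conclude that $d_G(x, S^{(i)}) > r$. In particular $d_G(x, s) > r$ for every $s \in S_i \subseteq S^{(i)}$, so in the threshold graph $G_r$ no vertex of $L_i$ has a neighbor in $S_i$. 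An analogous argument shows $L_i \subseteq L_{i-1}$, so $G_r[L_i]$ is an induced subgraph of $G_r[L_{i-1}]$.

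Next I would apply the hitting-set Lemma~\ref{lem:sampling} to the graph $G_r[L_{i-1}]$ with threshold $\gamma_i = \tfrac{|L_{i-1}|}{2k}-1$, using the fact that $S_i$ is obtained by sampling each vertex of $L_{i-1}$ independently with probability $\min(c\ln(n)/\gamma_i, 1)$ for a sufficiently large constant $c$. This gives that w.h.p.\ every vertex of $L_{i-1}$ with degree greater than $\gamma_i$ in $G_r[L_{i-1}]$ has a neighbor in $S_i$ (within $G_r$). Combined with the previous paragraph, w.h.p.\ every vertex in $L_i$ has degree at most $\gamma_i$ in $G_r[L_{i-1}]$, hence at most $\gamma_i$ in the subgraph $G_r[L_i]$ as well.

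Finally I would run the counting argument. Assume $|L_i| > \tfrac{|L_{i-1}|}{2}$. For any candidate set $T \subseteq L_i$ of size $k$, the number of vertices that are either in $T$ or adjacent to $T$ in $G_r[L_i]$ is at most
\[
    k(\gamma_i + 1) \;=\; k \cdot \tfrac{|L_{i-1}|}{2k} \;=\; \tfrac{|L_{i-1}|}{2} \;<\; |L_i|,
\]
so some vertex of $L_i$ is neither in $T$ nor a neighbor of any vertex in $T$ in $G_r[L_i]$. Hence $T$ is not a maximal independent set in $G_r[L_i]$, and therefore $G_r[L_i]$ contains an independent set of size at least $k+1$; since $G_r[L_i]$ is an induced subgraph of $G_r$, this set is also an independent set in $G_r$ of size at least $k+1$.

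The main subtlety I expect is the handling of the $(1+\epsilon)$-approximation in distances: the whole reason for defining $L_i$ via the threshold $r' = (1+\epsilon) r$ rather than $r$ is to guarantee that vertices left in $L_i$ have true distance strictly greater than $r$ from $S^{(i)}$, so that Lemma~\ref{lem:sampling} (applied in $G_r$) can be invoked exactly as in the exact-distance setting of Lemma~\ref{lem:kvert_nomis}. Once this is in place, the rest is a verbatim translation of the earlier counting argument.
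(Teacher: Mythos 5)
Your proof is correct and follows essentially the same route as the paper's: both apply Lemma~\ref{lem:sampling} to $G_r[L_{i-1}]$, translate membership in $L_i$ (via the $(1+\epsilon)$-SSSP guarantee and the slack $r' = (1+\epsilon)r$) into the absence of a $G_r$-neighbor in $S_i$, conclude the degree bound $\gamma_i$ in $G_r[L_i]$, and then repeat the counting argument of Lemma~\ref{lem:kvert_nomis}. The only cosmetic difference is that you state the distance-estimate implication in contrapositive form ($x\in L_i\Rightarrow d_G(x,S^{(i)})>r$), whereas the paper argues directly that a vertex with $d_G(v,S_i)\le r$ would already have been removed from $L_i$; these are logically equivalent.
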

\begin{proof}
    The threshold $\gamma_i$ is set to $\frac{|L_{i-1}|}{2k} - 1$,
    and $S_i$ is obtained by sampling each vertex of $L_{i-1}$ independently  
    with probability $\min(c\ln(n) / \gamma_i, 1)$, for a sufficiently large constant $c$.
    Then by Lemma~\ref{lem:sampling}, it holds that w.h.p.\ every vertex $v$
    in $L_{i-1}$ of degree more than $\gamma_i$ in the induced subgraph $G_r[L_{i-1}]$ has a neighbor in $S_i$. 
    This is equivalent of saying that w.h.p.~every vertex $v$ of degree more than $\gamma_i$ in $G_r[L_{i-1}]$ is within distance $r$ from a vertex of $S_i$ in $G$, that is, $d_G(v, S_i) \leq r$. Then, by Theorem~\ref{th:incr_appr_sssp}
    we have that $\delta_S(v) \leq (1 + \epsilon)r$, which means that $v$ has been removed from $L_i$.
    In turn, this implies that w.h.p.~every vertex in $L_i$ is of
    degree at most $\gamma_i$ in $G_r[L_{i-1}]$. 
    As $G_r[L_i]$ is a subgraph of $G_r[L_{i-1}]$, w.h.p.~every vertex of $G_r[L_i]$ is of degree at most $\gamma_i$ in $G_r[L_i]$ as well.
    
    Since w.h.p.~the maximum degree in $G_r[L_i]$ is bounded by $\gamma_i$, the claim follows by applying the same process of the second paragraph of Lemma~\ref{lem:kvert_nomis} on $G_r[L_i]$.
\end{proof}

Notice that after computing the size of $L_i$, the recursion continues in the same way as in Algorithm~\ref{alg:2_k_MIS}. The next lemma says that Lemma~\ref{lem:dep_S_size} holds in this algorithm as well. Recall that $d$ is the recursive call after the first phase has ended and just before the second phase begins (i.e., $d$ is the final depth of the recursion). 

\begin{lemma} \label{lem:dep_S_size_Gr}
    Over the sequence of updates, there are $d=O(\log n)$ recursive calls. Moreover, the size of $S$ is w.h.p.~$O(k \log^2 n)$.
\end{lemma}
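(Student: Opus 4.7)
The plan is to observe that the proof of Lemma~\ref{lem:dep_S_size} transfers almost verbatim to this setting, since the sampling procedure and the recursion criterion are identical; the only difference is how the sets $L_i$ are constructed (using approximate SSSP estimates rather than direct edge queries), but this difference does not affect the sizes of the $S_i$ nor the condition controlling the recursion depth.

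For the first claim (depth bound), I would note that a new recursive call at level $i+1$ is only triggered (in the pseudocode of Algorithm~\ref{alg:incr_kbttrs_Gr}) when $|L_i| \le |L_{i-1}|/2$; in any other case the algorithm either reports an independent set of size $\ge k+1$ in $G_r$ or proceeds to the second phase. Starting from $|L_0| = n$, this geometric shrinkage immediately yields $d = O(\log n)$.

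For the second claim (size of $S$), I would bound the expected size of each $S_i$ by exploiting the fact that recursive call $i$ only occurs when $|L_{i-1}| > 4k$, so $\gamma_i = |L_{i-1}|/(2k) - 1 \ge |L_{i-1}|/(4k)$. Then the expected size of the sampled set is
\[
\mathbb{E}[|S_i|] \le |L_{i-1}| \cdot \frac{c \ln n}{\gamma_i} = \frac{2k \cdot c\ln n}{1 - 2k/|L_{i-1}|} < 4k \cdot c\ln n.
\]
Since $|L_d| \le 4k$ and $d = O(\log n)$ by the first claim, linearity of expectation gives $\mathbb{E}[|S|] = |L_d| + \sum_{i=1}^{d} \mathbb{E}[|S_i|] = O(k \log^2 n)$. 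Because the $|S_i|$ are sums of independent Bernoulli trials, a standard Chernoff bound yields $|S| = O(k \log^2 n)$ with high probability.

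The main (mild) subtlety is that the sizes of the sets $L_i$ here are determined by approximate distance estimates $\delta_{S^{(i)}}(\cdot) \le r'$ rather than exact distances, so a priori the recursion could behave differently than in Algorithm~\ref{alg:2_k_MIS}. However, the depth-bound argument only depends on the rule in Line~5, and the size-bound argument only depends on the sampling probability and on $|L_{i-1}| > 4k$ at the moment of sampling, which is precisely the condition gating the recursion. Thus neither argument is sensitive to whether $L_i$ was constructed from exact or approximate distances, and the lemma follows by the same reasoning as Lemma~\ref{lem:dep_S_size}.
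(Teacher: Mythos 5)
Your proof is correct and takes essentially the same approach as the paper, which states the lemma without a separate proof precisely because the argument from Lemma~\ref{lem:dep_S_size} carries over verbatim. Your explicit observation that the depth bound depends only on the halving rule and the size bound only on the sampling probability and the gate $|L_{i-1}| > 4k$ --- neither of which is sensitive to whether $L_i$ is built from exact or approximate distances --- is exactly the justification the paper leaves implicit.
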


Remember that we want to use Algorithm~\ref{alg:incr_kbttrs_Gr} as a subroutine in the
incremental $k$-center algorithm. By using the next property of $H$, we argue that only an extra $(1 + \epsilon)$ factor 
shows up in the approximation ratio of the $k$-center algorithm.

\begin{lemma}
    The graph $H$ is a subgraph of the $r'$-threshold graph $G_{r'}$, where $r'= (1+\epsilon)r$. 
\end{lemma}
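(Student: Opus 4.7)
The plan is to trace the two events in Algorithm~\ref{alg:incr_kbttrs_Gr} where an edge is ever inserted into $H$, and argue in each case that the endpoints of the inserted edge are within distance $r'$ in $G$. Since $H$ and $G_{r'}$ share the common vertex set $V$, showing $E(H) \subseteq E(G_{r'})$ is enough.

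First, I would recall the guarantee of the incremental $(1+\epsilon)$-SSSP subroutine of Theorem~\ref{th:incr_appr_sssp}: for any source $s$ and any vertex $x$, the distance estimate satisfies $d_G(s,x) \leq \delta_s(x) \leq (1+\epsilon) d_G(s,x)$ (and analogously for the super-source variants $\delta_{S^{(i)}}(\cdot)$, since the super-source is implemented by attaching a zero-weight vertex to every $s \in S^{(i)}$). The key direction for us is the lower sandwich $d_G(\cdot,\cdot) \leq \delta(\cdot,\cdot)$, i.e. the approximate distance is an overestimate of the true distance.

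Next I would enumerate the two insertion events. During the first phase, whenever a vertex $v$ is removed from $L_i$ because $\delta_{S^{(i)}}(v) \leq r'$, the algorithm inserts the edge $(a,v)$ into $H$, where $a \in S^{(i)}$ realizes this approximate distance. Then $d_G(a,v) \leq \delta_{S^{(i)}}(v) \leq r'$, so $(a,v) \in E(G_{r'})$. During the second phase, an edge $(a,b)$ is inserted into $E_S$ (and therefore into $H$) only when $a,b \in S$ and $\delta_a(b) \leq r'$; again $d_G(a,b) \leq \delta_a(b) \leq r'$, so $(a,b) \in E(G_{r'})$.

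Combining both cases, every edge of $H$ lies in $E(G_{r'})$, and since $V(H) = V = V(G_{r'})$, we conclude $H$ is a (edge-)subgraph of $G_{r'}$. There is no real obstacle here; the only thing one must be careful about is that the overestimating direction of $(1+\epsilon)$-approximate SSSP is being used, rather than the underestimating one, so that a small approximate distance certifies a small true distance. Note that the argument holds at every time step and is independent of any high-probability events in the rest of the algorithm, so the containment $H \subseteq G_{r'}$ is deterministic.
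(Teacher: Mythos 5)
Your proof is correct and follows essentially the same route as the paper: you identify the two events in which an edge is added to $H$ (removal from $L_i$ in the first phase, addition to $E_S$ in the second phase), and in each case invoke the non-underestimation property of the $(1+\epsilon)$-SSSP data structure to bound the true distance between the endpoints by $r'$. Your extra remarks about the argument being deterministic and about which direction of the sandwich is actually needed are accurate but do not change the substance.
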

\begin{proof}
    Let $(u, v) \in E(H)$ be an edge of the graph $H$. Assume that the edge has been 
    added during the first phase of the algorithm. Then, w.l.o.g. it must be the case that 
    $u \in S$ and $\delta_S(v) \leq r'$.
    Based on Theorem~\ref{th:incr_appr_sssp}, 
    the distance estimate $\delta_S(\cdot)$ does not underestimate the distances, and so
    we have that $d_G(u, v) \leq \delta_S(v) \leq r'$. 
    Thus by definition, the edge $(u, v)$ is part of $G_{r'}$ as well.

    Similarly, assume that the edge has been added during the second phase of the algorithm. Then, w.l.o.g. it must be the case that $u, v \in S$ and $\delta_u(v) \leq r'$.
    Using a similar argument as before, we conclude that every edge of $H$ is part of $G_{r'}$.
\end{proof}

During the second phase, whenever the incremental \kbtors algorithm $\mathcal{B}$ reports that there is an
independent set in $H$ of size at least $k+1$, we report that there is an independent set in $G_r$ 
of size at least $k + 1$. Since algorithm $\mathcal{B}$ is running on $H[S]$, the following
lemma states that in this case, there is definitely an independent set in $G_r$
(and not just with high probability) of size at least $k + 1$.

\begin{lemma}
    Any independent set in $H[S]$ is also an independent set in $G_r[S]$.
\end{lemma}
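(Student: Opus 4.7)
The plan is to argue the contrapositive at the edge level: for every pair $u, v \in S$ with $u \ne v$, if $(u,v) \notin E(H[S])$, then $(u,v) \notin E(G_r[S])$; taking the contrapositive over every pair in an independent set $M \subseteq S$ of $H[S]$ then immediately yields that $M$ is also independent in $G_r[S]$.

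First, I would invoke the guarantees of the incremental $(1+\epsilon)$-SSSP subroutines $\mathcal{A}_s$ (Theorem~\ref{th:incr_appr_sssp}) maintained from each $s \in S$ during the second phase, namely that the estimate satisfies $d_G(s,x) \le \delta_s(x) \le (1+\epsilon)\, d_G(s,x)$ for every vertex $x$. Second, I would establish the invariant that at every time step during the second phase,
\[
 E_S \supseteq \bigl\{(a,b) \in S \times S : \delta_a(b) \le r'\bigr\}.
\]
This follows by a short induction on the update sequence from Algorithm~\ref{alg:incr_kbttrs_Gr}: at initialization of the second phase, $E_S$ is defined to be exactly this set; after every subsequent edge insertion, the inner \textbf{while} loop keeps adding pairs $(a,b)$ to $E_S$ as long as $\delta_a(b)\le r'$ holds for some pair not already in $E_S$, so the invariant is restored before the procedure returns.

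Combining these two ingredients gives the claim. Suppose $u, v \in S$ satisfy $d_G(u,v) \le r$; then by the approximation guarantee of $\mathcal{A}_u$ we have $\delta_u(v) \le (1+\epsilon)\,d_G(u,v) \le (1+\epsilon) r = r'$, so the invariant forces $(u,v) \in E_S$, i.e., $(u,v) \in E(H[S])$. Equivalently, if $(u,v) \notin E(H[S])$, then $d_G(u,v) > r$, so $(u,v) \notin E(G_r[S])$. Applying this to every pair of vertices in an independent set $M$ of $H[S]$ finishes the proof.

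The only minor subtlety is ensuring the invariant holds \emph{currently} (not just initially), which is why the \textbf{while} loop is exhaustive rather than a single scan; otherwise the argument is routine given the approximation bound of Theorem~\ref{th:incr_appr_sssp} and the fact that $H[S]$ is built purely from the edges certified by the distance estimates $\delta_s(\cdot)$.
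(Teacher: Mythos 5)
Your proof is correct and takes essentially the same approach as the paper: both argue that if $d_G(u,v)\le r$ for $u,v\in S$, then the $(1+\epsilon)$-SSSP guarantee forces $\delta_u(v)\le r'$ and hence $(u,v)\in E_S$, so that no $G_r[S]$-edge between two $M$-vertices can be absent from $H[S]$. The only cosmetic difference is that you phrase it as a contrapositive with an explicitly stated invariant on $E_S$, while the paper phrases it as a proof by contradiction; the ingredients and the key step are identical.
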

\begin{proof}
    Let $M$ be an independent set in $H[S]$, and suppose to the contrary that $M$ is not an independent
    set in $G_r[S]$. Then there must exist two vertices $u, v \in S \cap M$, such that 
    the edge $(u, v)$ belongs to $G_r[S]$ but not to $H[S]$. Since $(u, v)$ belongs to $G_r$, the distance between $u$ and $v$ in $G$ is at most $r$ (i.e., $d_G(u, v) \leq r$). Also as $u \in S$, in the algorithm
    we maintain the incremental $(1+\epsilon)$-SSSP algorithm with source $u$ on $G$, and by Theorem~\ref{th:incr_appr_sssp} it holds that $\delta_u(v) \leq (1+\epsilon)d_G(u, v) \leq (1+\epsilon)r$.
    Hence as $v \in S$, the algorithm must have added the edge $(u, v)$ to $H$, which contradicts
    the assumption that the edge $(u, v)$ does not belong to $H[S]$.
\end{proof}

\begin{lemma} \label{lem:kern_set_H}
    The set $S$ is a dominating set in $H$.
\end{lemma}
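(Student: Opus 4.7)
The plan is to show that every vertex $v \in V \setminus S$ acquires an edge to some element of $S$ in the graph $H$ during the first phase of the algorithm, and that this edge is never removed since we are in the incremental setting.

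Fix any $v \in V \setminus S$. Recall that $S = \bigcup_{j=1}^d S_j \cup L_d$, so in particular $L_d \subseteq S$; combined with $v \notin S$ this gives $v \notin L_d$. Since $v \in V = L_0$ and $v \notin L_d$, there exists a minimum index $i \in \{1,\dots,d\}$ with $v \notin L_i$, and by minimality $v \in L_{i-1}$. Moreover, $v \notin S$ implies $v \notin S_i \subseteq S$.

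Now I would inspect how $v$ could have been removed from $L_i$. At the beginning of the $i$-th recursive call, $L_i$ is initialized as $\{x \in V : \delta_{S^{(i)}}(x) > r'\}$, and subsequently a vertex $x$ is removed from $L_i$ exactly when its approximate distance estimate $\delta_{S^{(i)}}(x)$ drops to at most $r'$. Since $v \in L_{i-1} \setminus S_i$ but $v \notin L_i$, vertex $v$ must have been removed from $L_i$ at some moment, and at that moment $\delta_{S^{(i)}}(v) \leq r'$. By the construction of $H$ in the first phase (as stated in the description preceding the pseudocode of Algorithm~\ref{alg:incr_kbttrs_Gr}), this removal event triggers the insertion into $H$ of the edge $(a,v)$, where $a \in S^{(i)}$ is the source achieving the distance estimate.

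Since $S^{(i)} \subseteq \bigcup_{j=1}^d S_j \subseteq S$, we have $a \in S$, so $v$ has a neighbor in $S$ in $H$. Because $H$ only gains edges over time, this neighbor persists throughout the algorithm. Hence every vertex not in $S$ has a neighbor in $S$ in $H$, which is exactly the definition of $S$ being a dominating set in $H$. I do not expect any serious obstacle here: the argument is essentially the same bookkeeping as in Lemma~\ref{lem:S_kern_G}, with the crucial adjustment that the ``neighbor in $S$'' relation is witnessed not by an edge of $G$ but by an edge of $H$ that the algorithm explicitly installs whenever a vertex leaves some $L_i$.
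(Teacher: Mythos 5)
Your proof is correct and takes essentially the same approach as the paper, which simply states that Lemma~\ref{lem:kern_set_H} follows by applying the proof of Lemma~\ref{lem:S_kern_G} to $H$; you spell out the needed adaptation, namely that each removal of $v$ from some $L_i$ in the $G_r$-variant is witnessed by an edge $(a,v)$ with $a \in S^{(i)} \subseteq S$ that the algorithm explicitly inserts into $H$, and which persists in the incremental setting.
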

\begin{proof}
    The claim follows by applying the 
    proof of Lemma~\ref{lem:S_kern_G} on $H$.
\end{proof}

\paragraph{Running time.}
During the first phase, the incremental $(1+\epsilon)$-SSSP algorithm with super-source $S^{(i)}$ on $G$, is restarted as many times
as the number of the recursive calls. By Lemma~\ref{lem:dep_S_size_Gr}, there are at most $O(\log n)$ recursive calls in total,
and by Theorem~\ref{th:incr_appr_sssp}, the total update time of the incremental $(1+\epsilon)$-SSSP algorithm is $m^{1 + o(1)}$. 
Thus, the total update time charged for the first phase of the algorithm is $m^{1 + o(1)}$.

During the second phase, for every vertex $v \in S$, we maintain the incremental 
$(1+\epsilon)$-SSSP algorithm of Theorem~\ref{th:incr_appr_sssp} with source $v$ on $G$.
By Lemma~\ref{lem:dep_S_size_Gr}, the size of $S$ is w.h.p.\ $\tilde{O}(k)$, and so the total
update time for maintaining the edge set $E_S$ is $km^{1 + o(1)}$.

\begin{observation} \label{obs:E_S_nondec}
    Since $G$ is subject to edge insertions, the edge set $E_S$ of $H[S]$ is non-descreasing.
\end{observation}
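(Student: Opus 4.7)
The plan is to inspect the second phase of Algorithm~\ref{alg:incr_kbttrs_Gr} and show that no mechanism in the algorithm ever removes an edge from $E_S$, so $E_S$ can only grow. First I would observe that the set $S$ itself is frozen once the second phase is entered: $S$ is assigned $\bigcup_{j=1}^d S_j \cup L_d$ at initialization of~$\mathcal{B}$ and, by the condition guarding the first phase, the algorithm never re-enters the first phase once $|L_i|\le 4k$ is reached. So the vertex set of $H[S]$ is fixed and the question reduces to how $E_S$ evolves under edge insertions to $G$.

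Second, I would examine every place in \textsc{Insert} and \textsc{$k$-Bounded-Ruling-Set} where $E_S$ is touched during the second phase. The only assignment is $E_S \gets E_S \cup \{(a,b)\}$, executed inside the \textbf{while} loop that looks for pairs $a,b \in S$ with $(a,b)\notin E_S$ and $\delta_a(b)\le r'$. There is no line that ever removes an element from $E_S$, and the initialization step likewise only populates $E_S$. Hence syntactically $|E_S|$ is monotonically non-decreasing along any execution path; no deletions are possible. This is the essential point, and it is straightforward given how the pseudocode is written.

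The only thing one could worry about is whether, due to the use of the approximate incremental SSSP, a pair $(a,b)$ might be added to $E_S$ at some time and ``lose'' its certifying inequality $\delta_a(b)\le r'$ later, in which case one might be tempted to remove it. I would rule this out by noting that the incremental $(1+\epsilon)$-SSSP algorithm of Theorem~\ref{th:incr_appr_sssp} runs on a graph $G$ that is only undergoing edge insertions, so the true distances $d_G(\cdot,\cdot)$ are non-increasing; consequently the distance estimates $\delta_a(\cdot)$ maintained by that algorithm are non-increasing as well. Therefore once $\delta_a(b)\le r'$ holds at the time $(a,b)$ is inserted into $E_S$, the inequality continues to hold at all later times, and the edge never becomes ``invalid.'' I do not anticipate a hard step; the statement is essentially an invariant of the pseudocode combined with the incremental nature of the SSSP subroutine.
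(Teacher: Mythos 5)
Your proposal is correct and captures precisely what the paper leaves implicit: the pseudocode only ever augments $E_S$, and the reason this design is valid is that edge insertions into $G$ can only shrink distances, so the certifying inequality $\delta_a(b)\le r'$ is never invalidated once it holds. The paper states this as an observation without a separate proof, and your argument is the natural unpacking of the phrase "Since $G$ is subject to edge insertions."

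One small remark on the last step: strictly speaking, the observation only requires (i) that $S$ is frozen after the first phase and (ii) that the pseudocode never executes a removal from $E_S$ — both of which you establish in the first two paragraphs. The claim that the estimates $\delta_a(\cdot)$ themselves are monotonically non-increasing, while true for the ES-tree/hopset-based incremental SSSP algorithms cited (Theorem~\ref{th:incr_appr_sssp}), is not guaranteed by the generic $(1+\epsilon)$-approximation interface alone; some correct approximate SSSP implementation could in principle raise an estimate while remaining within the approximation band. Fortunately your argument does not actually need estimate monotonicity — it suffices that $d_G(a,b)$ is non-increasing, which already guarantees that any edge placed in $H$ genuinely lies in $G_{r'}$ forever, and that the algorithm has no correctness reason to remove it. So the third paragraph is a harmless over-claim rather than a gap.
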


The \kbtors algorithm $\mathcal{B}$ of Theorem~\ref{th:fully_dyn_mis} is running on 
$(S, E_S)$ (i.e., the induced subgraph $H[S]$). Also, 
as the edge set $E_S$ contains only edges between vertices in $S$, the maximum size of $E_S$ is w.h.p.\
$\tilde{O}(k^2)$.
Then based on Theorem~\ref{th:fully_dyn_mis} and Observation~\ref{obs:E_S_nondec}, 
the total update time charged for $\mathcal{B}$ is $\tilde{O}(k^2)$, which 
(when amortized over the $\Omega(k^2)$ total edge insertions to $E_s$) amounts to 
an amortized update time of $\tilde{O}(1)$.
This concludes the running time analysis of Theorem \ref{thm:res_kbmisGr}.

\vspace{1em}
Finally, Theorem~\ref{thm:res_kbmisGr} follows by combining all the previous lemmas. In turn, Corollary~\ref{cor:res_kbmisGr} which we will use follows, and we restate it here for convenience.

\correskbmisGr*

\subsection{Incremental $k$-Center on Graphs: Putting It Together}
At this point, we have developed all the necessary tools in order to obtain our main theorem 
for the incremental $k$-center problem on graphs. The idea is to combine the reduction of Lemma~\ref{lem:2b_Gr_H} with Algorithm~\ref{alg:incr_kbttrs_Gr} of Theorem~\ref{thm:res_kbmisGr}.

\incrkcentfappr*
\begin{proof}
    Observe that algorithm $\mathcal{A}$ inside Lemma~\ref{lem:2b_Gr_H}
    with $\beta = 2$, has the same properties of the algorithm
    in Corollary~\ref{cor:res_kbmisGr}. Hence,
    let $\mathcal{A}$ be the Algorithm~\ref{alg:incr_kbttrs_Gr} of Corollary~\ref{cor:res_kbmisGr}, and
    $\epsilon_1 = \frac{\epsilon}{12}$. Based on Lemma~\ref{lem:2b_Gr_H},
    by running $\mathcal{A}$ with input $G, r, \epsilon_1$, for each
    $r \in \{(1 + \epsilon_1)^i \mid (1 + \epsilon_1)^i \leq nW, i \in \mathbb{N}\}$, we get an
    incremental $4(1+\epsilon_1)(1+\epsilon_1)$-approximation algorithm for the
    $k$-center problem. As $\epsilon < 1$ and $\epsilon_1 = \frac{\epsilon}{12}$, the approximation
    ratio is $(4 + \epsilon)$.

    Regarding the running time, by Corollary~\ref{cor:res_kbmisGr} the total update time of $\mathcal{A}$ is w.h.p.~$km^{1+o(1)}$. Since we run $\mathcal{A}$ for at most $O(\log_{1+\epsilon_1}(nW))$ different values of $r$, the total update time
    of the algorithm remains $km^{1+o(1)}$.
\end{proof}

\section{Decremental $k$-Center on Graphs}\label{sec:decremental}
In the decremental setting, the input graph of the $k$-center instance is subject to edge deletions.
Based on Lemma~\ref{lem:2b-approx_Gr}, in order to get a $(2 + \epsilon)$-approximation decremental
algorithm for the $k$-center problem, it is sufficient to develop a decremental algorithm for the 
\kbtors problem on $r$-threshold graphs. To maintain the necessary information for the $r$-threshold
graphs, we use a decremental SSSP algorithm on $G$.

\subsection{Decremental $k$-Bounded $(2, 1)$-Ruling Set on $G_r$}
For the sake of efficiency, in order to maintain the necessary information
for the $r$-threshold graphs, we make use of the approximate SSSP algorithm of 
Theorem~\ref{th:decr_appr_sssp}. Thus, we obtain instead the following theorem
which is a slight relaxation of the decremental \kbtors problem on $r$-threshold graphs.
This is still sufficient for the $k$-center problem, as Lemma~\ref{lem:2b_Gr_H}
suggests.

\begin{theorem} \label{th:res_dec_tors}
    Consider a weighted undirected graph $G = (V, E, w)$ subject to edge deletions, an integer $k \geq 1$, a positive parameter $r$ and a positive constant $\epsilon < 1$. 
    Let $r' := (1+\epsilon)r$ and consider the threshold graphs $G_r$ and $G_{r'}$.
    There is a deterministic algorithm which:
    \begin{itemize}[topsep=0pt,itemsep=-1ex,partopsep=1ex,parsep=1ex]
        \item either reports that there is an independent set in $G_r$ of size at least $k + 1$,
        \item or runs a decremental \kbtors algorithm $\mathcal{B}$ on an edge-subgraph $H$ of $G_{r'}$ with
        the following condition: whenever $\mathcal{B}$ reports that
        there is an independent set in $H$ of size at least $k + 1$, then
        there is an independent set in $G_r$ of size at least $k + 1$.
    \end{itemize}
    The total update time of the algorithm is $km^{1+o(1)}$.
\end{theorem}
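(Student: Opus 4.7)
The plan is to follow the decremental sketch in Section~\ref{sec:overview}: maintain the candidate centre set $M$ explicitly, and use a single decremental $(1+\epsilon)$-approximate SSSP instance (Theorem~\ref{th:decr_appr_sssp}) from a super-source $s^*$ attached by zero-weight edges to the vertices of $M$, re-initialising it whenever $M$ grows. Concretely, start with $M=\emptyset$. Run the following loop both in preprocessing and after every edge deletion to $G$: while there exists $v\in V$ with $\delta(v)>r'$ (where $\delta(\cdot)$ is the current distance estimate from $s^*$, taken as $+\infty$ when $M=\emptyset$), add $v$ to $M$ and reinitialise the SSSP instance on $G\cup\{s^*\}$ with $s^*$ now attached to every vertex of $M$. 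If at any point this grows $M$ past $k$, halt and report that $G_r$ contains an independent set of size at least $k+1$.

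Let $H$ be the edge-subgraph of $G_{r'}$ with vertex set $V$ and edge set $E(H):=\{(c,v) : c\in M,\; v\in V\setminus M,\; d_G(c,v)\le r'\}$, which indeed satisfies $E(H)\subseteq E(G_{r'})$ since $r'=(1+\epsilon)r$. While $|M|\le k$, the set $M$ is a \tors in $H$: it is independent in $H$ because $H$ has no edges inside $M$ by construction, and it dominates every $v\in V\setminus M$ at distance one because the loop enforces $\delta(v)\le r'$, hence $d_G(v,M)\le \delta(v)\le r'$, and the centre $c\in M$ achieving this distance gives $(c,v)\in E(H)$. For the reporting case, observe that at the moment a vertex $v$ entered $M$ we had $\delta(v)>r'$, which by Theorem~\ref{th:decr_appr_sssp} yields $d_G(u,v)\ge \delta(v)/(1+\epsilon)>r$ for every $u\in M$ at that moment; since $G$ is decremental, $d_G(u,v)>r$ is preserved for every pair $u,v\in M$ at every later time, so $M$ is always an independent set in $G_r$, of size at least $k+1$ as required. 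The same $M$ also witnesses an independent set of size at least $k+1$ in $H$, again because $H$ carries no edges within $M$, which is exactly the reporting condition demanded by the theorem.

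For the running time, the halting rule forces $|M|\le k+1$ throughout, so the super-source SSSP is re-initialised at most $k+1$ times. Each incarnation of the decremental $(1+\epsilon)$-SSSP has total update time $m^{1+o(1)}$ by Theorem~\ref{th:decr_appr_sssp}, so the SSSP work sums to $k\cdot m^{1+o(1)}$. Detecting loop-triggering vertices $v$ with $\delta(v)>r'$ adds no asymptotic cost: decremental estimates are monotonically non-decreasing, so each vertex crosses $r'$ at most once per incarnation, giving $O(nk)$ threshold events overall, absorbed in the SSSP bound. The main conceptual obstacle is reconciling the exact threshold $r$ of $G_r$ with the slackened threshold $r'$ forced by approximate SSSP; this is exactly what Lemma~\ref{lem:2b_Gr_H} is designed to absorb when this routine is plugged into the outer $k$-center algorithm, and internally it is dealt with by the decremental monotonicity of $d_G$: the strict separation $d_G(u,v)>r$ between centres obtained at the moment $v$ joins $M$ is never lost during the rest of the run.
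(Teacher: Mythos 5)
Your proof is correct and follows essentially the same route as the paper: maintain $M$ via a decremental $(1+\epsilon)$-SSSP from a super-source attached to $M$, promote a vertex to $M$ (and restart the SSSP) whenever its estimate exceeds $r'=(1+\epsilon)r$, halt and report once $|M|>k$, and argue via decremental distance monotonicity that $M$ stays independent in $G_r$. The only cosmetic differences are that you grow $M$ from $\emptyset$ rather than via an initial static $(2,1)$-ruling-set computation as the paper does, and that you phrase the witness graph $H$ in terms of exact distances $d_G(c,v)\le r'$ rather than the distance estimates $\delta(\cdot)$ the paper uses so that $H$ is explicitly maintainable.
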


Recall that in the definition of an \abrs, the first property is that the distance between any two vertices
in the \abrs is at least $\alpha$. The crucial observation here is that under edge deletions, the distance between any 
two vertices is non-decreasing. Hence the first property is preserved in the decremental setting, and this is the major
ingredient for the algorithm.

\subsubsection{Overview of the Algorithm}
For a fixed value of $r$ and $\epsilon$, let $r' := (1 + \epsilon)r$.
In the beginning of the algorithm of Theorem~\ref{th:res_dec_tors}, we execute a static \kbtors algorithm $\mathcal{B}$ on $G_r$.
One simple algorithm for this problem is to run $k$ times the Dijkstra's algorithm on $G$.
In particular, at each iteration we choose a vertex $s$ which has not been covered yet, 
and we run Dijkstra's algorithm on $G$ with source $s$. Then, 
every vertex $v$ of distance at most $r$ from $s$ is set as covered, and the same process is repeated at most $k$
times. The running time of this algorithm is clearly $\tilde{O}(mk)$.


Assume that algorithm $\mathcal{B}$ returns a \tors $M$ in $G_r$ of size at most $k$.
Next, we initialize a decremental approximate SSSP algorithm $\mathcal{A}$ with super-source $M$ on $G$, providing distance 
estimates $\delta(\cdot)$.\footnote{Namely, we introduce a fake root $x$ and 
add an edge $(x, v)$ of zero weight, for every $v \in M$. Then, we run a decremental approximate SSSP algorithm
with source $x$ on $G$.}
Specifically, we use the $(1 + \epsilon)$-approximate SSSP algorithm of Theorem~\ref{th:decr_appr_sssp}.
Also let $H$ be a graph whose edge set  contains all the edges $(u, v) \in V \times V$ such that
$\delta(v) \leq r'$ and $u \in M$ is the corresponding vertex for the distance estimate $\delta(v)$.
The graph $H$ can be explicitly constructed during the previous step.

Whenever there is an edge deletion in $G$, we pass this update to $\mathcal{A}$. In turn,
this update can possibly increase the distance estimate $\delta(\cdot)$ of some vertices. 
In particular, whenever the distance estimate $\delta(v)$ of a vertex $v \in V$ becomes greater
than $r'$, we add $v$ to $M$, and the algorithm $\mathcal{A}$ is restarted with super-source the
modified set $M$. Moreover, the graph $H$ is recomputed from scratch as before.

At any moment, if the size of $M$ has exceeded $k$, the algorithm reports that there is an independent set
in $G_r$ of size at least $k + 1$, and we do not restart the algorithm $\mathcal{A}$ anymore.

\subsubsection{Analysis of the Algorithm}

Our goal here is to prove Theorem~\ref{th:res_dec_tors}.
Initially the static algorithm produces a \tors $M$ in $G_r$.
At any moment, if the size of $M$ becomes at least $k + 1$, the algorithm reports
that there is an independent set in $G_r$ of size at least $k + 1$.
The next lemma shows the correctness of this step.

\begin{lemma} \label{lem:dist_alpha_IS}
    If the size of $M$ is at least $k + 1$, then there is
    an independent set in $G_r$ of size at least $k + 1$.
\end{lemma}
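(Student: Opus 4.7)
The plan is to show the stronger structural invariant that, at every point in the algorithm's execution, $M$ is an independent set in the current graph $G_r$. Once this is established, the lemma is immediate: if $|M|\ge k+1$, then $M$ itself is an independent set of size at least $k+1$ in $G_r$.

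To prove the invariant, I would proceed by induction on the sequence of modifications to $M$. For the base case, the initial $M$ is produced by a static \tors algorithm on $G_r$, so any two distinct $u,v\in M$ are at distance at least $2$ in $G_r$, i.e., non-adjacent in $G_r$, which is equivalent to $d_G(u,v)>r$ in the current $G$. For the inductive step, suppose the invariant holds just before some edge deletion or addition of a new vertex to $M$. Edge deletions in $G$ can only \emph{increase} pairwise distances, so any pair $u,v\in M$ with $d_G(u,v)>r$ before the deletion still satisfies $d_G(u,v)>r$ afterwards; hence the invariant is preserved through deletions. When a new vertex $v$ is added to $M$, it is because the super-source distance estimate $\delta(v)$ has exceeded $r'=(1+\epsilon)r$; by the $(1+\epsilon)$-approximation guarantee of Theorem~\ref{th:decr_appr_sssp}, we have $\delta(v)\le (1+\epsilon)\,d_G(v,M)$, so $\delta(v)>r'$ forces $d_G(v,M)>r$. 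Therefore $v$ lies at $G$-distance strictly greater than $r$ from every current member of $M$, meaning $v$ is non-adjacent in $G_r$ to every vertex of $M$. Combined with the inductive hypothesis on the old members, $M\cup\{v\}$ is still an independent set in $G_r$.

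Iterating the invariant until the moment $|M|$ first reaches $k+1$, we conclude that $G_r$ contains $M$ as an independent set of size at least $k+1$, proving the lemma.

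The only subtle point to watch is the direction of the approximation inequality for $\delta$: we need $\delta$ to be an \emph{upper} estimate (up to the $(1+\epsilon)$ factor) of the true distance, so that $\delta(v)>r'$ implies $d_G(v,M)>r$ rather than the reverse. This is exactly the guarantee given by the decremental $(1+\epsilon)$-SSSP algorithm cited. I do not anticipate any other obstacle, since monotonicity of distances under edge deletions makes the inductive step for deletions essentially free.
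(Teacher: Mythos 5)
Your proof is correct and takes essentially the same approach as the paper's, resting on the same three observations: the initial $M$ is an independent set in $G_r$, edge deletions only increase distances, and $\delta(v)>r'$ together with the one-sided $(1+\epsilon)$-approximation guarantee forces $d_G(v,M)>r$ so that a newly added vertex is non-adjacent in $G_r$ to all of $M$. The only difference is presentational: the paper argues by contradiction (assuming some added vertex has a neighbor in $M$ and deriving $\delta(v)\le r'$), while you maintain the independence invariant directly by induction over the sequence of modifications.
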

\begin{proof}
    Initially the set $M$ is a \tors in $G_r$, and by definition $M$ is 
    also an independent set in $G_r$. Thus, if the size of $M$
    is at least $k + 1$ after the execution of the static algorithm, 
    the set $M$ remains an independent set in $G_r$ under edge deletions, and the claim holds.

    Hence, we can assume that the size of $M$ became at least $k + 1$ after some edge deletions.
    We prove the claim by contradiction. 
    Suppose to the contrary that $M$ is not an independent set in $G_r$ after an edge deletion. 
    In this case, the algorithm must have added 
    a vertex $v$ to $M$ which has a neighbor $u \in M$ in $G_r$ (i.e., $d_G(u, v) \leq r$). 
    Since $u \in M$, in the algorithm
    we maintain the decremental $(1+\epsilon)$-SSSP algorithm with super-source $M$, and by 
    Theorem~\ref{th:decr_appr_sssp} it holds that
    $\delta(v) \leq (1+\epsilon)d_G(u, v) \leq r'$.
    But then, the algorithm does not add $v$ to $M$ which yields a contradiction.
\end{proof}

Assume that the size of the solution $M$ is at most $k$. The next lemma shows that
the algorithm maintains a decremental \kbtors algorithm on an edge-subgraph $H$ of $G_{r'}$. 

\begin{lemma} \label{lem:dec_M_tors}
    The graph $H$ is a subgraph of $G_{r'}$. 
    Moreover, if $|M| \leq k$, then the set $M$ is always a \tors in $H$.
\end{lemma}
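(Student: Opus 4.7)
The plan is to handle the two claims of the lemma separately, with the first being a direct consequence of the non-underestimation property of the approximate SSSP algorithm, and the second requiring us to verify both conditions in the definition of a $(2,1)$-ruling set.

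For the first claim that $H$ is a subgraph of $G_{r'}$, I would take an arbitrary edge $(u,v) \in E(H)$. By construction, $u \in M$ and $u$ is the vertex corresponding to the distance estimate $\delta(v) \leq r'$; since Theorem~\ref{th:decr_appr_sssp} provides a path from $u$ to $v$ in $G$ of length $\delta(v)$, the true distance satisfies $d_G(u,v) \leq \delta(v) \leq r'$, so $(u,v) \in E(G_{r'})$ by Definition~\ref{def:threshold_graph}.

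For the second claim, I would verify the covering property first since it is the easier half. For $v \in M$, there is nothing to show. For $v \notin M$, the algorithm maintains the invariant that $\delta(v) \leq r'$: indeed, as soon as an edge deletion causes the estimate of some $v$ to exceed $r'$, the algorithm immediately adds $v$ to $M$ and restarts the approximate SSSP so that $\delta(v)$ drops back to $0$. Hence at every moment, each $v \notin M$ has a corresponding source $u \in M$ with $\delta(v) \leq r'$, which by construction yields an edge $(u,v) \in E(H)$, so $v$ is within distance $1$ of $M$ in $H$.

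For the separation property (distance at least $2$ in $H$ between any two vertices of $M$), I would argue that no edge of $H$ can have both endpoints in $M$. The point is that for any $u_2 \in M$ the approximate SSSP from super-source $M$ necessarily reports $\delta(u_2) = 0$ with $u_2$ itself as the corresponding source, so the only edge of $H$ that could be incident to $u_2$ ``from'' $M$ would be the loop $(u_2,u_2)$, which is not counted. Consequently distinct $u_1,u_2 \in M$ cannot be adjacent in $H$, and their graph distance in $H$ is at least $2$, as required. The main subtlety I anticipate is this separation argument: it hinges entirely on how the ``corresponding vertex'' for a member of $M$ is defined, so I would make sure to state that convention explicitly (or equivalently, replace it by the contradiction pattern used in the proof of Lemma~\ref{lem:dist_alpha_IS}, deriving $\delta(u_2) \leq r'$ from $(u_1,u_2) \in E(H)$ and then observing that the algorithm would never have added $u_2$ to $M$ under that condition).
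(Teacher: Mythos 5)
Your proof is correct and follows the same three-part structure as the paper's: the subgraph claim via non-underestimation of $\delta(\cdot)$, the covering property via the invariant that every $v \notin M$ keeps $\delta(v) \leq r'$ (enforced by immediate promotion to $M$ and restart), and the separation property via the structure of the edges of $H$. For the separation property, your primary argument --- that each $u_2 \in M$ has $\delta(u_2) = 0$ through its weight-$0$ super-source edge and is therefore its own corresponding vertex, so no edge of the form $(u_1,u_2)$ with $u_1 \in M \setminus \{u_2\}$ is ever placed in $H$ --- is the right way to make this rigorous. It is in fact slightly tighter than the paper's own phrasing, which compares ``$\delta(v) > r'$ at the moment $v$ was added'' with ``$\delta(v) \leq r'$ as the edge condition'' without spelling out that those two estimates are taken with different super-sources and hence are not directly comparable.

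One caveat: the fallback you label as ``equivalent'' (mirroring the contradiction pattern of Lemma~\ref{lem:dist_alpha_IS}) is not actually interchangeable with your primary argument. That lemma derives $\delta(v) \leq r'$ from $d_G(u,v) \leq r$, i.e.\ from the \emph{exact} threshold of $G_r$, and decremental monotonicity of $d_G$ lets one push that bound back to the moment $v$ was considered for addition. But an edge $(u_1,u_2) \in E(H)$ only certifies $d_G(u_1,u_2) \leq r'$, from which one can deduce at best $\delta(u_2) \leq (1+\epsilon)r' = (1+\epsilon)^2 r$, which does not contradict the addition rule $\delta(u_2) > r'$. So stick with the corresponding-vertex argument; do not replace it.
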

\begin{proof}
    Let $(u, v) \in E(H)$ be an edge in $H$. 
    Then, w.l.o.g. it must be the case that $u \in M$ and $\delta(v) \leq r'$.
    Based on Theorem~\ref{th:decr_appr_sssp}, the distance estimate $\delta(\cdot)$ does not underestimate the distances,
    and so we have that $d_G(u, v) \leq \delta(v) \leq r'$. 
    Thus by definition, the edge $(u, v)$ is part of $G_{r'}$ as well.
    
    The algorithm adds the vertex $v$ to $M$ only if the distance estimate $\delta(v)$ becomes
    greater than $r'$, while the edge $(u, v)$ is part of $H$ only if $\delta(v)$ is at most $r'$. This implies that the set $M$ is an independent set in $H$.
    Furthermore, whenever the distance estimate $\delta(v)$ of a vertex $v \in V \setminus M$ becomes
    greater than $r'$, the set $M$ and the graph $H$ are recomputed. 
    This implies that the distance estimate of any vertex $v \in V \setminus M$ is at most $r'$.
    By construction of $H$, there must exist an edge $(u, v)$ in $H$, where $u \in M$ is the corresponding vertex of $\delta(v)$.
    Hence, we can conclude that the set $M$ is a \tors in $H$.
\end{proof}

\subparagraph{Running time.} The running time of the simple static algorithm is $\tilde{O}(mk)$.
By Theorem~\ref{th:decr_appr_sssp}, the total time of the decremental approximate SSSP algorithm is $m^{1+o(1)}$.
As the decremental approximate SSSP algorithm is restarted at most $k$ times,
the total update time of the algorithm is $km^{1+o(1)}$. Notice that the time to detect whether
a distance estimate is greater than $r' = (1+\epsilon) r$ is incorporated in the update time of the decremental approximate SSSP algorithm. 

\vspace{1em}

Finally, Theorem~\ref{th:res_dec_tors} follows by combining 
Lemma~\ref{lem:dist_alpha_IS} and Lemma~\ref{lem:dec_M_tors} together with the analysis of the running time.

\subsection{Decremental $k$-Center on Graphs: Putting It Together}
We combine Theorem~\ref{th:res_dec_tors} with Lemma~\ref{lem:2b_Gr_H}
to obtain the next theorem for the decremental $k$-center problem on graphs.
A pseudocode of the algorithm of Theorem~\ref{th:dec_kcent_2appr} is provided in Algorithm~\ref{alg:dec_kcenter}. 

\begin{algorithm}[ht!]
\DontPrintSemicolon
\caption{\textsc{decremental $(2 + \epsilon)$-approximation algorithm for $k$-center}{}}
\label{alg:dec_kcenter}
\SetKwFunction{FMaximalDistrIS}{MaximalDistrIS}
\SetKwProg{Fn}{Function}{:}{\KwRet}

\setcounter{AlgoLine}{0}
\SetAlgoLined

\Fn{\FMaximalDistrIS{}} {
    $i \gets 0$
    
    \While {$G \setminus \bigcup_{j=1}^i C_j \neq \emptyset$ and $i \leq k$} {
        $u \gets $ arbitrary vertex from $G \setminus \bigcup_{j=1}^i C_j$
    
        $i \gets i+1$
    
        $c_i \gets u$

        $C_i \gets$ cluster with center $c_i$ and radius $r$
    }

    \KwRet i
}

\vspace{1em}

\SetKwProg{Fn}{Procedure}{:}{\KwRet}
\SetKwFunction{FFindRadius}{FindRadius}
\Fn{\FFindRadius{}} {
    \While {\FMaximalDistrIS{} > k} {
        $r \gets (1 + \epsilon) \cdot r$
    }
}

\vspace{1em}

\SetKwProg{Fn}{Procedure}{:}{\KwRet}
\SetKwFunction{FPreprocessing}{Preprocessing}
\Fn{\FPreprocessing{}} {
    $r \gets 1$
    
    \FFindRadius{}

    $M \gets \{c_1, \dots, c_i\}$
    
    $\mathcal{A}.initialize(G, M)$
    \tcp*{$\mathcal{A}$ is decremental approx. SSSP algorithm with distance estimates $\delta(\cdot)$}
}

\vspace{1em}

\SetKwProg{Fn}{Procedure}{:}{\KwRet}
\SetKwFunction{FUpdate}{Update}
\Fn{\FUpdate{$u$, $v$}} {
    $G \gets (V, E \setminus (u,v))$

    $\mathcal{A}.delete(u,v)$

    \While{$\exists x \in V$ such that $\delta(x) > (1+\epsilon) r$} {
        \If{$i < k$} {
            $i \gets i + 1$

            $c_i \gets x$

            $M \gets \{c_1, \dots, c_i\}$
            
            $C_i \gets$ cluster with center $c_i$ and radius $r$
        }
        \Else {
            $r \gets (1 + \epsilon) \cdot r$
            
            \FFindRadius{}
        }

        $\mathcal{A}.restart(G, M)$
    }
}
\end{algorithm}

\deckcenttappr*
\begin{proof}
    Observe that algorithm $\mathcal{A}$ inside Lemma~\ref{lem:2b_Gr_H} with $\beta = 1$, 
    has the same properties of the algorithm in Theorem~\ref{th:res_dec_tors}. Hence, 
    let $\mathcal{A}$ be the algorithm of Theorem~\ref{th:res_dec_tors}, and 
    $\epsilon_1 = \frac{\epsilon}{6}$. Based on Lemma~\ref{lem:2b_Gr_H},
    by running $\mathcal{A}$ with input $G, r, \epsilon_1$, for each
    $r \in \{(1 + \epsilon_1)^i \mid (1 + \epsilon_1)^i \leq nW, i \in \mathbb{N}\}$, we get a
    deterministic decremental $2(1+\epsilon_1)(1+\epsilon_1)$-approximation algorithm for the
    $k$-center problem. As $\epsilon < 1$ and $\epsilon_1 = \frac{\epsilon}{6}$, the approximation
    ratio is $(2 + \epsilon)$.

    Regarding the running time, by Theorem~\ref{th:res_dec_tors} the total update time of $\mathcal{A}$ is $km^{1+o(1)}$. Since we run $\mathcal{A}$ for at most $O(\log_{1+\epsilon_1}(nW))$ different values of $r$, the total update time
    of the algorithm remains $km^{1+o(1)}$. 
\end{proof}

\section{Fully Dynamic $k$-Center on Graphs}\label{sec:fullydynamic}
In this section we describe how to maintain a $(2+\epsilon)$-approximate solution to the $k$-center problem on fully dynamic graphs.
We start by reviewing Gonzalez's algorithm, a classical $2$-approximation algorithm to the $k$-center problem in the static setting. Afterwards, we describe how to adapt it to the fully dynamic setting by using fully dynamic approximate SSSP algorithms.

\subsection{Gonzalez's Algorithm}
Gonzalez's algorithm~\cite{Gonzalez85} is a well-known greedy algorithm for the $k$-center problem on (possibly weighted and directed) graphs\footnote{The algorithm is also used for the $k$-center problem in metric spaces.}. 
It works as follows:
\begin{enumerate}[topsep=0pt,itemsep=-1ex,partopsep=1ex,parsep=1ex]
    \item pick as first center an arbitrary vertex $c_1 \in V$ and set $C = \{c_1\}$;
    \item while $|C| < k$, pick the next center $c_i \in \argmax_{v \in V}\, d_G(C, v)$ and set $C = C \cup \{c_i\}$; 
    \item return the set of centers $C$.
\end{enumerate}
\begin{theorem}
Gonzalez's algorithm computes a 2-approximation for the $k$-center problem on graphs and a standard implementation runs in time $O(k(m + n \log n))$.
\end{theorem}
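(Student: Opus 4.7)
The plan is to prove the two claims separately: first the approximation guarantee, then the running time of a standard implementation.

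For the approximation ratio, I would follow the classical pigeonhole argument. Let $C = \{c_1, \dots, c_k\}$ be the centers output by the greedy procedure, in the order they were picked, and let $r := \max_{v \in V} d_G(v, C)$ be the radius achieved. Pick a vertex $v^*$ witnessing this maximum, so $d_G(v^*, C) = r$. The key structural claim is that the $k+1$ vertices $c_1, \dots, c_k, v^*$ are pairwise at distance at least $r$ in $G$. To see this, consider any pair $c_i, c_j$ with $i < j$: by the greedy choice, $c_j$ maximizes $d_G(\cdot, \{c_1, \dots, c_{j-1}\})$, so
\[
d_G(c_i, c_j) \ge d_G(c_j, \{c_1, \dots, c_{j-1}\}) \ge d_G(v^*, \{c_1, \dots, c_{j-1}\}) \ge d_G(v^*, C) = r,
\]
and similarly $d_G(c_i, v^*) \ge d_G(v^*, C) = r$ for every $i$.

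To conclude, let $S^*$ be an optimal solution of size $k$ with radius $R^*$. By pigeonhole, two of the $k+1$ vertices $c_1, \dots, c_k, v^*$ must lie in the cluster of the same optimal center $s^* \in S^*$; each is at distance at most $R^*$ from $s^*$, so the triangle inequality gives that the distance between them is at most $2R^*$. Combined with the previous claim, $r \le 2R^*$, which is the desired $2$-approximation.

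For the running time, I would describe a standard implementation that maintains, for every $v \in V$, the quantity $\delta(v) := d_G(v, C)$ under the current set of centers $C$. Initially $C = \{c_1\}$ and we compute $\delta(\cdot)$ by running Dijkstra's algorithm from $c_1$, which takes $O(m + n \log n)$ using a Fibonacci heap (or binary heap for the comparable bound up to log factors). In each subsequent iteration $i = 2, \dots, k$, the next center is $c_i \in \argmax_v \delta(v)$, which can be found by scanning in $O(n)$ time; then running Dijkstra from $c_i$ produces $d_G(c_i, \cdot)$ in $O(m + n \log n)$ time, after which we update $\delta(v) \leftarrow \min(\delta(v), d_G(c_i, v))$ for all $v$. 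Summing over the $k$ iterations yields total time $O(k(m + n \log n))$.

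The only slightly subtle point is the correctness of the distance update: the invariant $\delta(v) = d_G(v, C)$ is preserved by the pointwise minimum because $d_G(v, C \cup \{c_i\}) = \min(d_G(v, C), d_G(v, c_i))$, which is immediate from the definition. Given this invariant, the greedy choice in step~2 of the algorithm is realized exactly by the $\argmax$ over $\delta(\cdot)$, so the implementation simulates the algorithm faithfully. No genuinely hard step is expected; the pigeonhole plus triangle inequality bound is the conceptual heart, and the running time follows from $k$ Dijkstra calls.
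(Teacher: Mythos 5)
The paper states this theorem without proof, attributing it to Gonzalez~\cite{Gonzalez85}, so there is no in-paper argument to compare against. Your proof is the standard one — the pigeonhole claim that $c_1,\dots,c_k,v^*$ are pairwise at distance at least $r$, so two of these $k+1$ points fall into a common optimal cluster and $r \le 2R^*$ by the triangle inequality, together with the $k$-round Dijkstra implementation maintaining $\delta(\cdot)$ — and it is correct.
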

\begin{Definition}[$\alpha$-approximate Gonzalez's algorithm]
For $\alpha \ge 1$, an $\alpha$-approximate Gonzalez's algorithm is a relaxation of Gonzalez's algorithm that picks the next center $c_i$ in step 2 above such that $d_G(C, c_i) \ge \alpha^{-1}\cdot \max_{v \in V} d_G(C,v)$.
\end{Definition}
\begin{theorem}[{\cite[Lemma~4.1]{AbboudCLM23}}]\label{thm:alpha gonzalez}
For $\alpha \ge 1$, an $\alpha$-approximate Gonzalez's algorithm computes a $2\alpha$-approximation for the $k$-center problem on graphs.
\end{theorem}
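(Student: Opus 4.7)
My plan is to generalize the standard pigeonhole argument for Gonzalez's 2-approximation to account for the $\alpha$-approximate selection rule.

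First, I would set up notation: let $C = \{c_1, \ldots, c_k\}$ be the centers output by the $\alpha$-approximate Gonzalez's algorithm in the order they were picked, let $R := \max_{v \in V} d_G(C, v)$ be the radius of the returned solution, and let $u \in V$ be a vertex achieving this maximum, i.e., $d_G(C, u) = R$. Let $R^*$ denote the optimal $k$-center radius and let $S^* = \{s_1^*, \ldots, s_k^*\}$ be an optimal set of centers with clusters $V_1^*, \ldots, V_k^*$ partitioning $V$ (assigning each vertex to its closest optimal center).

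The key step is to establish a lower bound on the pairwise distances among the $k+1$ vertices $c_1, \ldots, c_k, u$. For any $i < j$, when $c_j$ was selected, the current center set was $C_{j-1} = \{c_1, \ldots, c_{j-1}\}$, and by the $\alpha$-approximate selection rule together with the fact that $u$ was available at that moment,
\[
d_G(C_{j-1}, c_j) \;\ge\; \alpha^{-1} \cdot \max_{v \in V} d_G(C_{j-1}, v) \;\ge\; \alpha^{-1}\, d_G(C_{j-1}, u) \;\ge\; \alpha^{-1}\, d_G(C, u) \;=\; \alpha^{-1} R,
\]
where the last inequality uses $C_{j-1} \subseteq C$. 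In particular $d_G(c_i, c_j) \ge \alpha^{-1} R$. The same inequality yields $d_G(c_i, u) \ge \alpha^{-1} R$ for all $i$, since $u$'s distance to $C$ (hence to $C_{i})$ is $R$.

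Now I would apply the pigeonhole principle: the $k+1$ vertices $c_1, \ldots, c_k, u$ are distributed among the $k$ optimal clusters $V_1^*, \ldots, V_k^*$, so two of them, say $x$ and $y$, lie in the same cluster $V_\ell^*$. By the triangle inequality and the definition of the optimal clusters,
\[
d_G(x, y) \;\le\; d_G(x, s_\ell^*) + d_G(s_\ell^*, y) \;\le\; 2 R^*.
\]
Combining with the lower bound from the previous step, $\alpha^{-1} R \le d_G(x, y) \le 2 R^*$, which gives $R \le 2 \alpha R^*$, as desired.

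The argument is essentially routine; the only subtlety is ensuring that the lower bound $\alpha^{-1}R$ applies uniformly to every pair among $\{c_1, \ldots, c_k, u\}$, which requires invoking the $\alpha$-approximate guarantee at the round in which the later-indexed vertex was (or would be) added, and using monotonicity of $d_G(\cdot, v)$ with respect to adding centers. No further technical ingredients are needed, so I do not expect any significant obstacle.
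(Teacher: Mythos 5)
The paper does not include its own proof of this theorem---it cites \cite[Lemma~4.1]{AbboudCLM23}---but your argument is correct and is the natural $\alpha$-relaxed version of the standard pigeonhole proof for Gonzalez's $2$-approximation, which is the same route taken in the cited reference. All steps check out: the monotonicity $d_G(C_{j-1},u)\ge d_G(C,u)$, the lower bound $\alpha^{-1}R$ on all pairwise distances among $c_1,\dots,c_k,u$, and the pigeonhole over the $k$ optimal clusters giving $\alpha^{-1}R\le 2R^*$.
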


\subsection{Fully Dynamic $k$-Center via Fully Dynamic $(1+\epsilon)$-SSSP}

Assuming that we have a fully dynamic $(1+\epsilon)$-SSSP data structure,
we show how to use this to get a fully dynamic $k$-center data structure in Algorithm~\ref{alg:fullydyn_kcenter}.

\begin{algorithm}[ht!]
\caption{\textsc{Fully dynamic $2(1+\epsilon)$-approximation k-center}}
\label{alg:fullydyn_kcenter}
\DontPrintSemicolon
\setcounter{AlgoLine}{0}
\SetAlgoLined

\SetKwInOut{Input}{Input}
\SetKwInOut{Output}{Output}


\SetKwFunction{FSimulateGonzalez}{SimulateGonzalez}
\SetKwProg{Fn}{Function}{:}{\KwRet}
\Fn{\FSimulateGonzalez{$\mathcal{D}$, s, k}} {
    $C = \emptyset$
    
    \For{$i=1,...,k$}{
        $c_i \gets x \in \argmax_{v\in V} \delta_{G'}(s, v)$
        \tcp*{$c_1 \gets \text{arbitrary } v \in V$}
        
        \tcp*{$s$ is disconnected at $i=1$; $\delta_{G'}(s, v)=\infty,\, \forall v \in V$; ties broken arbitrarily}
        
        $\mathcal{D}.insert(s, c_i)$

        $C \gets C \cup \{c_i\}$
    }
    
    \For{$i=1,...,k$}{        
        $\mathcal{D}.delete(s, c_i)$
    }
    
    \KwRet $C$
}
\vspace{1em}
\SetKwFunction{FPreprocessing}{Preprocessing}
\Fn{\FPreprocessing{G, k}} {
    $G' \gets (V \cup \{s\},\, E)$ \tcp*{augment $G$ with super source $s$}
    
    $\mathcal{D} \gets \textsc{Initialize}(G', s)$ \tcp*{$\mathcal{D}$ is fully dynamic $(1+\epsilon)$-SSSP, with approx. distance $\delta_{G'}(s, v)$}
    
    $C \gets$ \FSimulateGonzalez{$\mathcal{D}$, s, k}
}
\vspace{1em}
\SetKwFunction{FUpdate}{Update}
\Fn{\FUpdate{$u$, $v$}} {
    $\mathcal{D}.update(u,v)$ \tcp*{either insert or delete edge $(u,v)$}
    
    $C \gets$ \FSimulateGonzalez{$\mathcal{D}$, s, k}
}
\end{algorithm}

\begin{theorem}\label{thm: fully dynamic black box}
Given a graph $G=(V,E)$, a positive parameter $\epsilon\le 1/2$, and a fully dynamic data structure that maintains $(1+\epsilon)$-approximate distances from a single source $s \in V$ with worst-case update time $T(n, m, \epsilon)$, Algorithm~\ref{alg:fullydyn_kcenter} maintains a $2(1+4\epsilon)$-approximate solution to fully dynamic $k$-center in time $O(k \cdot ( T(n,m, \epsilon)+n))$. 
\end{theorem}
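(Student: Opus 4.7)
The plan is to view Algorithm~\ref{alg:fullydyn_kcenter} as a faithful implementation of an $\alpha$-approximate Gonzalez's algorithm on $G$ at the current snapshot, for $\alpha$ close to $1+\epsilon$, and then to invoke Theorem~\ref{thm:alpha gonzalez}. First I would formalize the relation between $\delta_{G'}(s, v)$ and the greedy objective. After the first $i-1$ iterations of \textsc{SimulateGonzalez}, the auxiliary graph $G'$ contains the zero-weight edges $\{(s, c_1), \ldots, (s, c_{i-1})\}$, so that $d_{G'}(s, v) = \min_{c \in C} d_G(c, v) = d_G(C, v)$ for every $v \in V$, where $C = \{c_1, \ldots, c_{i-1}\}$. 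Since $\mathcal{D}$ is a $(1+\epsilon)$-approximate SSSP data structure, $d_G(C, v) \le \delta_{G'}(s, v) \le (1+\epsilon)\, d_G(C, v)$.

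Picking $c_i \in \argmax_v \delta_{G'}(s, v)$ then yields
\[
d_G(C, c_i) \;\ge\; \frac{\delta_{G'}(s, c_i)}{1+\epsilon} \;\ge\; \frac{\delta_{G'}(s, v)}{1+\epsilon} \;\ge\; \frac{d_G(C, v)}{1+\epsilon}
\qquad \forall v \in V,
\]
which shows that \textsc{SimulateGonzalez} implements a $(1+\epsilon)$-approximate Gonzalez's algorithm (for $i=1$ the greedy invariant is vacuous). By Theorem~\ref{thm:alpha gonzalez}, the returned set $C$ is a $2(1+\epsilon)$-approximation. To reach the stated $2(1+4\epsilon)$ bound, I would either run the SSSP subroutine with a slightly smaller accuracy parameter $\epsilon' = \epsilon/c$ for an appropriate constant $c$ (absorbing any standard $(1+\epsilon)^{O(1)}$ rounding slack), or simply observe that for $\epsilon \le 1/2$ the bound $2(1+\epsilon) \le 2(1+4\epsilon)$ holds trivially; the hypothesis $\epsilon \le 1/2$ in the theorem statement exists precisely to make this kind of rescaling harmless.

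For the update-time analysis, I would verify that after every graph update, the call to \textsc{SimulateGonzalez} performs exactly $k$ edge insertions of the form $(s, c_i)$, $k$ argmax queries over the approximate labels $\delta_{G'}(s, \cdot)$, and $k$ edge deletions of the same $k$ edges into $\mathcal{D}$. Each update to $\mathcal{D}$ costs $T(n, m, \epsilon)$ in the worst case, each argmax takes $O(n)$ by scanning $V$, and the graph update itself costs one additional call to $\mathcal{D}.\textsc{update}$. Summing gives a worst-case update time of $O(k \cdot (T(n, m, \epsilon) + n))$, as claimed. A subtle point worth stating explicitly is that the $k$ deletions at the end of \textsc{SimulateGonzalez} restore $G'$ to its pre-iteration state, so the next graph update is applied to an unaugmented $G'$ and no accumulating pollution builds up inside $\mathcal{D}$ across updates.

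The main obstacle I anticipate is the adversary model for $\mathcal{D}$: the vertices $c_i$ inserted into $G'$ depend on the approximate answers returned by $\mathcal{D}$, so the sequence of super-source edge insertions is adaptive with respect to $\mathcal{D}$'s internal state. I would address this by assuming $\mathcal{D}$ works against an adaptive adversary—which is indeed the case for the two concrete fully dynamic $(1+\epsilon)$-SSSP data structures cited earlier (the deterministic one for uniform weights and the randomized one against adaptive adversaries for general weights)—or, equivalently, by phrasing the theorem conditionally on $T(n, m, \epsilon)$ being the adaptive-adversary worst-case update time. Modulo this point, both the approximation guarantee and the running-time bound follow directly from the arguments above.
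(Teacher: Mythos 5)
Your proof assumes that the super-source edges $(s, c_i)$ have \emph{zero} weight, so that $d_{G'}(s, v) = d_G(C, v)$. But in Algorithm~\ref{alg:fullydyn_kcenter} these edges have weight $1$: the data structure the theorem is ultimately instantiated with for the deterministic bound, Theorem~\ref{thm:fd_distances}, only supports \emph{unweighted} graphs, so the super-source cannot be attached by zero-weight edges. Consequently the paper establishes $d_{G'}(s, v) = 1 + d_G(C, v)$, and the chain of inequalities you wrote must be redone with this additive $+1$ on both sides. The $+1$ does not cancel multiplicatively: one gets
\[
d_G(C, c_{i+1}) \ge \frac{1 + d_G(C, v_{\max})}{1+\epsilon} - 1 = \frac{d_G(C, v_{\max}) - \epsilon}{1+\epsilon},
\]
and to control the extra $-\epsilon$ one must use the lower bound $d_G(C, v_{\max}) \ge 1$ (minimum positive distance in a graph with unit minimum edge weight), which turns the bound into $\frac{1-\epsilon}{1+\epsilon}\, d_G(C, v_{\max}) \ge \frac{1}{1+4\epsilon}\, d_G(C, v_{\max})$ for $\epsilon \le 1/2$. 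This is where the stated constant $2(1+4\epsilon)$ actually comes from, and it is also exactly where the hypothesis $\epsilon \le 1/2$ is used (to verify $(1-\epsilon)(1+4\epsilon) \ge 1+\epsilon$, equivalently $2\epsilon(1-2\epsilon)\ge 0$). Your two proposed repairs — rescaling $\epsilon$, or simply observing $2(1+\epsilon) \le 2(1+4\epsilon)$ — both paper over the fact that with zero-weight edges the theorem could have been stated with the tighter $2(1+\epsilon)$, so the weaker bound is a signal that the algorithm does something subtly different than what you wrote.

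With that correction, the remainder of your argument — running \textsc{SimulateGonzalez}, appealing to Theorem~\ref{thm:alpha gonzalez}, the $O(k\cdot(T+n))$ accounting from $2k+1$ calls to $\mathcal{D}$ plus $k$ linear scans, and the observation that the $k$ deletions at the end restore $G'$ so no state pollutes across updates — matches the paper's proof, and your point about the adaptive-adversary requirement on $\mathcal{D}$ is a valid and worthwhile observation that the paper does not spell out.
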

\begin{proof}
We prove that the procedure \textsc{SimulateGonzalez} in Algorithm~\ref{alg:fullydyn_kcenter} is a $(1+4\epsilon)$-approximate Gonzalez's algorithm, hence the claim about the approximation follows by Theorem~\ref{thm:alpha gonzalez}.

Note that the procedure runs on $G'$, which is a copy of $G$ with an additional super-source vertex $s$ which is initially disconnected. 
Let us call $\mathcal{D}$ the data structure used to maintain the $(1+\epsilon)$-approximate distances from $s$ in $G'$, e.g., the one given in Theorem~\ref{thm:fd_distances} or in Theorem~\ref{thm:weighted_fd_distances}.
Suppose to be at the $i$-th iteration of the procedure, i.e., the super-source $s$ is connected to all vertices in $C = \{c_1,...,c_i\}$ in $G'$.
Note that such additional edges imply that $d_{G'}(s, v) = 1+ d_G(C, v)$, for every $v\in V$.
Let $\delta_{G'}(s,v)$ be the approximate distance between $s$ and $v$ maintained by $\mathcal{D}$, which guarantees that $d_{G'}(s,v) \le \delta_{G'}(s,v) \le (1+\epsilon)d_{G'}(s,v)$.
Let $v_{\max} \in \argmax_{v \in V}\, d_G(C, v)$ be one among the furthest vertices from $C$.
Let $c_{i+1}$ be the next center selected by the algorithm, i.e., $c_{i+1} \in \argmax_{v \in V} \delta_{G'}(s, v)$.
Therefore, it holds that
\begin{align*}
    1+ d_G(C, v_{\max}) 
    = d_{G'}(s, v_{\max}) 
    \le \delta_{G'}(s, v_{\max})
    &\le \delta_{G'}(s, c_{i+1}) 
    \\
    &\le (1+\epsilon)d_{G'}(C, c_{i+1})
    = (1+\epsilon)(1+d_G(C, c_{i+1})).
\end{align*}
Noting that $d_G(C, v_{\max}) \ge 1$ and since by assumption $\epsilon\in(0,1/2]$, the previous equation implies
\begin{multline*}
    d_G(C, c_{i+1}) 
    \ge \frac{1+d_G(C, v_{\max})}{1+\epsilon} - 1
    = \frac{d_G(C, v_{\max})-\epsilon}{1+\epsilon} 
    = \frac{1-\epsilon/d_G(C, v_{\max})}{1+\epsilon} d_G(C, v_{\max})
    \\
    \ge \frac{1-\epsilon}{1+\epsilon} d_G(C, v_{\max}) 
    \ge \frac{1}{1+4\epsilon} \max_{v \in V} d_G(C, v),
\end{multline*}
which concludes the approximation proof.

\bigskip
The update procedure requires that $\mathcal{D}$ is updated $2k+1$ times, with a worst-case time of $T(n,m, \epsilon)$ per update, and additionally look for the approximate furthest neighbor $k$ times, each requiring time $O(n)$, i.e., querying the approximate distance $\delta_G(s,v)$, $\forall v\in V$.
\end{proof}

In particular, for the fully dynamic data structure we use the state-of-the-art algorithm for unweighted graphs by \cite{BFN22}.
\begin{theorem}[\cite{BFN22}]\label{thm:fd_distances}
Given an unweighted undirected graph $G=(V,E)$ and a single source $s$, and $0 <\epsilon <1$, there is a deterministic fully dynamic data structure for maintaining $(1+\epsilon)$-distances from $s$ with worst-case update time of $O(n^{1.529} \epsilon^{-2})$ for the current matrix multiplication exponent $\omega$.
The algorithm has preprocessing time of $O(n^{\omega}\epsilon^{-2}\log \epsilon^{-1})$, where $\omega \le 2.373$.
\end{theorem}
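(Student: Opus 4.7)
Since Theorem~\ref{thm:fd_distances} is quoted verbatim from prior work \cite{BFN22}, any proof would essentially reconstruct their algorithm. My plan is to sketch the template that state-of-the-art fully dynamic approximate SSSP algorithms based on fast matrix multiplication follow, and to indicate where the exponent $1.529$ originates.

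The core data structure is a hub-based $(1+\epsilon)$-approximate distance oracle. One picks a hub set $H \subseteq V$ of size $|H| = n^{1-\mu}$, for a parameter $\mu$ to be fixed, such that every vertex is within some bounded number $h$ of hops of $H$. Then $d_G(s,v)$ is approximated by $\min_{u \in H}\bigl(\widetilde{d}_h(s,u) + \widetilde{d}_h(u,v)\bigr)$, where $\widetilde{d}_h$ denotes a $(1+\epsilon/2)$-approximate hop-$h$ distance. The oracle stores the length-$|H|$ vector of approximate hop-$h$ distances from $s$ to the hubs, together with the $|H| \times n$ matrix of approximate hop-$h$ distances from each hub to every vertex. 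Bucketing distances into $O(\epsilon^{-1}\log n)$ geometric classes and running $O(\epsilon^{-1}\log n)$ boolean rectangular matrix products of dimensions $|H| \times n$ by $n \times n$ rebuilds these tables from scratch in $\tilde O(n^{\omega(1,1-\mu,1)}\epsilon^{-1})$ time, where $\omega(1,\alpha,1)$ denotes the rectangular matrix multiplication exponent.

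To obtain \emph{worst-case} update time, the algorithm uses the standard epoch-based strategy: every $n^{c}$ updates rebuild the entire oracle, and between rebuilds propagate each edge update only to the entries it can affect, via a cheap fix-up phase. A standard argument splitting the rebuild work evenly across the $n^c$ updates of an epoch converts the amortized rebuild cost into a per-update worst-case cost. Balancing the rebuild cost $n^{\omega(1,1-\mu,1)}/n^{c}$ against the per-update fix-up cost $n^{2-\mu}$ (dominated by recomputing one row/column of the hub matrix), and then optimizing $\mu$ and $c$ with the current bound $\omega \leq 2.373$, is what yields the worst-case update exponent $n^{1.529}$. The $O(n^{\omega}\epsilon^{-2}\log\epsilon^{-1})$ preprocessing cost is exactly the cost of one initial rebuild.

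The main obstacle, and the reason this theorem is a nontrivial black-box citation, is the \emph{deterministic} construction of the hub set $H$. Random hub sampling would suffice against an oblivious adversary, but the theorem asserts a deterministic guarantee, so random sampling must be replaced by a deterministic covering/hitting-set procedure that maintains a small hub set stable under edge updates; combining this deterministic hub maintenance with the worst-case epoch framework and with the approximate min-plus rectangular matrix product is the technical heart of \cite{BFN22}, and it is precisely this deterministic ingredient that resists a routine textbook proof.
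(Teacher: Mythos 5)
The paper offers no proof of this statement: it is quoted as a black-box citation of \cite{BFN22}, so there is no ``paper's own proof'' for your sketch to be measured against, and stopping after your opening sentence would have been entirely defensible. Comparing your sketch against the actual construction in \cite{BFN22}, though, there is one substantive mismatch worth flagging. You obtain worst-case bounds by an epoch-based rebuild---amortizing a periodic recomputation over the updates of an epoch and balancing rebuild cost against a per-update fix-up. The algorithm of \cite{BFN22} does not work that way: it maintains $h$-hop approximate distances \emph{algebraically}, by dynamically maintaining the inverse (over a suitable ring/field) of a polynomial matrix encoding hop-bounded walk counts, in the lineage of Sankowski's and van~den~Brand--Nanongkai--Saranurak's dynamic matrix inverse data structures; the worst-case update time falls directly out of that algebraic machinery, and the exponent $1.529$ arises from balancing the hop bound $h$ against the cost of rectangular-matrix operations in that framework, not from splitting a rebuild across an epoch. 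You are right that a hub/hitting set lifts hop-bounded to full distances, that distances are bucketed into $O(\epsilon^{-1}\log n)$ geometric scales, and that replacing random hub sampling by a deterministic hitting-set construction is a genuine technical obstacle the cited paper has to overcome. But presenting the worst-case guarantee as an epoch trick would give a misleading picture of where the result's difficulty and its exponent actually come from.
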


For weighted graphs the state-of-the-art algorithm is slower and it is given by \cite{BrandN19}.
\begin{theorem}[\cite{BrandN19}]\label{thm:weighted_fd_distances}
Given a weighted and directed graph $G=(V,E)$, a single source $s \in V$, and a positive parameter $\epsilon <1$, there is a randomized fully dynamic algorithm working against an adaptive adversary that maintains $(1+\epsilon)$-distances from $s$ with worst-case update time of $O(n^{1.823} \epsilon^{-2})$ for the current matrix multiplication exponent $\omega$.
The algorithm has preprocessing time of $O(n^{\omega}\epsilon^{-2}\log \epsilon^{-1})$, where $\omega \le 2.373$.
\end{theorem}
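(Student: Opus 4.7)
The statement is a quoted result from \cite{BrandN19}, so my plan is to sketch the algebraic dynamic-matrix approach used in that paper, adapted to a weighted, adaptive-adversary setting. The overall strategy is to reduce $(1+\epsilon)$-approximate weighted SSSP to maintaining a few dynamic matrix products/inverses under edge updates, and then to invoke the worst-case $n^{1.823}$ dynamic matrix inverse algorithm.

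First, I would handle weights by a standard $(1+\epsilon)$-rounding and bucketing: multiply all weights by $n/(\epsilon w_{\min})$ and round, so edge weights become positive integers bounded by $\poly(n,\epsilon^{-1})$; then decompose the weighted graph into $O(\log n / \epsilon)$ unweighted layers whose combined min-plus closure yields $(1+\epsilon)$-distances. On top of each layer, I would build a hop-decomposition: a shortest path of hop-length up to $n$ is represented as a concatenation of $O(\log n)$ ``doubling'' segments of exponentially growing hop length, so it suffices to maintain, for each hop scale $h = 2^i$, a data structure that reports for each $v$ whether $d_G(s, v) \le h$ in the layer graph.

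Second, I would reduce maintaining these thresholded hop-distances from $s$ to a dynamic matrix-inverse-type problem. Let $A$ be the (thresholded/weighted) adjacency matrix; the quantity $(I - A/N)^{-1}$ for large enough $N$ encodes small-hop reachability, and more generally one can encode hop-bounded $(1+\epsilon)$-distances into entries of a matrix $B \cdot f(A) \cdot C$ for low-rank update matrices $B, C$ generated by $s$ and the current update. The van den Brand--Nanongkai framework maintains $f(A)$ under individual entry updates to $A$ with worst-case time $O(n^{1.823} \epsilon^{-2})$ by balancing rectangular matrix multiplication exponents against $\omega \le 2.373$, and allows query of a single column in matching time, which is exactly a single-source distance vector.

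Third, to make this work against an \emph{adaptive} adversary, I would use the fact that the algebraic data structure is inherently deterministic in how it transforms and stores the matrix; randomness enters only for zero-testing entries of a matrix over a finite field via Schwartz--Zippel, and by taking a prime of size $\poly(n)$ the failure probability per query can be driven below $1/\poly(n)$ independently of the adversary's future choices, since the adversary does not learn the prime. Preprocessing is dominated by the initial computation of $f(A)$ via one $n \times n$ matrix inverse/multiplication, giving $O(n^\omega \epsilon^{-2} \log \epsilon^{-1})$.

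The main obstacle is the rectangular matrix multiplication bookkeeping: the $n^{1.823}$ exponent is obtained by an intricate balancing between block sizes, the rebuild threshold, and known bounds on $\omega(1,1,k)$; redoing this carefully while preserving worst-case (not amortized) guarantees and adaptive-adversary robustness is where most of the technical work lies, and I would defer the precise parameter choices to the original analysis of \cite{BrandN19}.
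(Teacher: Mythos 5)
The paper offers no proof of this statement: Theorem~\ref{thm:weighted_fd_distances} is imported verbatim from \cite{BrandN19} and used as a black box in the reduction of Theorem~\ref{thm: fully dynamic black box}. So there is nothing in the paper to compare your sketch against, and re-deriving the $n^{1.823}$ bound is not part of the argument the authors make.

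As an account of the external result your outline is directionally right (algebraic encoding of hop-bounded distances, dynamic matrix inverse with a rectangular-matrix-multiplication balancing, weight rounding) but a few details are off. The weighted case in \cite{BrandN19} is not handled by splitting the graph into $O(\log n/\epsilon)$ unweighted layers whose closures are recombined; instead, rounded edge weights are folded into a single polynomial-matrix encoding $I - A(X)$ over $\mathbb{F}_p[X]/(X^{h+1})$, whose inverse simultaneously tracks $(1+\epsilon)$-approximate $h$-hop distances for all sources. The adaptive-adversary robustness also does not rest on ``the adversary does not learn the prime''; it follows because the algorithm's output is (with high probability over the random field elements) a correct and hence \emph{deterministic} function of the current graph, so the adversary learns nothing exploitable from the responses. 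Finally, note the theorem as stated is for weighted \emph{and directed} graphs, which the algebraic approach handles natively but a distance-threshold/layering reduction in the style you describe would have to be careful with. None of this affects the paper's usage, which only invokes the statement as given.
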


A combination of Theorem~\ref{thm: fully dynamic black box} with Theorems~\ref{thm:fd_distances} and \ref{thm:weighted_fd_distances} gives the following result.
\fullydynamickcenter*{}

\appendix

\section*{Appendix}
\section{Dynamic $k$-Center Algorithms Queries}\label{apx:queries}

The dynamic algorithms for $k$-center we give in this paper can simply and efficiently answer queries of the following types:
\begin{enumerate}[topsep=0pt,itemsep=-1ex,partopsep=1ex,parsep=1ex]
    \item Return a set $S$ of at most $k$ centers and the corresponding radius $r$.
    \item Given a vertex $v \in V$, return the center $c \in S$ of the cluster $v$ belongs to.
\end{enumerate}
The first type of queries simply returns the independent sets which have size $\le k$. 
The corresponding radius in the first query and the second type of queries can be answered using the dynamic shortest path data structures that we use. When maintaining distances from a super-source, the data structures let us keep the parent nodes along shortest paths which can be used for finding the closest source. The partially dynamic algorithms that we use are based on structures with $n^{o(1)}$ layers, such that we have a parent along the shortest path on each of these levels. Therefore without additional overhead we can keep track of the first parents along the path.

In our fully dynamic algorithm, we can keep track of the cluster center of each vertex by explicitly checking for each vertex whether its distance to the ``super-source'' changes with each iteration of the simulated Gonzalez's algorithm; the additional $ O(k n) $ overhead is already accounted for in our update time.

\section{Reduction from $2$-Approximate $k$-Center to $k$-Bounded Ruling Set: Omitted Proofs of Section~\ref{sec:reduction}}
\label{apx:reduction proofs}

This section is devoted to the omitted proofs of Section~\ref{sec:reduction}.

\kmisrtR*{}
\begin{proof}
    Consider an optimal solution $S^* = \{c^*_1, \dots, c^*_{k'}\}$ 
    of the $k$-center instance with $k' \leq k$, and let 
    $C_1^*, \dots, C_{k'}^*$ be the corresponding clusters, each of radius $R^*$. Let $M$ be an arbitrary 
    \tbrs in $G_r$, with $r \geq 2R^*$. We can assume w.l.o.g. that the set
    $M$ is ordered. The proof is by induction on the number of vertices of $M$. 
    The goal is to prove that for any $i \leq k'$, every vertex $v \in C_i^*$ is a neighbor
    of the $i_{th}$ vertex of $M$ in $G_r$. 
    This would imply then that $|M| \leq k' \leq k$, as otherwise
    there would be two vertices in $M$ which are neighbors in $G_r$, violating the
    fact that $M$ is a \tbrs in $G_r$ with $\beta \geq 1$
   
    As a base case, let $v_1$ be the first vertex of $M$, and
    assume w.l.o.g. that $v_1 \in C_1^*$. 
    Since all vertices in $C_1^*$ are within distance $R^*$ from $c_1^*$ in $G$, 
    by triangle inequality it holds that $d_G(v_1, u) \leq 2R^*$, for every vertex $u \in C_1^*$. 
    Hence as $r \geq 2R^*$, we have that every vertex $u \in C_1^*$ is a neighbor of $v_1$ in $G_r$. 

    Let $v_i \in M$ be the $i_\text{th}$ vertex of $M$.
    Since $M$ is a \tbrs in $G_r$ with $\beta \geq 1$, $v_i$ cannot be a neighbor of any other vertex that belongs to $M$.
    By inductive hypothesis, every vertex 
    $u \in C_1^* \cup \dots \cup C_{i-1}^*$ has a neighbor in $M$, and so $v_i$ cannot be part of
    $C_1^* \cup \dots \cup C_{i-1}^*$. As a result, we can assume w.l.o.g. that $v_i \in C_i^*$. By following the same approach as in the base case, we have that every vertex $u \in C_i^*$
    is a neighbor of $v_i$ in $G_r$, and so the claim follows.
\end{proof}

\noIStRGr*{}
\begin{proof}
    Suppose to the contrary that there is an IS $M$ in $G_r$ of size at least $k + 1$,
    and let $M'$ be an MIS on $G_r$ such that 
    $M \subseteq M'$. Then clearly it holds that $|M'| \geq k + 1$. Also since $\beta \geq 1$,
    we have that $M'$ is a \tbrs in $G_r$ of size at least $k + 1$. However based
    on Lemma~\ref{lem:k-MIS_r>=2R^*}, as $r \geq 2R^*$ and $\beta \geq 1$, the size
    of $M'$ must be at most $k$, and this yields a contradiction.
\end{proof}

\tbapproxGr*{}
\begin{proof}
Let $r'$ be the smallest $r \in \{(1 + \epsilon)^i \mid (1 + \epsilon)^i \leq nW, i \in \mathbb{N}\}$ such that a \kbtbrs algorithm running on $G_r$ returns a \tbrs $M_r$ of size at most $k$. 
Let $S = M_{r'}$ be the solution we return for the $k$-center instance.

Since $M_{r'}$ is a \tbrs in $G_{r'}$, then every vertex is within 
distance $\beta r'$ from its closest center in $G$. Thus, the returned solution $S$ has radius at most $\beta r'$.
We show now that $r'$ is at most $2(1+\epsilon)$ times larger than $R^*$.
Based on Observation~\ref{obs:noIStRGr}, for the fixed choice of $r = 2R^*$,
any \kbtbrs algorithm $\mathcal{A}$ running on the $r$-threshold graph $G_r$
always returns a \tbrs $M_r$ of size at most $k$.
By definition of $r'$, and since the possible values of $r$ are powers of $(1 + \epsilon)$,
we have that $r' \leq 2(1 + \epsilon)R^*$.
Therefore, the radius of the returned solution $S$ is at most $2\beta(1+\epsilon)R^*$.
\end{proof}

\section{Incremental $(1+\epsilon)$-SSSP}\label{app:incremental}
\newcommand{\dist}{d}

Most of the existing work on partially dynamic $(1+\epsilon)$-SSSP \cite{bernstein2009, HKN2014hopsets, chechik2018, LackiN22} is presented for the decremental setting, but while not explicitly written, the techniques extend to the incremental setting as well with the same running time. At a high-level these techniques first maintain a hopset (or similar objects like low-hop emulators) of size $\tilde{O}(m^{1+{o(1)}})$ and hopbound $h=n^{o(1)}$, and maintain an $h$-hop limited (ES) Even-Schiloach tree \cite{ES}. 

The $h$-hop limited ES tree algorithm \cite{bernstein2009} allows us to maintain $(1+\epsilon)$-approximate single-source shortest path up to $h$-hops (which finds the approximate shortest path using at most $h$ hops) in $\tilde{O}(mh)$ time. 

To use this subroutine several works utilize a hopset~\cite{HKN2014hopsets, chechik2018, LackiN22}. A $(\beta, \epsilon)$-hopset $H'$ for $G=(V,E)$ is a set of weighted edges such that for all $u,v \in V$, we have that $\dist_G(u,v) \leq \dist^{(\beta)}_{G\cup H'}(u,v)  \leq (1+ \epsilon)\dist_G(u,v)$, where $\dist^{(\beta)}_{G\cup H'}(u,v) $ refers to a shortest path that uses at most $\beta$ hops.

Much of the technical difficulty in the decremental setting is due to the fact that we have to insert hopset/emulator edges in a decremental data structure.
The existing decremental structures use an algorithm called monotone ES tree data structure \cite{HenzingerKN14} to handle this, however in the incremental setting a monotone ES tree is not needed. In an incremental setting, an update may require to decrease the weight of an edge or remove it from a hopset/emulator to keep the size small. Handling weight decreases is easy, as we can simply add a new edge with the smaller weight and keep the previous edges in place and this will only impact the number of edges by a logarithmic factor over the sequence of updates. The second issue of removing edges from a hopset/emulator, will also not impact the over all performance of the algorithm for the following reason: In these data structure we would only remove an auxiliary edge $e \in E(H)$ if the weight $w^{(t)}(e)$ (which corresponds to the length of a path in the original input graph $G$ at time $t$), is reduced by more than a constant factor so that it is within a factor of $(1- \epsilon')w^{(t-1)}(e)$. It is easy to see that in the incremental setting it will add a logarithmic factor in the size if we keep all of these edges and simply add new parallel edges and thus keep the data structures completely incremental.


\printbibliography[heading=bibintoc] 

\end{document}